
\documentclass[12pt]{article}


\usepackage[pdftex]{graphicx}
\usepackage{sidecap}
\usepackage{amsthm}
\usepackage{epsfig,color}

\usepackage{framed}
\setlength{\textheight}{9.2in} \setlength{\textwidth}{6.55in}

\voffset=-0.9in
\hoffset=-0.6in

\newcommand{\comment}[1]{}
\newcommand{\QED}{\mbox{}\hfill \rule{3pt}{8pt}\vspace{10pt}\par}


\usepackage{booktabs} 
\usepackage{algpseudocode,float}
\usepackage[ruled,vlined,linesnumbered]{algorithm2e}
\usepackage[cmex10]{amsmath}
\usepackage{amssymb,mathtools}
\usepackage{color}
\usepackage[table, dvipsnames]{xcolor}
\usepackage{graphicx}
\usepackage{subfig}
\usepackage{wrapfig}
\usepackage{verbatim} 
\usepackage{enumerate}
\usepackage{pifont}
\usepackage{hyperref}

\newtheorem{theorem}{Theorem}[section]
\newtheorem{lemma}[theorem]{Lemma}

\newtheorem{corollary}[theorem]{Corollary}
\newtheorem{definition}{Definition}
\newtheorem{observation}{Observation}

\newcommand{\dis}{{\sc Dispersion}}

\newcommand{{\rh}}{{\widehat r}}
\newcommand{{\Rh}}{{\widehat R}}

\newcommand{\cR}{{\mathcal R}}

\usepackage{mathtools}


\newcommand{\gok}[1]{{\color{red}\bf [Gokarna: #1]}}
\newcommand{\ajay}[1]{{\color{green}\bf [Ajay: #1]}}

\newboolean{short}
\setboolean{short}{false}
\newcommand{\shortOnly}[1]{\ifthenelse{\boolean{short}}{#1}{}}
\newcommand{\onlyShort}[1]{\ifthenelse{\boolean{short}}{#1}{}}
\newcommand{\longOnly}[1]{\ifthenelse{\boolean{short}}{}{#1}}
\newcommand{\onlyLong}[1]{\ifthenelse{\boolean{short}}{}{#1}}
\newcommand{\shortLong}[2]{\ifthenelse{\boolean{short}}{#2}{#1}}
\newcommand{\longShort}[2]{\ifthenelse{\boolean{short}}{#2}{#1}} 
\onlyShort{

\usepackage{enumitem}
\setitemize{noitemsep,topsep=0pt,parsep=0pt,partopsep=0pt}
\setenumerate{noitemsep,topsep=0pt,parsep=0pt,partopsep=0pt}
}
    
%

%

%

%

%
\begin{document}

%

\title{Efficient Dispersion of Mobile Robots on Arbitrary Graphs and Grids}

\author{Ajay D. Kshemkalyani \thanks{University of Illinois at Chicago, llinois 60607, USA. \hbox{E-mail}:~{\tt ajay@uic.edu}.} \and Anisur Rahaman Molla \thanks{Indian Statistical Institute, Kolkata 700108, India.  \hbox{E-mail}:~{\tt molla@isical.ac.in}. Research supported in part by DST Inspire Faculty research grant DST/INSPIRE/04/2015/002801.} \and Gokarna Sharma
\thanks{Kent State University, Ohio 44240, USA.  \hbox{E-mail}:~{\tt sharma@cs.kent.edu}.}}

%


\date{}

\maketitle \thispagestyle{empty}

\maketitle

\begin{abstract}
The mobile robot dispersion problem on graphs asks $k\leq n$ robots placed initially arbitrarily on the nodes of an $n$-node anonymous graph to reposition autonomously to reach a configuration in which each robot is on a distinct node of the graph. This problem is of significant interest due to its relationship to other fundamental robot coordination problems, such as exploration, scattering, load balancing, and relocation of self-driven electric cars (robots) to recharge stations (nodes). 
In this paper, we provide two novel deterministic algorithms for dispersion, one for  arbitrary graphs and another for grid graphs, in a synchronous setting where all robots perform their actions in every time step.  Our algorithm for arbitrary graphs has $O(\min(m,k\Delta) \cdot \log k)$ steps runtime using $O(\log n)$ bits of memory at each robot, where $m$ is the number of edges and $\Delta$ is the maximum degree of the graph. 
This is an exponential improvement over the $O(mk)$ steps best previously known algorithm. 
In particular, the runtime of our algorithm is optimal (up to a $O(\log k)$ factor) in constant-degree arbitrary graphs. 
Our algorithm for grid graphs has $O(\min(k,\sqrt{n}))$ steps runtime using $\Theta(\log k)$ bits at each robot. This is the first algorithm for dispersion in grid graphs. Moreover, this algorithm is optimal for both memory and time when $k=\Omega(n)$. 
\end{abstract}

%
%
%
\noindent {\bf Keywords:} Multi-agent systems, Mobile robots, Dispersion, Collective exploration, Scattering, Uniform deployment, Load balancing, Distributed algorithms, Time and memory complexity.


\section{Introduction}
The dispersion of autonomous mobile robots to spread them out evenly
in a region is a problem of significant interest in distributed robotics, e.g., see \cite{Hsiang:2003,HsiangABFM02}. Recently, this problem has been formulated by Augustine and Moses Jr.~\cite{Augustine2018} in the context of graphs. 
They defined  the problem as follows:  Given any arbitrary initial configuration of $k\leq n$ robots positioned on the nodes of an $n$-node graph, the robots reposition autonomously to reach a configuration where each robot is positioned on a distinct node of the graph (which we call the {\dis} problem).  
This problem has many practical applications, for example, in relocating self-driven electric cars (robots) to recharge stations (nodes), assuming that the cars have smart devices to communicate with each other to find a free/empty charging station \cite{Augustine2018,Kshemkalyani}.
This problem is also important due to its relationship to many other well-studied autonomous robot coordination problems,
such as  exploration, scattering, load balancing, covering, and self-deployment~\cite{Augustine2018,Kshemkalyani}. One of the key aspects of mobile-robot research is to understand how to use the resource-limited robots to accomplish some large task in a distributed manner \cite{Flocchini2012,flocchini2019}. In this paper, we study the trade-off between memory requirement of robots and the time to solve {\dis} on graphs.        

Augustine and Moses Jr.~\cite{Augustine2018} studied {\dis} 
assuming $k=n$. 
They proved a memory lower bound of $\Omega(\log n)$ bits at each robot and a time lower bound of $\Omega(D)$ ($\Omega(n)$ in arbitrary graphs) for any deterministic algorithm in any graph, where $D$ is the diameter of the graph.  
They then 
provided deterministic algorithms using $O(\log n)$ bits at each robot to solve {\dis} on lines, rings, and trees in $O(n)$ time. For arbitrary graphs, they provided two algorithms, one using $O(\log n)$ bits at each robot with $O(mn)$ time and another using $O(n\log n)$ bits at each robot with $O(m)$ time, where $m$ is the number of edges in the graph.
Recently, Kshemkalyani and Ali \cite{Kshemkalyani} provided an $\Omega(k)$ time lower bound for arbitrary graphs for $k\leq n$. 
They then provided three deterministic algorithms for {\dis} in arbitrary graphs: (i) The first algorithm using $O(k\log \Delta)$ bits at each robot with $O(m)$ time, (ii) The second algorithm using $O(D\log \Delta)$ bits at each robot with $O(\Delta^D)$ time, and (iii) The third algorithm using $O(\log(\max(k,\Delta)))$ bits at each robot with $O(mk)$ time, where $\Delta$ is the maximum degree of the graph. Randomized algorithms are presented in  \cite{tamc19} to solve {\dis} where the random bits are mainly used to reduce the memory requirement at each robot.   

In this paper, we  
provide two new deterministic algorithms for solving {\dis}, one for arbitrary graphs and another for grid graphs. Our algorithm for arbitrary graphs improves exponentially on the runtime of the best previously known algorithm; 
see Table \ref{table:comparision}. 
Our algorithm for grid graphs is the first algorithm for {\dis} in grid graphs 
and it achieves bounds that are both memory and time optimal for $k=\Omega(n)$; see Table \ref{table:comparision-grid}. 

\vspace{1mm}
\noindent{\bf Overview of the Model and Results.} We consider the same model as in Augustine and Moses Jr.~\cite{Augustine2018} and Kshemkalyani and Ali \cite{Kshemkalyani} where a system of $k\leq n$ robots are operating on an $n$-node anonymous graph $G$.  
The robots are {\em distinguishable}, i.e., they have unique IDs in the range $[1,k]$. 
The robots have no visibility; but they can communicate with each other only when they are at the same node of $G$. 
The graph $G$ is assumed to be connected and undirected. 
The nodes of $G$ 
are indistinguishable ($G$ is anonymous) but the ports (leading to incident
edges) at each node have unique labels from $[1,\delta]$, where $\delta$ is the degree of that node.
It is assumed that 
the robots know $m, n, \Delta, k$\footnote{In fact, it is enough to know only $m, \Delta$ and $k$ to accomplish the results. Without robots knowing $m$, Theorem \ref{theorem:0} achieves {\dis} in $O(k\Delta \cdot \log k)$ time with $O(\log(\max(k,\Delta)))$ bits memory at each robot, which is better in terms of memory of $O(\log n)$ bits in Theorem \ref{theorem:0} but not the time $O(\min(m,k\Delta)\cdot \log k)$  when $m<k\Delta$.}. Similar assumptions are made in the previous work in {\dis} \cite{Augustine2018}. 
The nodes of $G$ do not have memory and the robots have memory.
{\em Synchronous} setting is considered as in \cite{Augustine2018} where all robots are activated in a round and they perform their operations simultaneously in synchronized rounds. 
Runtime is measured in rounds (or steps).
We establish the following theorem 
in an arbitrary graph.

\begin{theorem}
\label{theorem:0}
Given any initial configuration of $k\leq n$ mobile robots in an arbitrary, anonymous 
$n$-node graph $G$ having 
$m$ edges and maximum degree $\Delta$, {\dis} can be solved in $O(\min(m,k\Delta)\cdot \log k)$ time with $O(\log n)$ bits at each robot. 
\end{theorem}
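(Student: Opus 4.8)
The plan is to design a distributed algorithm that mimics a depth-first traversal (DFS) of the graph, but with a crucial modification: instead of a single token walking the DFS tree, the robots themselves collectively carry out the traversal, dropping off one robot at each newly discovered empty node. I would structure the robots (or rather the \emph{groups} of co-located robots) so that at any time there is a single active ``wave'' of robots performing DFS. When the wave reaches a node, the lowest-ID robot in the wave settles there permanently (if the node is not yet occupied by a previously settled robot); the rest of the wave proceeds along the next unexplored port. When the wave hits a node that is already occupied or all of whose ports have been explored, it backtracks along the tree edge it came from. Because the settled robots store $O(\log n)$ bits, each can remember which port it first arrived from (the ``parent pointer'' of the DFS tree) and which port the wave most recently left through; this is exactly the bookkeeping a sequential DFS needs, now distributed across the nodes, so $O(\log n)$ bits per robot suffices.

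The core of the time analysis is that a naive DFS of this kind takes $\Theta(m)$ steps — every edge is traversed $O(1)$ times. To get the claimed $O(\min(m,k\Delta)\cdot\log k)$ bound I would make two refinements. First, to replace $m$ by $\min(m,k\Delta)$: once all $k$ robots have been settled, dispersion is achieved and the algorithm halts; before that point, the DFS tree has at most $k$ settled nodes, and the wave only ever explores edges incident to nodes it has visited, of which there are at most $k$, each of degree at most $\Delta$ — so at most $k\Delta$ edges are ever touched, giving the $\min(m,k\Delta)$ factor. Second — and this is where the $\log k$ factor and the exponential improvement over $O(mk)$ come from — I would run $O(\log k)$ DFS ``phases,'' where in each phase the wave size (and hence the number of robots still unsettled that must be shepherded around) is controlled, so that the expensive operation of ``a group of $j$ robots moving together along an edge'' costs $O(1)$ rounds rather than $O(j)$ rounds. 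The key idea is that the number of robots moving as a group halves (roughly) each phase — in phase $i$, groups of size roughly $k/2^i$ are handled — so after $O(\log k)$ phases every robot is settled, and the total work over all phases telescopes to $O(\min(m,k\Delta)\cdot \log k)$ rather than $O(m)\cdot O(k)$.

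I would then spell out the per-round actions precisely as a small automaton at each robot, distinguishing the states \emph{settled}, \emph{exploring-forward}, \emph{backtracking}, and handle the subtle cases: two independent waves (from different initial clusters) colliding at a node, which must be merged in a canonical way (e.g., the wave with the smaller minimum ID absorbs the other and the absorbed robots join as ordinary unsettled members); and a wave arriving at a node settled by an \emph{earlier} phase's robot, which must be treated as a ``dead end'' and trigger backtracking. Correctness then has two parts: (i) a termination/progress argument showing that as long as some robot is unsettled, the (possibly merged) wave always has an unexplored port to follow or a parent to backtrack to, so it cannot get stuck, and a new node gets settled within $O(m)$ (respectively $O(k\Delta)$) rounds; and (ii) an invariant that settled robots are on distinct nodes and are never disturbed, so the final configuration is a valid dispersion.

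\textbf{The main obstacle} I anticipate is the movement-cost bookkeeping across phases: making rigorous the claim that a group of robots can be moved along a DFS edge and have its size controlled so that the amortized cost per edge-traversal, summed over all phases, is $O(\log k)$ per edge rather than $O(k)$ per edge. In particular, one must argue carefully that robots ``left behind'' in earlier phases genuinely never need to move again (so their cost is paid once), that the wave in phase $i$ really does carry only $O(k/2^i)$ robots, and that collisions/merges between waves do not break this geometric decay. Getting the potential-function or phase-counting argument exactly right — and handling the interaction with the anonymity of nodes and the $[1,\delta]$ local port labels — is the delicate part; the DFS skeleton itself is standard.
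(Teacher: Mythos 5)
Your high-level skeleton (parallel DFS with one robot settling per newly discovered node, a per-phase cost of $O(\min(m,k\Delta))$, and $O(\log k)$ phases) matches the paper, but the mechanism you give for the $\log k$ factor is not the right one, and as stated it does not work. First, your premise that moving a group of $j$ robots along an edge naively costs $O(j)$ rounds contradicts the model: any number of robots may traverse an edge simultaneously in one round, so group size is never the bottleneck. The actual reason earlier algorithms pay $O(mk)$ is that when concurrently initiated DFS instances collide, the stopped robots must re-traverse already-explored territory, and these re-traversals can happen sequentially, one group at a time. Second, your claimed invariant — that the \emph{size} of the groups handled in phase $i$ is roughly $k/2^i$ — is unjustified and in fact goes the wrong way: when waves merge, groups \emph{grow}. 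The quantity that provably halves in the paper is the \emph{number of nodes carrying unsettled robots}, not the size of any group.

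Establishing that halving is where the real work lies, and it forces a structure your proposal omits. After the DFS stage of a pass, the stopped groups originating from different initial clusters are scattered across up to $s-1$ distinct nodes inside a connected component of settled nodes ($s$ being the number of sub-DFS-trees that merged), so the count of groups need not have decreased at all. The paper therefore inserts a second stage in every pass: a separate, concurrently initiated DFS traversal of each connected component of settled nodes, arbitrated by a lexicographic priority $(mult,\;treelabel)$, whose sole purpose is to gather all stopped robots of that component at one node. Only after this gathering can one argue that every surviving group absorbed at least one other group, giving $u^{e2}_p \leq \frac{1}{2}u^{s1}_p$ and hence $\log k$ passes. The correctness of this gathering under unknown, concurrent initiations (Lemmas on the highest-priority robot completing its traversal and all others being swept up) is the technically delicate part, and your proposal has no counterpart to it. Without a gathering step, or some other argument that the number of groups decreases geometrically, the phase count is not bounded by $O(\log k)$ and the bound degrades back toward $O(mk)$.
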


Theorem \ref{theorem:0}  improves exponentially over the $O(mk)$ time best previously known algorithm \cite{Kshemkalyani} 
(see Table \ref{table:comparision}). 
Notice that, when $\Delta \leq k$, the runtime depends only on $k$, i.e.,   $O(k^2 \log k)$. 
For constant-degree arbitrary graphs (i.e., when $\Delta=O(1)$), the dispersion time becomes near-optimal -- only a $O(\log k)$ factor away from the time lower bound $\Omega(k)$.  
%

We establish the following theorem 
in a grid graph.

\begin{theorem}
\label{theorem:1}
Given any initial configuration of $k\leq n$ mobile robots in an anonymous 
$\sqrt{n}\times \sqrt{n}$-node grid graph $G$, 
{\dis} can be solved in $O(\min(k,\sqrt{n}))$ time with $\Theta(\log k)$ bits at each robot. 
\end{theorem}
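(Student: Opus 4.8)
The plan is to exploit the grid's geometric structure, which is unavailable in arbitrary graphs, to give each robot a canonical direction of travel so that robots sort themselves into distinct cells in time proportional to the grid's "width." Think of the $\sqrt{n}\times\sqrt{n}$ grid with a distinguished orientation that every robot can infer locally from the port labels (in a grid each node's ports correspond to the four compass directions up to a global relabeling; since we only need a \emph{consistent} orientation, a constant amount of handshaking among co-located robots at the start suffices to agree on "row" and "column" axes). The high-level algorithm is a two-phase sweep: first disperse all robots within each column so that each column holds at most one robot per cell and the surplus robots are pushed toward one end of the column; then sweep surplus robots along a single designated row to redistribute them across columns. Because $k\le n$, a counting argument guarantees there is always room, and because each phase is a monotone march along a line of length $\sqrt{n}$ with at most $k$ robots in transit, each phase finishes in $O(\min(k,\sqrt n))$ rounds.

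Concretely, I would proceed as follows. \textbf{Step 1 (orientation and leader election at a node).} At round $0$, robots sharing a node use their IDs in $[1,k]$ to pick a local leader and to agree on which incident port is "north"; a fixed convention (e.g., smallest port label is north, next clockwise is east) propagates implicitly since every robot uses the same rule, so no communication across nodes is needed. \textbf{Step 2 (column compaction).} Each robot repeatedly moves one step "south" if and only if the cell it currently shares with others is over-occupied and it is not the lowest-ID robot there; a robot that finds itself alone in a cell stops. One shows by induction on rounds that after $t$ rounds every robot is either settled or within distance $\sqrt n - t$ of the south boundary of its column, and that two robots never "swap"; hence in $O(\sqrt n)$ rounds — and in fact in $O(k)$ rounds, since at most $k$ robots ever move and each over-occupied cell sheds at least one robot per round — every column is internally dispersed, with any overflow accumulated in the bottom cells. \textbf{Step 3 (row redistribution).} The overflow robots (those that reached the south boundary while their cell was still crowded) now walk east along the bottom row, dropping into the first empty cell of each column they pass; since the total number of robots is at most $n$, every robot eventually finds an empty cell, and the walk length is $O(\sqrt n)$, bounded also by $O(k)$ by the same token. \textbf{Step 4 (memory accounting).} Each robot needs to store its ID ($\Theta(\log k)$ bits), its current mode (compacting / walking / settled), and a bounded counter for the handshake; total $\Theta(\log k)$ bits, matching the claimed bound, which is tight by the $\Omega(\log k)$ lower bound when $k=\Omega(n)$ (a robot must at least remember its own identity to break symmetry). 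For the lower bound side of "optimal when $k=\Omega(n)$," the $\Omega(\min(k,\sqrt n))=\Omega(\sqrt n)$ time bound follows from the diameter lower bound $\Omega(D)=\Omega(\sqrt n)$ of \cite{Augustine2018}.

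\textbf{Main obstacle.} The delicate point is \emph{correctness of the compaction march under synchrony and anonymity}: because nodes have no memory and a robot cannot see an adjacent cell's occupancy, it learns a cell is "empty" only by entering it, so two robots from neighboring crowded cells could in principle both target the same cell in the same round and collide, or robots could oscillate. I would resolve this with a careful tie-breaking rule based on IDs together with a parity/round-color scheme (e.g., only robots in even-indexed rows push south on even rounds), proving an invariant of the form "after each round, the multiset of occupied cells in each column is lexicographically closer to the target staircase configuration, and no cell that became singly-occupied ever becomes crowded again." Establishing this monotone potential and showing it decreases by a definite amount each round — so that the process halts in $O(\min(k,\sqrt n))$ rounds rather than merely terminating — is the technical heart of the argument; the row phase is then a straightforward variant of the same analysis on a single line.
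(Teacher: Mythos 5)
Your plan is not the paper's (the paper routes every robot to the boundary, then to a single corner, then distributes $\sqrt{n}$ robots per node along one boundary line, then one per node down each column), but the more important issue is that your version has genuine gaps, the first of which is fatal as written. Your Step 1 assumes robots can extract a globally consistent compass from port labels ("smallest port label is north, next clockwise is east ... propagates implicitly since every robot uses the same rule"). In this model the port numbering at each node is arbitrary and uncorrelated across nodes, so the same local rule yields different geometric directions at different nodes; handshaking among co-located robots cannot fix this because it never crosses an edge. The paper explicitly names the absence of a consistent compass as the first challenge, and its Stage 1 only achieves the weaker primitive "keep going in a straight line" (exit by the port opposite the entry port), which is enough to hit a boundary but does not give the shared row/column axes your column-compaction and east-walk both require. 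Without a consistent "south," robots in the same column push in opposite directions and your monotone potential has no well-defined target.

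Two further gaps remain even if orientation were granted. In Step 3 the overflow robots walking along the bottom row must decide how many robots to peel off into each column, but they cannot know a column's number of empty cells without traversing it; dropping too many forces backtracking and breaks the $O(\sqrt{n})$ bound, dropping too few leaves robots unsettled. The paper avoids this entirely by first gathering \emph{all} robots at one corner so the distributing group knows the exact count and can leave exactly $\sqrt{n}$ (resp.\ one) robots per node deterministically. Finally, you correctly identify the collision/oscillation analysis of the compaction march as "the technical heart" but do not carry it out; the invariant you propose ("lexicographically closer to the target staircase") is a proof obligation, not a proof. As it stands the proposal is an outline whose two load-bearing steps (global orientation and per-column occupancy accounting) are unsupported, so it does not establish the theorem.
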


The time bound in Theorem \ref{theorem:1} is optimal  when $k=\Omega(n)$ since a time lower bound of $\Omega(D)$ holds for {\dis} in any graph and the diameter of a square grid graph is $D=\Omega(\sqrt{n})$. This is the first result for {\dis} in grid graphs. 
Furthermore, this is the first memory and time optimal algorithm for {\dis} in graphs beyond trivial graphs such as lines and cycles. 
Theorem \ref{theorem:1} extends for {\dis} in a rectangular grid graph $G$.

\begin{table}[!t]
{\footnotesize
\centering
\begin{tabular}{lll}
\toprule
{\bf Algorithm} & {\bf Memory per robot (in bits)}   & {\bf Time (in rounds)}   \\
\toprule
Lower bound & $\Omega(\log(\max(k,\Delta)))$     & $\Omega(k)$   \\
\hline
First algo. of \cite{Augustine2018}\footnotemark & $O(\log n)$     & $O(mn)$  \\
\hline
Second algo. of \cite{Augustine2018} & $O(n\log n)$   & $O(m)$   \\
\hline
First algo. of \cite{Kshemkalyani} & $O(k\log \Delta)$     & $O(m)$  \\
\hline
Second algo. of \cite{Kshemkalyani} & $O(D \log \Delta)$     & $O(\Delta^D)$  \\
\hline
Third algo. of \cite{Kshemkalyani} & $O(\log(\max(k,\Delta)))$     & $O(mk)$  \\
\hline
{\bf Theorem \ref{theorem:0}} 
& $O(\log n)$      & $O(\min(m,k\Delta)\cdot \log k)$  \\
\bottomrule
\end{tabular}
\caption{The results on {\dis} for $k\leq n$ robots on $n$-node arbitrary graphs with $m$ edges, $D$ diameter, and $\Delta$ maximum degree. $^2$The results in \cite{Augustine2018} are only for $k=n$. 
}  
\label{table:comparision}
}
\end{table}

\begin{table}[!t]
{\footnotesize
\centering
\begin{tabular}{lll}
\toprule
{\bf Algorithm} & {\bf Memory per robot} & {\bf Time (in rounds)}\\
 & {\bf (in bits)}   &  \\ 
\toprule
Lower bound & $\Omega(\log k)$    & $\Omega(\sqrt{k})$   \\
\hline
Applying first algo. of \cite{Kshemkalyani} & $O(k)$     & $O(n)$  \\
\hline
Applying second algo. of \cite{Kshemkalyani} & $O(D)=O(\sqrt{n})$     & $O(4^D)=O(4^{\sqrt{n}})$  \\
\hline
Applying third algo. of \cite{Kshemkalyani} & $O(\log k)$     & $O(nk)$  \\
\hline
Applying Theorem \ref{theorem:0} & $O(\log n)$     & $O(k \log k)$  \\
\hline
{\bf Theorem \ref{theorem:1}} 
& $O(\log k)$      & $O(\min(k,\sqrt{n}))$  \\
\bottomrule
\end{tabular}
\caption{The results on {\dis} for $k\leq n$ robots on $n$-node grid graphs (for grids, $\Delta=4$).  
\label{table:comparision-grid}
}}
\vspace{-8mm}
\end{table}

\vspace{1mm}
\noindent{\bf Challenges and Techniques.}
The well-known {\em Depth First Search} (DFS) traversal approach \cite{Cormen:2009} was used in the previous papers to solve  {\dis} \cite{Augustine2018,Kshemkalyani}. 
If all $k$ robots are positioned initially on a single node of $G$, then the DFS traversal finishes in $\min(4m-2n+2,\, k\Delta)$ rounds solving {\dis}.  If $k$ robots are initially on $k$ different nodes of $G$, then {\dis} is solved by doing nothing. 
However, if not all of them are on a single node initially, then the robots on nodes with multiple robots need to reposition (except one) to reach to free nodes and settle. The natural approach is to run DFS traversals in parallel to minimize time.

The challenge arises when two or more DFS traversals meet before all robots settle. When this happens, the robots that have not settled yet need to find free nodes. For this, they may need to re-traverse the already traversed part of the graph by the DFS traversal. Care is needed here otherwise they may re-traverse sequentially and the total time for the DFS traversal increases by a factor of $k$ to $\min(4m-2n+2,k\Delta) \cdot k$ rounds, in the worst-case. This is in fact the case in the previous algorithms of \cite{Augustine2018,Kshemkalyani}.
We design a smarter way to synchronize the parallel DFS traversals so that the total time increases only by a factor of $\log k$ to  $\min(4m-2n+2,k\Delta)\cdot \log k$ rounds, in the worst-case. This approach is a non-trivial extension and requires overcoming many challenges on synchronizing the parallel DFS traversals efficiently.

For grid graphs, applying the DFS traversal approach developed above gives $O(k \log k)$ time (note $\Delta=4$ in a grid). However, in grid, time lower bound is $\Omega(D)=\Omega(\sqrt{n})$ for $k=\Omega(n)$. Therefore, we develop a key technique specific to grid graphs that achieves $O(\min(k,\sqrt{n}))$  time for any $k\leq n$. The grid approach crucially uses the idea of repositioning first all robots to the boundary nodes of $G$, then collect them to a boundary corner node of $G$, and finally distribute them  to the nodes of $G$ leaving one robot on each node.

\vspace{1mm}
\noindent{\bf Related Work.}
\label{section:related}
%
%
One problem closely related to {\dis} is the graph exploration by mobile robots. The exploration problem has been heavily studied in the literature for specific as well as arbitrary graphs, 
e.g., \cite{Bampas:2009,Cohen:2008,Dereniowski:2015,Fraigniaud:2005,MencPU17}. It was shown that a robot can explore an anonymous graph using $\Theta(D\log \Delta)$-bits memory; the runtime of the algorithm is $O(\Delta^{D+1})$ \cite{Fraigniaud:2005}. In the model where graph nodes also have memory, 
Cohen {\it et al.} \cite{Cohen:2008} gave two algorithms: The first algorithm uses $O(1)$-bits at the robot and 2 bits at each node, and the second algorithm uses $O(\log \Delta)$ bits at the robot and 1 bit at each node. The runtime of both algorithms is $O(m)$ with preprocessing time of $O(mD)$. The trade-off between exploration time and number of robots is studied in \cite{MencPU17}. 
The collective exploration by a team of robots is studied in \cite{FraigniaudGKP06} for trees. 
Another problem related to {\dis} is the scattering of $k$ robots in graphs. This problem has been studied for rings \cite{ElorB11,Shibata:2016} and grids \cite{Barriere2009}. 
Recently, Poudel and Sharma \cite{Poudel18} provided a $\Theta(\sqrt{n})$-time algorithm for uniform scattering in a grid \cite{Das16}. 
Furthermore, {\dis} is  related to the load balancing problem, where a given
load at the nodes  has to be (re-)distributed among several processors (nodes). This problem has been studied quite heavily in graphs, 
e.g., \cite{Cybenko:1989,Subramanian:1994}.
We refer readers to 
\cite{Flocchini2012,flocchini2019} for other recent developments in these topics.

\vspace{1mm}
\noindent{\bf Paper Organization.}  We discuss 
details of the model and some preliminaries in Section \ref{section:model}. 
We discuss the DFS traversal of a graph in Section \ref{section:basic}. 
We present an algorithm for arbitrary graphs in Section \ref{section:algorithm0}. 
We present an algorithm for grid graphs in Section \ref{section:grid}.
Finally, we conclude in Section \ref{section:conclusion} with a short discussion. 
\onlyShort{Some details and several proofs are omitted from the paper due to space constraints, and are available in the full version \cite{arxiv-full} in arXiv.} 


\section{Model Details and Preliminaries}
\label{section:model}
\noindent{\bf Graph.} We consider the same graph model as in \cite{Augustine2018,Kshemkalyani}. Let $G=(V,E)$ be an $n$-node $m$-edge graph, i.e., $|V|=n$ and $|E|=m$. $G$ is assumed to be connected, unweighted, and undirected. 
$G$ is {\em anonymous}, i.e., nodes do not have identifiers but, at any node, its incident edges are uniquely identified by a {\em label} (aka port number) in the range $[1,\delta]$, where $\delta$ is the {\em degree} of that node. 
The {\em maximum degree} of $G$ is $\Delta$, which is the maximum among the degree $\delta$ of the nodes in $G$.
We assume that there is no correlation between two port numbers of an edge. 
Any number of robots are allowed to move along an edge at any time. 
The graph nodes do not have memory, i.e., they are not able to store any information. 


\vspace{1mm}
\noindent{\bf Robots.} We also consider the same robot model as in \cite{Augustine2018,Kshemkalyani}. Let $\cR=\{r_{1}, r_{2},\ldots,r_{k}\}$ be a set of $k\leq n$ robots residing on the nodes of $G$. For simplicity, we sometime use $i$ to denote robot $r_i$. No robot can reside on the edges of $G$, but one or more robots can occupy the same node of $G$.  
%
%
Each robot has a unique $\lceil \log k\rceil$-bit ID taken from $[1,k]$. 
Robot has no visibility and hence a robot can only communicate with other robots present on the same node. 
Following \cite{Augustine2018,Kshemkalyani}, it is assumed that 
when a robot moves from node $u$ to node $v$ in $G$, it is aware of the port of $u$ it used to leave $u$ and the port of $v$ it used to enter $v$. 
%
%
Furthermore, it is assumed that each robot is equipped with memory to store  information, which may also be read and modified by other robots on the same node. Each robot is assumed to know parameters $m,n,\Delta,k$. Such assumptions are also made in the previous work on {\dis} \cite{Augustine2018} .

\vspace{1mm}
\noindent{\bf Time Cycle.}
At any time a robot $r_i\in \cR$ could be active or inactive. When a robot $r_i$ becomes active, it performs
the ``Communicate-Compute-Move'' (CCM) cycle as follows. 
\begin{itemize}
\item 
{\em Communicate:} For each robot $r_j\in \cR$ that is at node $v_i$ where $r_i$ is,  $r_i$ can observe the memory of $r_j$.  
Robot~$r_i$ can also observe its own memory. 
\item {\em Compute:} $r_i$ may perform an arbitrary computation
using the information observed during the ``communicate'' portion of
that cycle. This includes determination of a (possibly) port to 
use to exit $v_i$ and the information to store in 
the robot $r_j$ that is at $v_i$.
\item
{\em Move:} At the end of the cycle, $r_i$ writes new information (if any) in the memory of $r_j$ at $v_i$,  and exits $v_i$ using the computed port to reach to a neighbor of $v_i$. 
\end{itemize}

\vspace{1mm}
\noindent{\bf Time and Memory Complexity.}
We consider the synchronous setting where 
every robot is active in every CCM cycle and they perform the cycle in a synchrony.
Therefore, time is measured in {\em rounds} or {\em steps} (a cycle is a round or step).
%
Another important parameter is memory. Memory comes from a single source -- the number of bits stored at each robot. 
%
%

\vspace{1mm}
\noindent{\bf Mobile Robot Dispersion.} The {\dis} problem can be formally defined as follows.
\begin{definition} [{\dis}]
Given any $n$-node anonymous graph $G=(V,E)$ having $k\leq n$ mobile robots positioned initially arbitrarily on the nodes of $G$, the robots reposition autonomously to reach a configuration where each robot is on a distinct node of $G$.
\end{definition}

The goal is to solve {\dis} optimizing two performance metrics: 
(i) {\bf Time} -- the number of rounds (steps), and
(ii) {\bf Memory} -- the number of bits stored at each robot.

\begin{table}[!t]

{\footnotesize
\centering
\begin{tabular}{l|l}
\toprule
{\bf Symbol} & {\bf Description}  \\
\toprule
$round$ & The counter that indicates the current round. \\
& Initially, $round\leftarrow 0$\\
\hline
$pass$ & The counter that indicates the current pass. Initially, $pass\leftarrow 0$\\
\hline
$parent$ & The port from which robot entered a node in forward phase.\\
& Initially, $parent\leftarrow 0$\\ 
\hline
$child$& The smallest port (except $parent$ port) that was not taken yet.\\ & Initially, $child\leftarrow 0$\\
\hline
$port\_entered$ & The port from which robot entered a node. \\
& Initially, $port\_entered\leftarrow -1$\\ 
\hline
$port\_exited$& The port from which robot exited a node. \\
& Initially, $port\_exited\leftarrow -1$\\ 
\hline
$treelabel$& The label of a DFS tree. 
Initially, $treelabel\leftarrow \top$\\
\hline
$settled$& A boolean flag that stores either  0 (false) or 1 (true). \\
& Initially, $settled \leftarrow 0$\\ 

\hline
$mult$& The number of robots at a node at the start of Stage 2. \\
& Initially, $mult\leftarrow 1$\\
\hline
$home$& The lowest ID unsettled robot at a node at the start of Stage 2 
\\ 
& sets this to  the ID of the settled robot at that node. \\
&  Initially, $home\leftarrow \top$\\
\bottomrule
\end{tabular}
\caption{Description of the variables used in Sections \ref{section:basic}, \ref{section:algorithm0}, and \ref{section:grid}. These variables are maintained by each robot and may be read/updated by other robots (at the same node).} 
\label{table:notations}
}

\onlyShort{\vspace{-6mm}}
\end{table}

\section{DFS traversal of a Graph} 
\label{section:basic}
Consider an $n$-node arbitrary graph $G$ as defined in Section \ref{section:model}. Let $C_{init}$ be the initial configuration of $k\leq n$ robots positioned on a single node, say $v$, of $G$. 
Let the robots on $v$ be represented as $N(v)=\{r_1,\ldots, r_k\}$, where $r_i$ is the robot with ID $i$.  
We describe here a DFS traversal algorithm, $DFS(k)$, that disperses all the robots on the set $N(v)$ to the $k$ nodes of $G$ guaranteeing exactly one robot on each node. 
$DFS(k)$ 
will be heavily used in Section \ref{section:algorithm0}.

Each robot $r_i$ stores in its memory four variables $r_i.parent$ (initially assigned $0$), $r_i.child$ (initially assigned $0$), $r_i.treelabel$ (initally assigned $\top$), and $r_i.settled$ (initially assigned 0).  
%
%
%
 %
$DFS(k)$ executes in two phases, $forward$ and $backtrack$ \cite{Cormen:2009}. 
Variable $r_i.treelabel$ stores the ID of the smallest ID robot. Variable $r_i.parent$ stores the port from which $r_i$ entered the node where it is currently positioned in the forward phase.  
Variable $r_i.child$ stores the smallest port of the node  it is currently positioned at that has not been taken yet (while entering/exiting the node). 
Let $P(x)$ be the set of ports at any node $x\in G$.

We are now ready to describe $DFS(k)$. 
In round 1,
the maximum ID robot $r_k$ writes  $r_k.treelabel \leftarrow 1$ (the ID of the smallest robot in $N(v)$, which is 1),  $r_k.child\leftarrow 1$ (the smallest port at $v$ among $P(v)$), 
and $r_k.settled\leftarrow 1$. 
The robots $N(v)\backslash\{r_k\}$ exit $v$ following port $r_k.child$; $r_k$ stays (settles) at $v$. 
%
In the beginning of round 2, the robots $N(w)=N(v)\backslash\{r_k\}$ reach a neighbor node $w$ of $v$.
Suppose the robots entered $w$ using port $p_w\in P(w)$.
As $w$ is free, robot $r_{k-1}\in N(w)$ writes $r_{k-1}.parent\leftarrow p_w$, $r_{k-1}.treelabel \leftarrow 1$ (the ID of the smallest robot in $N(w)$), and $r_{k-1}.settled\leftarrow 1$.
If $r_{k-1}.child\leq\delta_w$, $r_{k-1}$ writes
$r_{k-1}.child\leftarrow r_{k-1}.child+1$ if port $r_{k-1}.child+1\neq p_w$ and $r_{k-1}.child+1\leq \delta_w$, otherwise $r_{k-1}.child\leftarrow r_{k-1}.child+2$. 
The robots $N(w)\backslash\{r_{k-1}\}$ decide to continue DFS in forward or backtrack phase as described below.

\begin{itemize}
\item ({\bf forward phase}) if ($p_w=r_{k-1}.parent$ or $p_w=$ old value of $r_{k-1}.child$) and (there is (at least) a port at $w$ that has not been taken yet).
The robots $N(w)\backslash\{r_{k-1}\}$ exit $w$ through port $r_{k-1}.child$.  

\item ({\bf backtrack phase}) if ($p_w=r_{k-1}.parent$ or $p_w=$ old value of $r_{k-1}.child$) and (all the ports of $w$ have been taken already).
The robots $N(w)\backslash\{r_{k-1}\}$ exit $w$ through port $r_{k-1}.parent$.
\end{itemize}

Assume that in round 2, the robots decide to proceed in forward phase. 
In the beginning of round 3, 
$N(u)=N(w)\backslash\{r_{k-1}\}$ robots reach some other node $u$ (neighbor of $w$) of $G$. The robot $r_{k-2}$ stays at $u$ writing necessary information in its variables. In the forward phase in round 3, the robots $N(u)\backslash\{r_{k-2}\}$ exit $u$ through port $r_{k-2}.child$.
However, in the backtrack phase in round 3, $r_{k-2}$ stays at $u$ and robots $N(u)\backslash\{r_{k-2}\}$ exit $u$ through port $r_{k-2}.parent$. This takes robots $N(u)\backslash\{r_{k-2}\}$ back to node $w$ along $r_{k-1}.child$. Since $r_{k-1}$ is already at $w$, $r_{k-1}$ updates $r_{k-1}.child$ with the next port to take.
Depending on whether $r_i.child\leq \delta_w$ or not,
the robots $\{r_1,\ldots,r_{k-3}\}$ exit $w$ using either $r_{k-1}.child$ (forward phase) or $r_{k-1}.parent$ (backtrack phase). 

There is another condition, denoting the onset of a cycle, under which choosing backtrack phase is in order. When the robots enter $x$ through $p_x$ and robot $r$ is settled at $x$, 
\begin{itemize}
\item ({\bf backtrack phase}) if ($p_x\neq r.parent$ and $p_x\neq$ old value of $r.child$). The robots exit $x$ through port $p_x$ and no variables of $r$ are altered.
\end{itemize}
This process then continues for $DFS(k)$ until at some node $y\in G$, $N(y)=\{r_1\}$. The robot $r_1$ then stays at $y$ and $DFS(k)$ finishes. 

\begin{lemma}
Algorithm $DFS(k)$ correctly solves {\dis} for  $k\leq n$ robots initially positioned on a single node of a $n$-node arbitrary graph $G$  in $ \min(4m-2n+2,  k\Delta)$ rounds using $O(\log(\max(k,\Delta)))$ bits at each robot. 
\end{lemma}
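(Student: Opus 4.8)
The plan is to establish three things: (i) correctness—every robot ends on a distinct node; (ii) the round bound $\min(4m-2n+2,\,k\Delta)$; and (iii) the memory bound $O(\log(\max(k,\Delta)))$ bits. For correctness, I would argue by induction on the rounds that the set of robots moving together always forms a contiguous suffix $\{r_1,\ldots,r_{k-t}\}$ after $t$ robots have settled, and that the settled robots together with the $parent/child$ pointers they store encode exactly a DFS tree of the explored subgraph. The key invariant is that whenever the travelling group enters a node $x$ through port $p_x$, the robot currently settled at $x$ (if any) correctly classifies the move as a backtrack (either because $x$ is being revisited via a non-tree edge, captured by the condition $p_x\neq r.parent$ and $p_x\neq$ old $r.child$, or because all of $x$'s ports are exhausted), and otherwise the group advances to a genuinely unvisited node where the highest-ID unsettled robot settles. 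Since the group shrinks by exactly one each time a fresh node is reached, and a standard DFS visits every node of a connected graph, the group eventually has size one at some node $y$, at which point all $k$ nodes visited are distinct and occupied—this is {\dis}.

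For the round bound, I would invoke the standard accounting for DFS on an undirected graph with port labels but no marking on nodes: each edge is traversed at most a constant number of times. Concretely, a tree edge is crossed twice (once forward, once on backtrack) and a non-tree edge is crossed at most twice as well (detected and immediately bounced back from each endpoint); a more careful count in the anonymous-port model gives the classical $4m-2n+2$ bound (this is the bound already used in \cite{Augustine2018,Kshemkalyani}, so I would cite it rather than re-derive it). For the alternative $k\Delta$ bound: the DFS tree built has at most $k$ internal "settled" nodes, each of degree at most $\Delta$, so the group makes at most $k\Delta$ port-traversal steps before exhausting the tree—and it must terminate no later than when all $k$ robots have settled, which happens within $k\Delta$ rounds since each round is one edge traversal and after at most $k-1$ forward-progress rounds interleaved with at most $k\Delta$ total moves the last robot settles. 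Taking the minimum of the two analyses gives the claimed bound.

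For the memory bound, I would simply note that each robot stores $parent$, $child$, $port\_entered$, $port\_exited$ (all port numbers, $O(\log\Delta)$ bits), $treelabel$ and the robot's own ID ($O(\log k)$ bits), and the boolean $settled$; the total is $O(\log(\max(k,\Delta)))$ bits, and no unbounded counters are needed because the traversal is driven entirely by the locally stored pointers and the port through which the group enters a node.

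\textbf{Expected main obstacle.} The delicate part is the correctness invariant in the anonymous, memory-less-node setting: without node markings, the only way a revisited node is recognized is through the $parent/child$ data of the robot settled there, and I must be sure that (a) the group never mistakes a previously visited node for a fresh one (which would cause two robots to settle at the same node—violating {\dis}), and (b) the group never gets stuck looping on non-tree (cycle) edges. Pinning down exactly why the condition ``$p_x=r.parent$ or $p_x=$ old value of $r.child$'' correctly distinguishes tree-edge traversals from cycle-closing edges, and verifying that the $child$-increment rule (skipping the $parent$ port) keeps the ports consistent across the forward visit and every later backtrack return to the same node, is the real content of the proof; the counting and memory parts are then routine.
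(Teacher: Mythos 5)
Your proposal is correct and follows essentially the same route as the paper's own (quite terse) proof: reduce correctness to the standard DFS traversal of an anonymous port-numbered graph, cite the classical $4m-2n+2$ edge-traversal count rather than re-derive it, get the $k\Delta$ alternative from the fact that at most $k$ settled nodes of degree at most $\Delta$ need to be visited, and tally the variable sizes for the memory bound. Your discussion of the correctness invariant (the group of unsettled robots staying together and the $parent$/$child$ test distinguishing tree edges from cycle-closing edges) is in fact more explicit than what the paper writes, which simply asserts the equivalence to DFS.
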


\onlyLong{
\begin{proof}
We first show that {\dis} is achieved by $DFS(k)$. Because every robot starts at the same node and follows the same path as other not-yet-settled robots until it is assigned to a node, $DFS(k)$ resembles the DFS traversal of an anonymous port-numbered graph \cite{Augustine2018} with all robots starting from the same node. 
Therefore, $DFS(k)$ visits $k$ different nodes where each robot is settled. 


We now prove time and memory bounds. In $k \Delta$ rounds, $DFS(k)$ visits at least $k$ different nodes of $G$. If $4m-2n+2<  k \Delta$, $DFS(k)$ visits all $n$ nodes of $G$. Therefore, it is clear that the runtime of $DFS(k)$ is $ \min(4m-2n+2, k\Delta)$ rounds. Regarding memory, variable $treelabel$  takes $O(\log k)$ bits, $settled$ takes $O(1)$ bits, and $parent$ and $child$ take $O(\log \Delta)$ bits.
The $k$ robots can be distinguished through $O(\log k)$ bits since their IDs are in the range $[1,k]$.
Thus, each robot requires $O(\log(\max(k,\Delta)))$ bits.  
\end{proof}
}

\section{Algorithm for Arbitrary Graphs} 
\label{section:algorithm0}
We present and analyze an algorithm, {\em Graph\_Disperse(k)}, that solves {\dis} of $k\leq n$ robots on an arbitrary $n$-node graph in $O(\min(m,k\Delta )\cdot \log k)$ time with $O(\log n)$ bits of memory at each robot. 
This algorithm exponentially improves the $O(mk)$ time of the best previously known algorithm \cite{Kshemkalyani} for arbitrary graphs 
(Table \ref{table:comparision}). 

\subsection{High Level Overview of 
the Algorithm}

Algorithm {\em Graph\_Disperse(k)} runs in passes and each pass is divided into two stages. Each pass runs for $O(\min(m,k\Delta ))$ rounds and there will be total $O(\log k)$ passes until {\dis} is solved. Within a pass, each of the two stages runs for $O(\min(m,k\Delta))$ rounds. The algorithm uses $O(\log n)$ bits memory at each robot.
To be able to run passes and stages in the algorithm, we assume following \cite{Augustine2018} that robots know $n, m,k,$ and $\Delta$. At their core, each of the two stages uses a modified version of the DFS traversal by robots (Algorithm $DFS(k)$) described in Section~\ref{section:basic}.

\onlyLong{
\begin{figure*}[!t] 
\centering 
\includegraphics[height=2.7in]{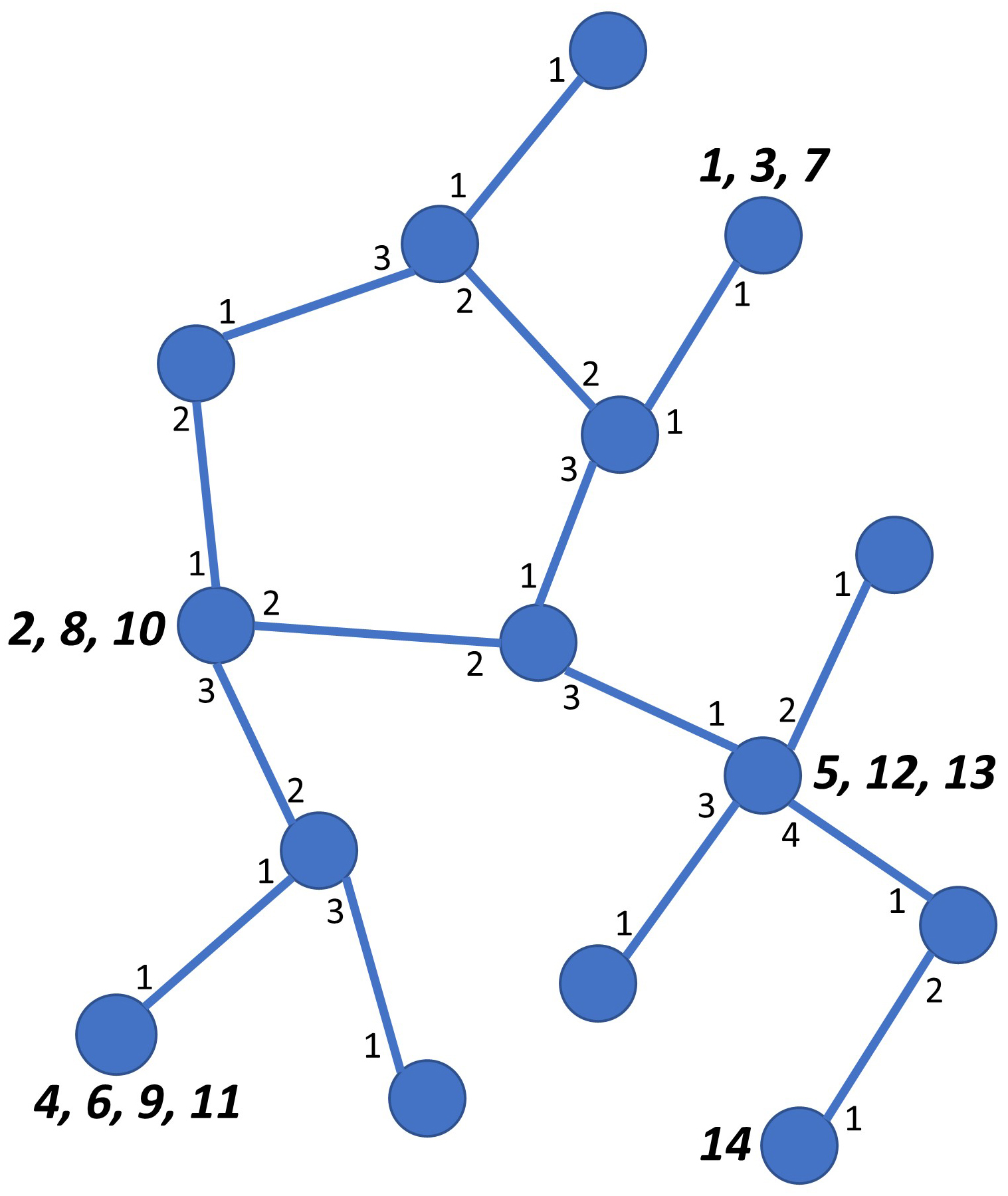}
\hspace{10pt}
\includegraphics[height=2.7in]{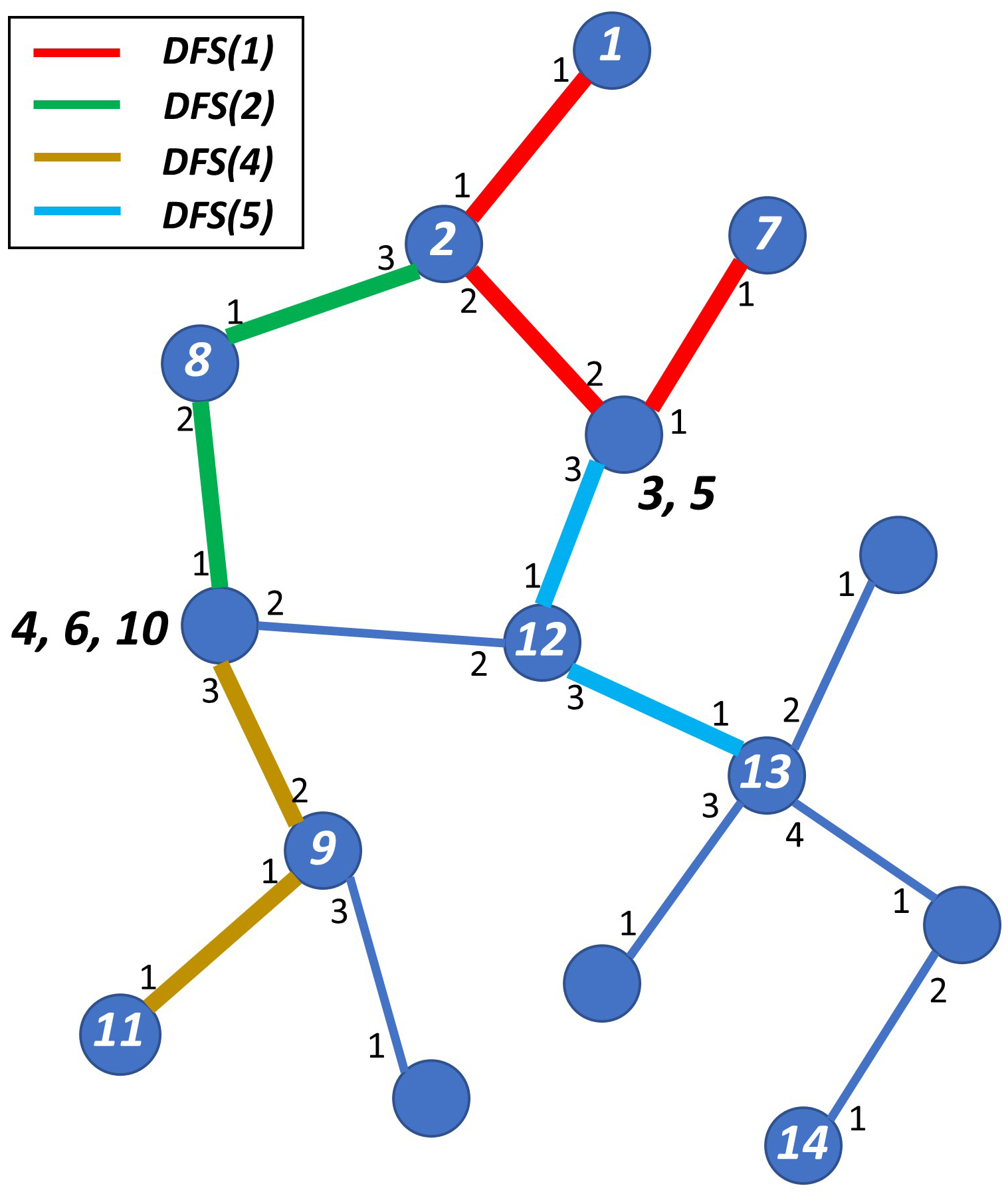}\\
\vspace{2mm}
\includegraphics[height=2.7in]{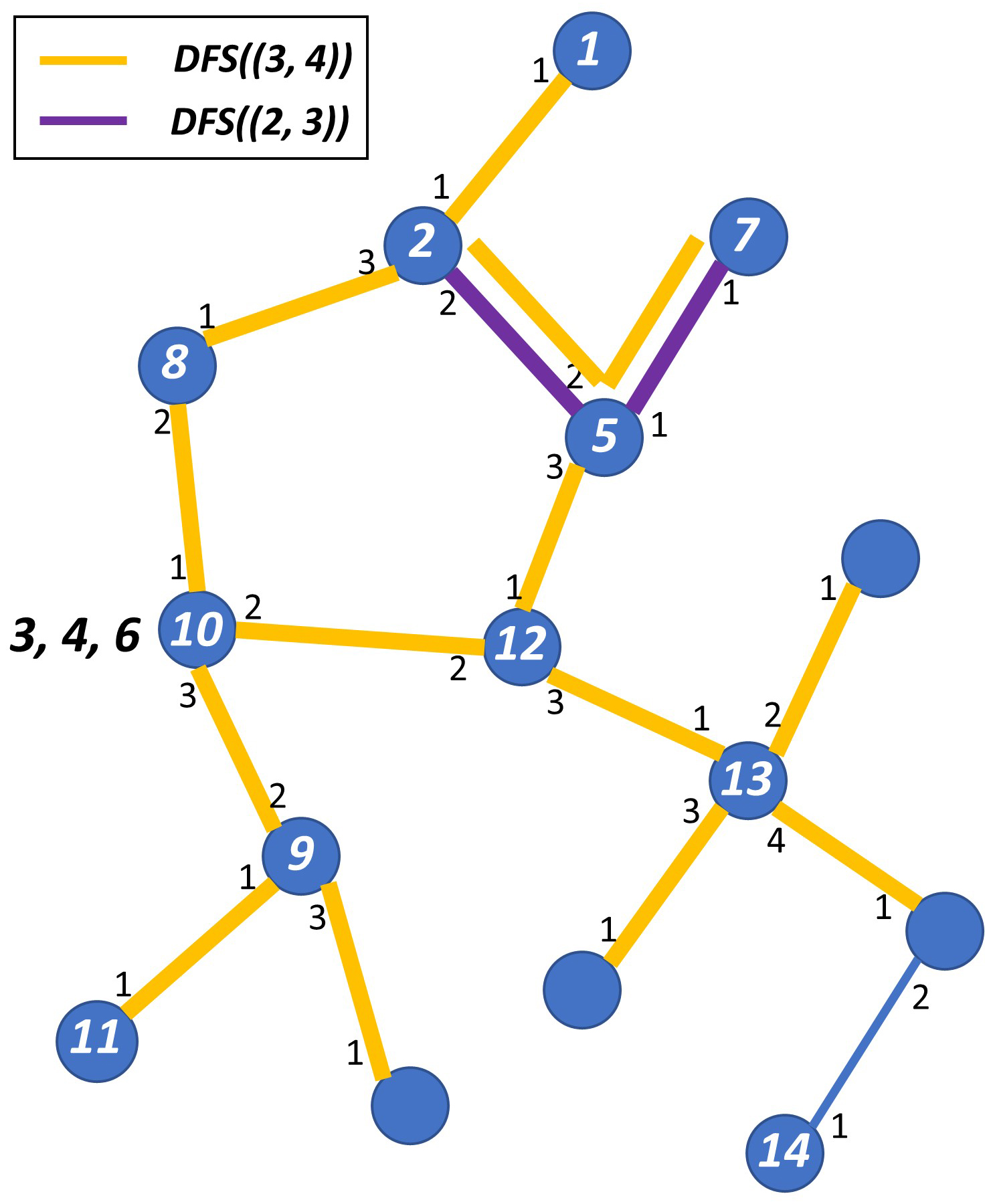}
\hspace{10pt}
\includegraphics[height=2.7in]{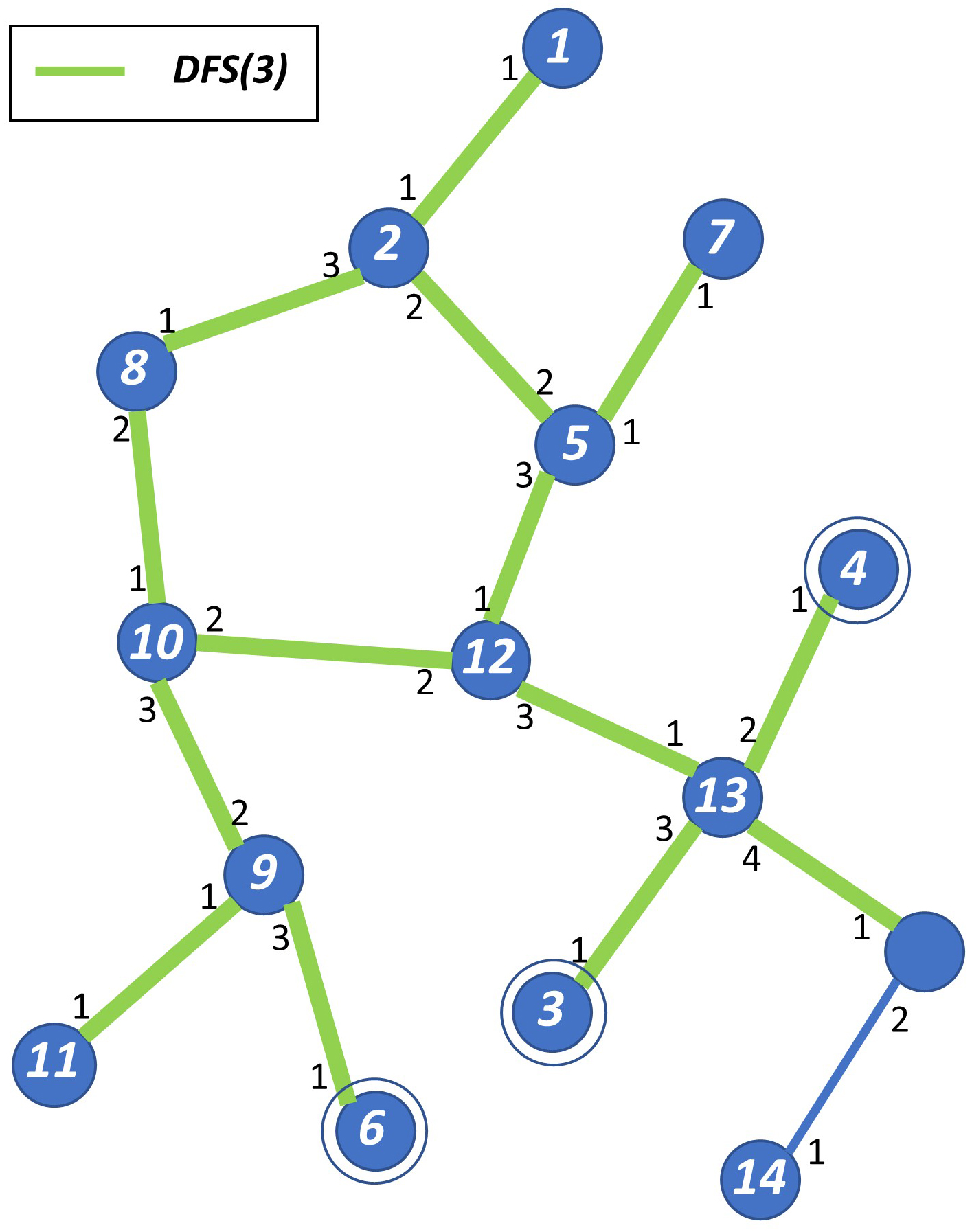}
\vspace{-0mm}
\caption{An illustration of the two stages in a pass of Algorithm \ref{algo1}  for $k=14$ robots in an $15$-node graph $G$. ({\bf top left}) shows $C_{init}$ with one or more robots at 5 nodes of $G$; the rest of the nodes of $G$ are empty. ({\bf top right}) shows the configuration after Stage 1 finishes for $DFS(.)$ started by 4 nodes with multiple robots on them; the respective DFS trees formed are shown through colored edges (the same colored edges belong to the same DFS tree). A single robot (14) at a node settles there.   ({\bf bottom left}) shows the configuration after Stage 2 finishes for $DFS((.,.))$ started by two nodes with more than one robot (see top right) on them when Stage 1 finishes. The robots 3,4,6 are collected at the node of $G$ where robot 10 is settled since $DFS((3,4))$ started from there has higher lexico-priority than $DFS((2,3))$ started from the node of $G$ where $5$ is settled.    ({\bf bottom right}) shows the configuration after Stage 1 of the next pass in which all $k$ robot settle on $k$ different nodes of $G$. There is only one DFS tree $DFS(3)$ started from the node of $G$ (where 10 is settled and all robots are collected in Stage 2) that traverses $G$ until all $3,4,6$ are settled reaching the empty nodes of $G$. The nodes of $G$ where they are settled are also shown inside a circle.}
\vspace{-0mm}
\label{fig:algo1}
\end{figure*}
}

At the start of stage 1, there may be multiple nodes, each with more than one robot\onlyLong{ (top left of Fig.~\ref{fig:algo1})}. 
The (unsettled) robots at each such node begin a DFS in parallel, each such DFS instance akin to $DFS(k)$ described in Section~\ref{section:basic}. Each such concurrently initiated DFS induces a DFS tree where the $treelabel$ of the robots that settle is common, and the same as the ID of the robot with the smallest ID in the group. Unlike $DFS(k)$, here a DFS traversal may reach a node where there is a settled robot belonging to another (concurrently initiated) DFS instance. As the settled robot cannot track variables ($treelabel$, $parent$, $child$) for the multiple DFS trees owing to its limited memory, it tracks only one DFS tree instance and the other DFS instance(s) is/are stopped. Thus, some DFS instances may not run to completion and some of their robots may not be settled by the end of stage 1. Thus, groups of stopped robots exist at different nodes at the end of stage 1\onlyLong{ (top right of Fig.~\ref{fig:algo1})}.

In stage 2, all the groups of stopped robots at different nodes in the same connected component of nodes with settled
robots are gathered together into one group at a single node in that connected component\onlyLong{ (bottom left of Fig.~\ref{fig:algo1})}. Since stopped robots in a group do not know whether there are other groups of stopped robots, and if so, how many and where, one robot from each such group initiates a DFS traversal of its connected component of nodes with settled robots, to gather all the stopped robots at its starting node. The challenge is that due to such parallel initiations of DFS traversals, robots may be in the process of movement and gathering in different parts of the connected component of settled nodes. The algorithm ensures that despite the unknown number of concurrent initiations of the DFS traversals for gathering, all stopped robots in a connected component of settled robots get collected at a single node in that component at the end of stage 2. 
Our algorithm has the property that the number of nodes with such gathered (unsettled) robots in the entire graph at the end of stage 2 is at most half the number of nodes with more than one robot at the start of stage 1 (of the same pass). This implies the sufficiency of $\log k$ passes, each comprised of these two stages, to collect all graph-wide unsettled robots at one node. In the first stage of the last pass, {\dis} is achieved\onlyLong{ (bottom right of Fig.~\ref{fig:algo1})}.

\subsection{Detailed Description of the Algorithm}
The pseudocode of the algorithm is given in Algorithm~\ref{algo1}. The variables used by each robot are described in Table~\ref{table:notations}. We now describe the two stages of the algorithm\onlyLong{; Fig.~\ref{fig:algo1} illustrates the working principle of the stages}.

\onlyLong{\subsubsection{Stage 1}}
\onlyShort{\vspace{1mm} \noindent{\bf Stage 1.}}
We first introduce some terminology. A settled/unsettled robot $i$ is one for which $i.settled = 1/0$. For brevity, we say a node is settled if it has a settled robot. 
At the start of stage 1, there may be multiple ($\geq 1$) unsettled robots at some of the nodes. 
Let $U^{s1}$/$U^{e1}$/$U^{e2}$ be the set of unsettled robots at a node at the start of stage 1/end of stage 1/end of stage 2.  In general, we define a $U$-set to be the (non-empty) set of unsettled robots at a node. Let the lowest robot ID among $U^{s1}$ at a node be $U^{s1}_{min}$. We use $r$ to denote a settled robot.


\begin{algorithm}[H]
{\onlyShort{\footnotesize} 
\onlyLong{\footnotesize}
\If{$i$ is alone at node}{ 
 $i.settled\leftarrow 1$; do not set $i.treelabel$
}
\For{$pass = 1, \log\,k$}{
 \underline{Stage 1 (Graph\_DFS: for group dispersion of unsettled robots)}\\
 \For{$round = 0, \min(4m-2n+2,k\Delta)$}{
  \If{visited node is free}{
  highest ID robot $r$ settles; $r.treelabel \leftarrow$ $x$.ID, where $x$ is robot with lowest ID\\
  $x$ continues its DFS after $r$ sets its $parent$, $child$ for DFS of $x$\\
  other visitors follow $x$
  }
  \ElseIf{visited node has a settled robot $r$}{
   \If{$r.treelabel< x.ID$ for visitors $x$}{
   all visiting robots: stop until ordered to move
   }
   \ElseIf{$r.treelabel\leq y.ID$ for visitors $y$ and $r.treelabel=x.ID$ for some visitor $x$}{
   $x$ continues its DFS after $r$ updates $child$ if needed\\
   all other unsettled robots follow $x$
   }
   \ElseIf{visitor $x (x\neq r)$ has lowest ID and lower than $r.treelabel$}{
   $r.treelabel \leftarrow x.ID$\\
   $x$ continues its DFS after $r$ sets its $parent$, $child$ for DFS of $x$\\
   all other unsettled robots follow $x$
   }
  }
 }
 All settled robots: reset $parent$, $child$\\
 \underline{Stage 2 (Connected\_Component\_DFS\_Traversal: for gathering unsettled robots)}\\
 All robots: $mult\leftarrow$ count of local robots\\
 \If{$i$ has the lowest ID among unsettled robots at its node}{
 $i.home\leftarrow r.ID$, $r.treelabel\leftarrow i.ID$, where $r$ is the settled robot at that node\\
 $i$ initiates DFS traversal of connected component of nodes with settled robots 
 }
 \For{$round = 0, \min(4m-2n+2,2k\Delta)$}{
  \If{visited node is free}{
   ignore the node; all visitors backtrack, i.e., retrace their step
  }
  \ElseIf{visited node has a settled robot $r$}{
   \If{lexico-priority of $r$ is highest and greater than that of all visitors}{
   all visiting robots: stop until ordered to move
   }
   \ElseIf{lexico-priority of $r$ is highest but equal to that of some visitor $x$}{
   $x$ continues its DFS traversal after $r$ updates $child$ if needed (until $x.home=r.ID$ and all ports at the node where $r$ is settled are explored)\\
   all other unsettled robots: follow $x$ if $x.home\neq r.ID$
   }
   \ElseIf{visitor $x (x\neq r)$ has highest lexico-priority and higher than that of $r$}{
   $r.treelabel\leftarrow x.ID$, $r.mult\leftarrow x.mult$\\
   $x$ continues its DFS traversal after $r$ sets $parent$, $child$ for DFS of $x$\\
   all other unsettled robots follow $x$
   }
  } 
 }
 reset $parent$, $child$, $treelabel$, $mult$, $home$\\
\onlyLong{
\tcp{Lexico-priority: $(mult, treelabel/ID)$. Higher $mult$ is higher priority; if $mult$ is equal, lower $treelabel/ID$ has higher priority.}
\tcp{In each connected component of settled nodes, only one robot will return to its home node, collecting all unsettled robots to it.}
}
}
}
\caption{Algorithm {\em Graph\_Disperse(k)} to solve \dis.}
\label{algo1}
\end{algorithm}

\newpage
In stage 1, the unsettled robots at a node begin $DFS(|U^{s1}|)$, following the lowest ID $(= U^{s1}_{min})$ robot among them. Each instance of the DFS algorithm, begun concurrently by different $U^{s1}$-sets from different nodes, induces a DFS tree in which the settled nodes have robots with the same $treelabel$, which is equal to the corresponding $U^{s1}_{min}$. During this DFS traversal, the robots visit nodes, at each of which there are four possibilities. The node may be free, or may have a settled robot $r$, where $r.treelabel$ is less than, equals, or is greater than $x.ID$, where $x$ is the visiting robot with the lowest ID. The second and fourth possibilities indicate that two DFS trees, corresponding to different $treelabel$s meet. As each robot is allowed only $O(\log(\max(k,\Delta)))$ bits memory, it can track the variables for only one DFS tree. 
We deal with these possibilities as described below. 
\begin{enumerate}
    \item If the node is free (line 6), the logic of $DFS(k)$ described in Section~\ref{section:basic} is followed. Specifically, the highest ID robot from the visiting robots (call it $r$) settles, and sets $r.settled$ to 1 and $r.treelabel$ to $x.ID$. Robot $x$ continues its DFS, after setting $r.parent$, $r.child$ and $r.phase$ for its own DFS as per the logic of $DFS(k)$ described in Section~\ref{section:basic}; and other visiting robots follow $x$. 
    \item If $r.treelabel<x.ID$ (line 11), all visiting robots stop at this node and discontinue growing their DFS tree.
    \item If $r.treelabel=x.ID$ (line 13), robot $x$'s traversal is part of the same DFS tree as that of robot $r$. Robot $x$ continues its DFS traversal and takes along with it all unsettled (including stopped) robots from this node, after updating $r.child$ if needed as per the logic of $DFS(k)$ described in Section~\ref{section:basic}.
    \item If $r.treelabel>x.ID$ (line 16), robot $x$ continues growing its DFS tree and takes along all unsettled robots from this node with it. To continue growing its DFS tree, $x$ overwrites robot $r$'s variables set for $r$'s old DFS tree by including this node and $r$ in its own DFS tree. Specifically, $r.treelabel\leftarrow x.ID$, $r.parent$ is set to the port from which $x$ entered this node, and $r.child$ is set as per the logic described for $DFS(k)$ in Section~\ref{section:basic}.
\end{enumerate}

Note that if the robots stop at a node where $r.treelabel < x.ID$, they will start moving again if a robot $x'$ arrives such that $x'.ID\leq r.treelabel$. At the end of stage 1, either all the robots from any $U^{s1}$ are settled or some subset of them are stopped at some node where $r.treelabel<U^{s1}_{min}$.


\begin{lemma}\label{stageoneresult}
For any $U^{s1}$-set, at the end of stage 1, either (i) all the robots in $U^{s1}$ are settled or (ii) the unsettled robots among $U^{s1}$ are present all together along with robot with ID $U^{s1}_{min}$ (and possibly along with other robots outside of $U^{s1}$) at a {\em single node} with a settled robot $r$ having $r.treelabel<U^{s1}_{min}$. 
\end{lemma}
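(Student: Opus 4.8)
The plan is to follow the robots of $U^{s1}$ through Stage~1 by tracking the ``frontier groups'' of unsettled robots that carry the concurrently grown DFS trees, proving a few invariants about them, and then reading off the end-of-Stage-1 configuration. For a DFS tree with label $\ell$ that is currently being grown, call the set of unsettled robots presently advancing it the \emph{frontier group} of $\ell$. The key structural claims, proved together by induction over rounds from the case analysis (1)--(4), are: (a)~a frontier group moves as a single unit, is led by the robot whose ID equals $\ell$, and that leader is the minimum-ID robot of the group; (b)~the $treelabel$ of any settled robot is non-increasing over time, since it is changed only in case~(4), where it strictly decreases. For~(a), a group gains members only by picking up robots that were previously stopped at a revisited settled node (cases~(3) and~(4)); such a stopped robot sits at a node whose settled robot $r$ has $r.treelabel$ smaller than that robot's own ID, and by~(b) that $treelabel$ is $\le \ell$ at the moment of pickup, so the picked-up robot has ID $>\ell$ and the leader remains the unique minimum. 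Claim~(b) also immediately gives that a stopped robot can never resume ``on its own'': only the arrival of a robot of small enough ID restarts it.

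I would then specialize to $U^{s1}$, writing $a:=U^{s1}_{min}$. Initially these robots form the frontier group of tree $a$, led by robot $a$. A robot leaves the group only by settling, and at a free node it is always the \emph{highest}-ID visitor that settles, whereas $a$ is the \emph{lowest}-ID robot of its group; hence robot $a$ settles only when it is alone in its group, which forces every other robot of $U^{s1}$ to have already settled. Thus $a$ settles during Stage~1 if and only if all of $U^{s1}$ does, which is exactly alternative~(i). In alternative~(ii), robot $a$ is still unsettled, and I claim the unsettled robots of $U^{s1}$ stay mutually co-located: whenever their group stops (case~(2)), is absorbed into another frontier group (cases~(3),(4)), or collides with another frontier group in the same synchronous round, the whole group continues together. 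Combining~(b) with the restart rule shows that every such absorption replaces the group's leader by a robot of \emph{strictly smaller} ID (a group stopped behind $r$ restarts only when a robot of ID $\le r.treelabel<(\text{current leader's ID})$ arrives and becomes the new leader), so this happens at most $a$ times and the leader's ID stays $\le a$ throughout Stage~1.

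Finally, in alternative~(ii), the unsettled robots of $U^{s1}$ --- in particular robot $a$ --- all lie in one frontier group $G^\ast$ whose leader has ID $\ell^\ast\le a$. I claim $G^\ast$ is stopped behind a settled robot at the end of Stage~1. Indeed, if tree $\ell^\ast$ had never been stopped, then it was continuously active from round~$0$, so by the running-time bound of $DFS(k)$ its DFS would have finished within $\min(4m-2n+2,k\Delta)$ rounds, i.e.\ within Stage~1; but a tree's DFS finishes only when its minimum-ID robot --- here robot $\ell^\ast$ --- settles, contradicting that $\ell^\ast$ is the unsettled leader of $G^\ast$. (A ``stop and then restart'' of tree $\ell^\ast$ is also impossible, since restarting would change the leader of $G^\ast$ away from $\ell^\ast$.) Hence $G^\ast$ is stopped at some node $v^\ast$ whose settled robot $r$ satisfies, by the stopping condition of case~(2), $r.treelabel<\ell^\ast\le a=U^{s1}_{min}$; by~(b) this inequality persists, and $r.treelabel$ in fact cannot change afterwards, since any tree of label $\le r.treelabel$ reaching $v^\ast$ would (by cases~(3),(4)) absorb $G^\ast$ and restart it, contradicting that $G^\ast$ is stopped at the end. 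Therefore at the end of Stage~1 the unsettled robots of $U^{s1}$ are together with robot $a$ at the single node $v^\ast$, whose settled robot $r$ has $r.treelabel<U^{s1}_{min}$, which is alternative~(ii).

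The step I expect to be the main obstacle is making claim~(a) --- ``a frontier group moves as one unit'' --- fully rigorous under concurrency: many DFS traversals are simultaneously live, two of them can enter the same node in the same synchronous round, can swap positions across an edge, and several stopped groups can stack up behind a single settled robot. One must fix a consistent ordering of the within-round events (who settles, who is picked up, whose variables are overwritten), verify that none of these events can separate the $U^{s1}$ robots or raise a group's leader-ID, and also confirm that the $\min(4m-2n+2,k\Delta)$-round budget really suffices for every tree either to finish its DFS or to become stopped, so that no frontier group is caught mid-step when Stage~1 ends.
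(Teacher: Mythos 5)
Your proof is correct and follows essentially the same route as the paper's: you track the co-moving group of unsettled $U^{s1}$ robots, observe that the minimum-ID robot settles last (so the two alternatives are exhaustive), and derive $r.treelabel<U^{s1}_{min}$ from the stopping condition combined with the monotone non-increase of $treelabel$s and of the group leader's ID. Your frontier-group invariants and the explicit time-budget argument showing the group cannot be caught mid-traversal when Stage 1 ends are refinements of steps the paper states more tersely, not a genuinely different approach.
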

\begin{proof}
The DFS traversal of the graph can complete in $4m-2n+2$ steps as each tree edge gets traversed twice, and each back edge, i.e., non-tree edge of the DFS tree,  gets traversed 4 times (twice in the forward direction and twice in the backward direction) if the conditions in lines (6), (13), or (16) hold. The DFS traversal of the graph required to settle $k$ robots and hence discover $k$ new nodes, can also complete in $k\Delta$ steps as a node may be visited multiple times (at most its degree which is at most $\Delta$ times). 
As $k \geq |U^{s1}|$, possibility (i) is evident.

In the DFS traversal, if condition in line (11) holds, the unsettled robots remaining in $U^{s1}$, including that with ID $U^{s1}_{min}$, stop together at a node with a settled robot $r'$ such that $r'.treelabel<U^{s1}_{min}$. They may move again together (lines (15) or (19)) if visited by a robot with ID $U'_{min}$ equal to or lower than $r'.treelabel$ (lines (13) or (16)), and may either get settled (possibility (i)), or stop (the unsettled ones together) at another node with a settled robot $r''$ such that $r''.treelabel<U'_{min}$. This may happen up to $k-1$ times. However, the remaining unsettled robots from $U^{s1}$ never get separated from each other. If the robot with ID $U^{s1}_{min}$ is settled at the end of stage 1, so are all the others in $U^{s1}$. If $U^{s1}_{min}$ robot is not settled at the end of stage 1, the remaining unsettled robots from $U^{s1}$ have always moved and stopped along with $U^{s1}_{min}$ robot. This is because, if the robot with ID $U^{s1}_{min}$ stops at a node with settled robot $r'''$ (line 12), $r'''.treelabel<U^{s1}_{min}$ and hence $r'''.treelabel$ is also less than the IDs of the remaining unsettled robots from $U^{s1}$. If the stopped robot with ID $U^{s1}_{min}$ begins to move (line 15 or 19), so do the other stopped (unsettled) robots from $U^{s1}$ because they are at the same node as the robot with ID $U^{s1}_{min}$. Hence, (ii) follows. 
\end{proof}

Let us introduce some more terminology. Let ${\mathcal U}^{s1}$ be the set of all $U^{s1}$. Let ${\mathcal U}^{s1}_{min}$ be 
$\min_{U^{s1} \in {\mathcal U}^{s1}}(U^{s1}_{min})$. 
The set of robots in that $U^{s1}$ having $U^{s1}_{min}= {\mathcal U}^{s1}_{min}$ are dispersed at the end of stage 1 because the DFS traversal of the robots in that $U^{s1}$ is not stopped at any node by a settled robot having a lower $treelabel$ than that $U^{s1}_{min}$. 
Let $u^{s1}_p$, $u^{e1}_p = u^{s2}_p$, and $u^{e2}_p$ denote the number of nodes with unsettled robots at the start of stage 1, at the end of stage 1 (or at the start of stage 2), and at the end of stage 2 respectively, all for a pass $p$ of the algorithm.  Thus, $u^{s1}_p$ (= $|{\mathcal U}^{s1}_p|$) is the number of $U$-sets at the start of stage 1 of pass $p$. Analogously, for $u^{e1}_p = u^{s2}_p$, and $u^{e2}_p$.
We now have the following corollary to Lemma~\ref{stageoneresult}.

\begin{corollary}\label{calu}
$u^{e1}_p \leq u^{s1}_p -1$.
\end{corollary}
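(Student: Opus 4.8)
The plan is to show that Stage 1 strictly decreases the number of nodes holding unsettled robots, by exhibiting at least one $U$-set that gets completely dispersed. I would argue as follows. Consider the $U^{s1}$-set whose minimum ID equals ${\mathcal U}^{s1}_{min}$, i.e. the globally smallest ID among all unsettled robots at the start of Stage 1. By Lemma~\ref{stageoneresult}, at the end of Stage 1 either all robots in this set are settled (case (i)), or the unsettled survivors — including the robot with ID ${\mathcal U}^{s1}_{min}$ — are parked together at a node with a settled robot $r$ satisfying $r.treelabel < {\mathcal U}^{s1}_{min}$ (case (ii)). The key observation is that case (ii) is impossible for \emph{this particular} set: $r.treelabel$ was assigned (in Stage 1 of the current pass, or in a previous pass, after the reset of $parent,child$ but $treelabel$ carries the ID of some robot) the ID of some robot, and every robot's ID is at least ${\mathcal U}^{s1}_{min}$ by the very definition of ${\mathcal U}^{s1}_{min}$ as the global minimum over all currently-unsettled robots' home sets — so no settled robot can have $treelabel$ strictly smaller than ${\mathcal U}^{s1}_{min}$. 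Hence case (i) holds, and this entire $U$-set vanishes.

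Next I would count. The nodes that held unsettled robots at the end of Stage 1 are a subset of $\{$nodes where a $U$-set's survivors got stopped$\}$; by Lemma~\ref{stageoneresult}(ii), each such node corresponds to at least one original $U^{s1}$-set. Moreover — and this needs a brief check — distinct end-of-Stage-1 nodes with unsettled robots are ``charged'' to distinct $U^{s1}$-sets (or at least, the map from $U^{s1}$-sets to their final resting node is such that the image has size at most $|{\mathcal U}^{s1}_p|$, with the set realizing ${\mathcal U}^{s1}_{min}$ never contributing). Combining: $u^{e1}_p = |{\mathcal U}^{e1}_p| \le |{\mathcal U}^{s1}_p| - 1 = u^{s1}_p - 1$, where the $-1$ comes from the dispersed set identified above. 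I should also dispose of the degenerate case where $u^{s1}_p \le 1$: if there are no nodes with multiple robots, Stage 1 does nothing harmful and the bound reads $u^{e1}_p \le 0$ vacuously or $u^{e1}_p = 0 \le u^{s1}_p - 1$ when $u^{s1}_p = 1$ (that single $U$-set is the minimizer and gets dispersed).

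The main obstacle I anticipate is the counting step — specifically justifying that the number of \emph{distinct} nodes holding unsettled robots at the end of Stage 1 is at most the number of $U$-sets minus one. One has to rule out the possibility that survivors from several different $U^{s1}$-sets pile up at the same node while survivors from the minimizer set also fail to disperse; the first concern only helps (fewer nodes), and the second is exactly what the ${\mathcal U}^{s1}_{min}$ argument excludes. A cleaner way to phrase it: define an injection (or at least a function with image of the right size) from the set of end-of-Stage-1 nodes carrying unsettled robots into ${\mathcal U}^{s1}_p \setminus \{$the minimizer set$\}$ by sending each such node to any $U^{s1}$-set whose $U^{s1}_{min}$-robot is parked there (well-defined by Lemma~\ref{stageoneresult}, and two distinct nodes cannot both host the same $U^{s1}_{min}$-robot). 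The minimizer set is excluded from the image because its $U^{s1}_{min}$-robot is settled, not parked anywhere. This yields $u^{e1}_p \le u^{s1}_p - 1$ directly.

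A subtle point worth flagging: I am implicitly using that $treelabel$ values, once $parent$ and $child$ are reset between passes, still record a genuine robot ID $\ge {\mathcal U}^{s1}_{min}$, and that the reset in Algorithm~\ref{algo1} does not reset $treelabel$ mid-Stage-1 in a way that could introduce a label below ${\mathcal U}^{s1}_{min}$. Since all settled robots have IDs that were already ``spent'' and every unsettled robot has ID $\ge {\mathcal U}^{s1}_{min}$, and $treelabel$ is only ever written as some robot's ID, no $treelabel < {\mathcal U}^{s1}_{min}$ can exist at the start of the pass — this is the crux and should be stated as an explicit observation before invoking Lemma~\ref{stageoneresult}.
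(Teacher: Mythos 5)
Your proof is correct and follows essentially the same route as the paper, which states the corollary without a formal proof but justifies it by the immediately preceding observation that the $U^{s1}$-set realizing ${\mathcal U}^{s1}_{min}$ cannot be stopped by any settled robot with a smaller $treelabel$ and is therefore fully dispersed. Your additional care in making the counting explicit (the injection from end-of-stage-1 nodes to non-minimizer $U^{s1}$-sets via their parked $U^{s1}_{min}$-robots, guaranteed unique by Lemma~\ref{stageoneresult}) and in verifying that all $treelabel$s are reset to $\top$ between passes only fills in details the paper leaves implicit.
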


In stage 1, each set of unsettled robots $U^{s1}$ induces a partial DFS tree, where the $treelabel$ of settled robots is $U^{s1}_{min}$. This identifies a sub-component $SC_{U^{s1}_{min}}$.
Note that some subset of $U^{s1}$ may be stopped at a node outside $SC_{U^{s1}_{min}}$, where the $treelabel < U^{s1}_{min}$.

\begin{definition}\label{scx}
A sub-component $SC_{\alpha}$ is the set of all settled nodes having $treelabel=\alpha$. ${\mathcal SC}$ is used to denote the set of all SCs at the end of stage 1.
\end{definition}

\begin{theorem}\label{onetoone}
There is a one-to-one mapping from the set of sub-components ${\mathcal SC}$ to the set of unsettled robots ${\mathcal U}^{s1}$. The mapping is given by: $SC_{\alpha} \mapsto U^{s1}$, where $\alpha=U^{s1}_{min}$.
\end{theorem}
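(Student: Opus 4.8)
The plan is to establish the one-to-one mapping by exhibiting its definition, checking that it is well-defined, and then showing it is both injective and surjective. The candidate map sends a sub-component $SC_\alpha$ to the unique set $U^{s1} \in {\mathcal U}^{s1}$ with $U^{s1}_{min} = \alpha$. First I would verify the map is well-defined: by Definition~\ref{scx}, every sub-component $SC_\alpha$ consists of settled nodes whose common $treelabel$ equals $\alpha$. I claim $\alpha$ must itself be a value of the form $U^{s1}_{min}$ for some $U^{s1}$-set begun in stage 1. Indeed, a settled robot's $treelabel$ is only ever set (line 7, line 15's analog, line 17) to the ID of the lowest-ID robot $x$ among a group of visiting unsettled robots that is currently carrying forward a DFS instance; tracing back, the DFS instance carried by $x$ originated from the $U^{s1}$-set containing $x$, and $x = U^{s1}_{min}$ for that set since the DFS is always led by the lowest-ID robot of the originating group and $treelabel$ is only overwritten by a strictly smaller ID. Hence $\alpha = U^{s1}_{min}$ for exactly the $U^{s1}$-set containing the robot with ID $\alpha$, so the target $U^{s1}$ is determined and unique — this also simultaneously gives injectivity, since distinct sub-components $SC_\alpha \neq SC_\beta$ have $\alpha \neq \beta$ (a settled robot has a single $treelabel$, so the sets of settled nodes with $treelabel=\alpha$ and $treelabel=\beta$ are disjoint when $\alpha\neq\beta$), and therefore map to distinct $U^{s1}$-sets.

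Next I would argue surjectivity: every $U^{s1} \in {\mathcal U}^{s1}$ is hit. Given $U^{s1}$ with minimum ID $\mu = U^{s1}_{min}$, the robots in $U^{s1}$ begin $DFS(|U^{s1}|)$ led by the robot with ID $\mu$. The very first node this DFS visits after leaving the start node is either free — in which case the highest-ID visitor settles with $treelabel = \mu$, creating a nonempty $SC_\mu$ — or it already has a settled robot $r$, in which case we are in one of the three non-free cases. Actually the start node itself: by the logic of $DFS(k)$ and line 6 of Algorithm~\ref{algo1}, when the DFS is initiated, exactly one robot (the highest ID among $U^{s1}$) settles at the start node, and it sets its $treelabel \leftarrow \mu$. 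Wait — I should check against the pseudocode: in stage 1 the initiating step mirrors $DFS(k)$ where $r_k$ settles at $v$ with $treelabel \leftarrow 1 = $ lowest ID; analogously here the highest-ID robot of $U^{s1}$ settles at the start node with $treelabel = \mu$. So $SC_\mu \neq \emptyset$ because it contains at least the start node. Hence $SC_\mu \in {\mathcal SC}$ and it maps to $U^{s1}$, giving surjectivity. The one subtlety is whether $treelabel$ at the start node could later be overwritten to something smaller than $\mu$ by an incoming DFS; but by Lemma~\ref{stageoneresult}, if $\mu = {\mathcal U}^{s1}_{min}$ it is never overwritten, and more generally even if it is overwritten, the node where the overwrite happens still has a $treelabel$ equal to some other $U^{s1'}_{min}$, so it just belongs to a different sub-component — the point is only that \emph{some} sub-component maps to $U^{s1}$, and the cleanest way is to observe the correspondence is between the {\em identity} $\mu$ of the lowest robot and the sub-component label, both drawn from the same set $\{U^{s1}_{min} : U^{s1} \in {\mathcal U}^{s1}\}$.

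I expect the main obstacle to be pinning down precisely the claim that the set of realized $treelabel$ values at the end of stage 1 is exactly $\{U^{s1}_{min} : U^{s1} \in {\mathcal U}^{s1}\}$ — i.e., that no $treelabel$ value outside this set ever appears, and that each such value does appear as the label of a nonempty sub-component. The forward direction (no spurious labels) follows by induction on rounds: a $treelabel$ is only ever written as $x.ID$ for a visiting robot $x$ leading a DFS instance, and one shows by induction that the ID of the robot leading any live DFS instance is always equal to $U^{s1}_{min}$ for the $U^{s1}$-set it originated from (overwrites in line 17 only replace a leader by a strictly-lower-ID leader that itself originated as some $U^{s1'}_{min}$). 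The reverse direction (each label realized) needs the observation above that the start node of each $U^{s1}$'s DFS gets $treelabel = U^{s1}_{min}$ at initiation, and although this particular node may be re-labeled, there is always at least one node retaining each such label — which is cleanest to get from the fact that, by Corollary~\ref{calu} and Lemma~\ref{stageoneresult}, the sub-components partition (a subset of) the settled nodes and each originating DFS instance contributes its own label. I would present this as: the map is well-defined and injective because $treelabel$ is single-valued per settled robot and every realized $treelabel$ equals a unique $U^{s1}_{min}$; it is surjective because each $U^{s1}$ initiates a DFS that settles at least its start node with that $treelabel$. Finally I would remark that this theorem is what licenses speaking of ``the'' sub-component associated with a given group of unsettled robots, which is used in the stage-2 analysis.
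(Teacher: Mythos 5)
Your core argument is essentially the paper's own proof, expanded: the paper's justification is exactly that each $SC_\alpha$ corresponds to a single $treelabel$ value $\alpha$, that $treelabel$ is only ever set to the lowest ID among a group of visiting unsettled robots, and that this ID identifies a unique $U^{s1}$-set with $U^{s1}_{min}=\alpha$. Your well-definedness and injectivity paragraphs are a more careful write-up of that two-sentence argument, and they are correct; note that this is all the theorem needs, since a ``one-to-one mapping from ${\mathcal SC}$ to ${\mathcal U}^{s1}$'' is an injection, and that is also all that is invoked later (e.g., in Lemma~\ref{half}, where distinct sub-components must correspond to distinct $U$-sets).

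The surjectivity excursion, however, is both unnecessary and unsound as written. Your claim that ``there is always at least one node retaining each such label'' does not follow from Corollary~\ref{calu} or Lemma~\ref{stageoneresult}, and it can fail: a set $U^{s1}$ with $U^{s1}_{min}=\mu$ may settle only a few nodes with $treelabel=\mu$ before being stopped, and a concurrently running DFS led by a robot with a smaller ID can subsequently visit every one of those nodes, triggering the line-16 case and overwriting $treelabel$ at each of them. In that execution $SC_\mu$ is empty at the end of stage~1, so no sub-component maps to that $U^{s1}$ and the map is not onto. Your own hedge (``even if it is overwritten, the node\ldots just belongs to a different sub-component --- the point is only that some sub-component maps to $U^{s1}$'') does not rescue this: a node relabeled to $\beta\neq\mu$ witnesses that $SC_\beta$ maps to the $\beta$-set, not that anything maps to the $\mu$-set. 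If you drop the surjectivity paragraph entirely, what remains is a correct proof of the stated theorem, in the same spirit as the paper's.
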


\begin{proof}
From Definition~\ref{scx}, each $SC_{\alpha}$ corresponds to a $treelabel=\alpha$. The $treelabel$ is set to the lowest ID among visiting robots, and this corresponds to a unique set of unsettled robots $U^{s1}$ whose minimum ID robot has ID $\alpha$, i.e., $U^{s1}_{min}=\alpha$. 
\end{proof}

\begin{lemma}\label{sc}
Sub-component $SC_{\alpha}$ is a connected sub-component of settled nodes, i.e., for any $a,b \in SC_{\alpha}$, there exists a path $(a,b)$ in $G$ such that each node on the path has a settled robot. 
\end{lemma}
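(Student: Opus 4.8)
The plan is to produce, for the label $\alpha$ under consideration, a set of settled nodes that (a)~contains $SC_\alpha$ and (b)~is connected in $G$; the lemma then follows, since the path guaranteed by (b) consists of settled nodes. The right object is the \emph{historical} DFS tree of $\alpha$: let $\widehat{T}_\alpha$ be the set of all nodes that carried $treelabel=\alpha$ at \emph{some} round of stage~1 of the pass in question. Since a node of $SC_\alpha$ has $treelabel=\alpha$ at the end of stage~1, we get $SC_\alpha\subseteq\widehat{T}_\alpha$ immediately. Moreover, a node acquires a (non-$\top$) $treelabel$ only in a round in which a robot is already settled or is being settled on it (line~(6) settles the highest-ID visitor and sets its $treelabel$; lines~(13) and (16) act on an already settled robot), and stage~1 never unsettles a robot; hence every node of $\widehat{T}_\alpha$ is settled.

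Next I would give $\widehat{T}_\alpha$ a tree structure and show it is connected in $G$. By Theorem~\ref{onetoone} and the rule ``$treelabel:=$ smallest visiting ID'', the value $\alpha$ is written only by the travelling group whose current minimum-ID member has ID $\alpha$; by Lemma~\ref{stageoneresult} there is exactly one such group at any round, namely the one descending from the $U^{s1}$-set with $U^{s1}_{min}=\alpha$, and it stops writing $\alpha$ once robot $\alpha$ settles or its group is absorbed into one with a smaller minimum ID. Call the traversal this group performs the \emph{DFS of $\alpha$}; it is rooted at the node $v_\alpha$ where that $U^{s1}$-set starts, and $v_\alpha$ gets $treelabel=\alpha$ in the first round of that DFS. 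For $w\in\widehat{T}_\alpha\setminus\{v_\alpha\}$ let $\rho(w)$ be the first round at which $w$ has $treelabel=\alpha$, and let $\pi(w)$ be the node out of which the DFS of $\alpha$ moved to enter $w$ in round $\rho(w)$. From the case analysis of lines~(6), (13), (16) one checks that whenever the group of $\alpha$ makes a forward move out of a node $u$, $u$ ends that round with $treelabel=\alpha$; hence $\pi(w)\in\widehat{T}_\alpha$. Also $\pi(w)$ is a $G$-neighbour of $w$ (it is reached through the port recorded as $w.parent$ at round $\rho(w)$), and $\rho(\pi(w))\le\rho(w)-1<\rho(w)$ (the group departed $\pi(w)$ at round $\rho(w)-1$, at which point $\pi(w)$ already carried $treelabel=\alpha$). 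Thus $\pi$ assigns to every non-root node of $\widehat{T}_\alpha$ a $G$-adjacent parent of strictly smaller $\rho$, so iterating $\pi$ is well-founded and terminates at $v_\alpha$; therefore $\widehat{T}_\alpha$ is connected in $G$.

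Combining the two parts: $SC_\alpha\subseteq\widehat{T}_\alpha$, every node of $\widehat{T}_\alpha$ is settled, and $\widehat{T}_\alpha$ is connected in $G$; hence any $a,b\in SC_\alpha$ are joined by a path in $G$ all of whose nodes are settled, which is exactly the statement of the lemma. Note that this path need not stay inside $SC_\alpha$: a node of $\widehat{T}_\alpha$ may have had its $treelabel$ later lowered by an overwrite (line~(16)) from a smaller-label DFS and so left $SC_\alpha$, but it remains settled and is therefore still available on the path.

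The delicate point, and the part I expect to need the most care, is the concurrency bookkeeping behind the two ``one checks'' steps. Because rounds are synchronous and travelling groups merge whenever they meet on a node, I must verify that the group writing $treelabel=\alpha$ is unambiguously defined in every round (and ceases to exist exactly when $\alpha$ stops being a group minimum), that the $treelabel$ at a fixed node is non-increasing over the rounds of stage~1 (so that ``first round with $treelabel=\alpha$'', and hence $\rho$, is well defined even across revisits of $\pi(w)$ during backtracking and across re-incorporation of nodes that were settled, with $treelabel$ since reset to $\top$, in an earlier pass), and that $\rho(\pi(w))<\rho(w)$ genuinely holds in all these cases. Once these facts are in hand, the tree-and-connectivity argument above is routine.
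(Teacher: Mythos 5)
Your proof is correct and rests on the same idea as the paper's: the DFS group whose minimum ID is $\alpha$ leaves behind a trail of settled nodes connecting any two nodes it labels, so $a$ and $b$ are joined by a path of settled nodes (the paper phrases this directly as ``robot $\alpha$ traversed a path from $a$ to $b$, and no node on it can be free, else a robot would have settled there''). Your historical-tree $\widehat{T}_\alpha$ with the well-founded parent map $\pi$ is just a more formal rendering of that traversed path, and your closing remark that the path need not stay inside $SC_\alpha$ is implicitly covered by the paper's statement, which only requires the path's nodes to be settled.
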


\onlyLong{
\begin{proof}
For any nodes $a$ and $b$ in $SC_{\alpha}$, the robot with ID $U_{min}$ $(=\alpha)$ has visited $a$ and $b$. Thus there is some path from $a$ to $b$ in $G$ that it has traversed. 
On that path, if there was a free node, a remaining unsettled robot from $U$ (there is at least the robot with ID $U_{min}$ that is unsettled) would have settled there. 
Thus there cannot exist a free node on that path and the lemma follows.
\end{proof}
}

Within a sub-component, there may be stopped robots belonging to one or more different sets $U^{s1}$ (having a higher $U^{s1}_{min}$ than the $treelabel$ at the node where they stop). There may be multiple sub-components that are adjacent in the sense that they are separated by a common edge. Together, these sub-components form a connected component of settled nodes. 

\begin{definition}\label{ccsn}
A connected component of settled nodes (CCSN) is a set of settled nodes such that for any $a,b \in CCSN$, there exists a path $(a,b)$ in $G$ with each node on the path having a settled robot.
\end{definition}

\begin{lemma}\label{scinccsn}
If not all the robots of $U^{s1}$ are settled by the end of stage 1, then $SC_{U^{s1}_{min}}$ is part of a CCSN containing nodes from at least two sub-components.
\end{lemma}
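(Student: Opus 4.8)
The plan is to argue by contradiction: suppose some subset of $U^{s1}$ remains unsettled at the end of stage 1, yet $SC_{U^{s1}_{min}}$ is not part of any CCSN that contains nodes from a second sub-component. By Lemma \ref{sc}, $SC_{U^{s1}_{min}}$ is itself a connected set of settled nodes, so under this assumption it would be a \emph{maximal} connected set of settled nodes — i.e., every node of $G$ adjacent to $SC_{U^{s1}_{min}}$ but outside it is free. The goal is to show this forces all of $U^{s1}$ to be settled, contradicting the hypothesis.

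First I would invoke Lemma \ref{stageoneresult}(ii): since not all robots of $U^{s1}$ are settled, the unsettled robots of $U^{s1}$ (including the robot with ID $U^{s1}_{min}$) are all stopped together at a single node $w$ carrying a settled robot $r$ with $r.treelabel < U^{s1}_{min}$. In particular $w \notin SC_{U^{s1}_{min}}$, because every node of $SC_{U^{s1}_{min}}$ has $treelabel = U^{s1}_{min}$ by Definition \ref{scx}. Next I would trace how the DFS of $U^{s1}_{min}$ actually reached $w$: the robot with ID $U^{s1}_{min}$ was leading its DFS instance and, at some round, traversed an edge from a node $a$ it had just been at to the node $w$. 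Here I split on where $a$ lies. If $a \in SC_{U^{s1}_{min}}$, then the edge $(a,w)$ witnesses that $w$ is adjacent to $SC_{U^{s1}_{min}}$; since $w$ is settled (it carries $r$), this directly contradicts maximality of $SC_{U^{s1}_{min}}$ as a connected component of settled nodes — $SC_{U^{s1}_{min}} \cup \{w\}$ is a larger connected set of settled nodes lying in a common CCSN, and $w \in SC_{r.treelabel}$ is a different sub-component, so the CCSN containing $SC_{U^{s1}_{min}}$ contains nodes of at least two sub-components.

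If instead $a \notin SC_{U^{s1}_{min}}$, then at some earlier point the DFS of $U^{s1}_{min}$ must have first left $SC_{U^{s1}_{min}}$ — i.e., there is an edge $(a', w')$ traversed by the $U^{s1}_{min}$-robot with $a' \in SC_{U^{s1}_{min}}$ and $w' \notin SC_{U^{s1}_{min}}$; and $w'$ must be settled, for otherwise some still-unsettled robot of $U^{s1}$ (at least the $U^{s1}_{min}$-robot itself) would have settled there, contradicting that the robots travel as a group through free nodes by the logic of $DFS(k)$ (line 6). So $w'$ is a settled node adjacent to $SC_{U^{s1}_{min}}$, again contradicting maximality. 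The main obstacle I anticipate is making the "first time the DFS leaves $SC_{U^{s1}_{min}}$" argument fully rigorous: I need to be careful that in stage 1 the $U^{s1}_{min}$-DFS may have had its variables overwritten (line 16) or may have been stopped and restarted (lines 11, 15), so "the path traversed by the $U^{s1}_{min}$-robot" has to be understood as the actual sequence of edges that robot walked, not the edge set of the final tree $SC_{U^{s1}_{min}}$; the key fact rescuing the argument is that whenever that robot moves onto a free node it settles the highest-ID companion there and keeps $treelabel = U^{s1}_{min}$ on it, so a free neighbor of $SC_{U^{s1}_{min}}$ on its walk is impossible, and whenever it moves onto an already-settled node that node is, by definition, settled — either way the first node outside $SC_{U^{s1}_{min}}$ on its walk is a settled node adjacent to $SC_{U^{s1}_{min}}$, establishing the two-sub-component CCSN.
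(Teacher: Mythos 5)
Your proposal is correct and rests on the same core argument as the paper's proof: the walk of the unsettled $U^{s1}$ robots from their starting node to the node where they are stopped passes only through settled nodes (a free node would have caused one of them to settle), and the stopping node carries a strictly smaller $treelabel$, hence lies in a different sub-component adjacent (possibly transitively) to $SC_{U^{s1}_{min}}$. The paper phrases this directly via "the first time the robots were stopped" rather than by contradiction with maximality, and your extra care about the first edge leaving $SC_{U^{s1}_{min}}$ (and about $treelabel$s being overwritten by line 16) is a reasonable tightening of the same idea.
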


\begin{proof}
Let the unsettled robots in $U^{s1}$ begin from node $a$.
The unsettled robots of $U^{s1}$ stopped (line 12), and possibly moved again (line 15 or 19) only to be stopped again (line 12), $c$ times, where $|{\mathcal U}^{s1}| > c\geq 1$. 

Consider the first time the robots arriving along edge $(u,v)$ were stopped at some node $v$. $U^{s1}_{min} > r.treelabel$, where robot $r$ is settled at $v$. Henceforth till the end of stage 1, $r.treelabel$ is monotonically non-increasing, i.e., it may only decrease if a visitor arrives with a lower ID (line 16).
The path traced from $a$ to $u$ must have all settled nodes, each belonging to possibly more than one sub-component, i.e., possibly in addition to $SC_{U^{s1}_{min}}$, at the end of stage 1, which together form one or more adjacent sub-components. In any case, these sub-components are necessarily adjacent to the sub-component $SC_{\alpha}$, where $\alpha=r.treelabel$. Thus, at least two sub-components including $SC_{U^{s1}_{min}}$ and $SC_{\alpha}$ are (possibly transitively) adjacent and form part of a CCSN.

Extending this reasoning to each of the $c$ times the robots stopped, it follows that there are at least $c+1$ sub-components in the resulting CCSN.
(Additionally, 
(1) unsettled robots from the sub-component that stopped the unsettled robots of $U^{s1}$ for the $c$-th time may be (transitively) stopped by robots in yet other sub-components, 
(2) other groups of unsettled robots may (transitively or independently) be stopped at nodes in the above identified sub-components, 
(3) other sub-components corresponding to even lower $treelabel$s may join the already identified sub-components, 
(4) other sub-components may have a node which is adjacent to one of the nodes in an above-identified sub-component. 
This only results in more sub-components, each having distinct $treelabel$s (Definition~\ref{scx}) and corresponding to as many distinct $U$-sets (Theorem~\ref{onetoone}), being adjacent in the resulting CCSN.)
\end{proof}

\begin{theorem}\label{samecc}
For any $U^{s1}$ at $a$, its unsettled robots (if any) belong to a single $U^{e1}$ at $b$, where $a$ and $b$ belong to the same connected component of settled nodes (CCSN).
\end{theorem}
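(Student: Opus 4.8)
The plan is to read off Theorem~\ref{samecc} from Lemma~\ref{stageoneresult} together with a short ``the lead robot never leaves the settled region'' argument that reuses the connectivity facts already established in Lemmas~\ref{sc} and~\ref{scinccsn}. Fix the set $U^{s1}$ sitting at node $a$ at the start of stage~1. If $|U^{s1}|=1$ the lone robot gets settled (line~2, or lines~6--7 on its first free node), so there is nothing unsettled to account for and the statement holds vacuously; hence assume $|U^{s1}|\ge 2$, which ensures that $a$ carries a settled robot from the moment the DFS of $U^{s1}$ is launched, i.e., $a$ is a settled node (indeed $a\in SC_{U^{s1}_{min}}$ by the construction of stage~1).

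For the first assertion, invoke Lemma~\ref{stageoneresult}. In its case~(i), all of $U^{s1}$ is settled and there is nothing unsettled, so the claim is vacuous (take $b=a$). In its case~(ii), the still-unsettled robots of $U^{s1}$ --- which include the robot of ID $U^{s1}_{min}$ --- are all present together at one node $b$ whose settled robot $r$ has $r.treelabel < U^{s1}_{min}$; by the definition of a $U$-set, these robots form a subset of the single set $U^{e1}$ located at $b$ (which may additionally contain unsettled robots arising from other $U^{s1}$-sets, as the statement permits). For the second assertion, I would first record the invariant that within a pass the $settled$ flag is monotone non-decreasing (it is cleared only at the end of a pass, and the $parent,child$ reset of line~21 does not alter it), and then track the walk $a=v_0,v_1,\dots,v_t=b$ traced by the robot of ID $U^{s1}_{min}$ during stage~1. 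Every step of this walk is an edge of $G$; whenever this robot enters a free node, lines~6--7 force a visitor to settle there in that very round, and whenever it enters a non-free node that node is already settled; hence, by the invariant, each $v_i$ is settled at the end of stage~1. Since $v_0=a$, $v_t=b$, and consecutive $v_i,v_{i+1}$ are $G$-adjacent, Definition~\ref{ccsn} places $a$ and $b$ in a common CCSN. (Equivalently, this follows structurally: $SC_{U^{s1}_{min}}$ is a connected block of settled nodes containing $a$ by Lemma~\ref{sc}, and by Lemma~\ref{scinccsn} it is absorbed, in case~(ii), into a CCSN that also contains the sub-component of every node at which the $U^{s1}$-robots were successively stopped, the last of which is $b$.)

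The step I expect to demand the most care is verifying that the lead robot of ID $U^{s1}_{min}$ is never stranded at a node that stays unsettled through the end of stage~1: the only place it pauses is a node already carrying a settled robot of smaller $treelabel$ (line~12), and it never lingers on a free node because a free node is claimed by a settling robot in the very round it is first visited. Once this ``never stranded'' property and the monotonicity of the $settled$ flag are in place, the CCSN conclusion is just the adjacency of consecutive nodes on the lead robot's walk, and the first assertion is immediate from Lemma~\ref{stageoneresult}.
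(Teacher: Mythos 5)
Your proof is correct and follows essentially the same route as the paper's: Lemma~\ref{stageoneresult} gives that the surviving unsettled robots of $U^{s1}$ sit together in a single $U^{e1}$ at some node $b$, and the walk the lead robot traced from $a$ to $b$ contains no free node at the end of stage~1 (every node it visits either already has a settled robot or acquires one in that very round, and $settled$ is monotone within a pass), so $a$ and $b$ lie in a common CCSN. One small caveat: for $|U^{s1}|=1$ at the start of a pass $p\geq 2$ the lone robot is \emph{not} guaranteed to settle --- it can be stopped at a node whose settled robot has a smaller $treelabel$ (line 12), so that case is not vacuous --- but your walk argument applies to it verbatim, so nothing is lost.
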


\onlyLong{
\begin{proof}
From Lemma~\ref{stageoneresult}, it follows that the unsettled robots from $U^{s1}$ (at $a$) end up at a single node $b$ in the set $U^{e1}$. It follows that there must exist a path from $a$ to $b$ that these unsettled robots traversed. On this path, if there was a free node, a robot that belongs to $U^{s1}$ and $U^{e1}$ would have settled. Thus, there cannot exist such a free node. It follows that $a$ and $b$ belong to the same CCSN.
\end{proof}
}

Using the reasoning of Lemma~\ref{stageoneresult} and Corollary~\ref{calu}, if there are $s$ sub-components within a CCSN, there may be stopped (unsettled) robots at at most $s-1$ nodes. 
In stage 2, all such unsettled robots within a CCSN are collected at a single node within that component. 

\onlyLong{\subsubsection{Stage 2}}
\onlyShort{\vspace{1mm} \noindent{\bf Stage 2.}}
Stage 2 begins with each robot setting variable $mult$ 
to the count of robots at its node. The lowest ID unsettled robot $x$ at each node (having $mult>1$) concurrently initiates a DFS traversal of the CCSN after setting  $x.home$ to the ID of the settled robot $r$ and setting the $r.treelabel$ of the settled robot to its ID, $x.ID$. The DFS traversal is initiated by a single unsettled robot at a node rather than all unsettled robots at a node. 

In the DFS traversal of the CCSN, there are four possibilities, akin to those in stage 1. If a visited node is free (line 27), the robot ignores that node and backtracks. This is because neither the free node nor any paths via the free node need to be explored to complete a DFS traversal of the CCSN. 

If a visited node has a settled robot, the visiting robots may need to stop for two reasons. (i) Only the highest ``priority'' unsettled robot should be allowed to complete its DFS traversal while collecting all other unsettled robots. Other concurrently initiated DFS traversals for gathering unsettled robots should be stopped so that only some one traversal for gathering succeeds. (ii) With the limited memory of $O(\log n)$ at each robot, only one DFS traversal can be enabled at each settled robot $r$ in its $r.treelabel$, $r.parent$, and $r.child$. That is, the settled robot can record in its data structures, only the details for one DFS tree that is induced by one DFS traversal.
The decision to continue the DFS or stop is based, not by comparing $treelabel$ of the settled robot with the visiting robot ID, but by using a lexico-priority, defined next. 

\begin{definition}\label{lexicop}
The lexico-priority is defined by a tuple, $(mult,\,$ $ treelabel/ID)$. A higher value of $mult$ is a higher priority; if $mult$ is the same, a lower value of $treelabel$ or ID has the higher priority.
\end{definition} 

The lexico-priority of a settled robot $r$ that is visited, $(r.mult,\,$ $ r.treelabel)$, is compared with $(x.mult,x.ID)$ of the visiting robots $x$. The lexico-priority is a total order. There are three possibilities, as shown in lines (30), (32), and (35). 
\begin{itemize}
\item (line 30): Lexico-priority of $r$ $>$ lexico-priority of all visitors: All visiting robots stop (until ordered later to move) because they have a lower lexico-priority than $r$. The DFS traversal of the unsettled robot $x'$ corresponding to $x'.ID=r.treelabel$ kills the DFS traversal of the visitors.
\item (line 32): The visiting robot $x$ having the highest lexico-priority among the visiting robots, and having the same lexico-priority as $r$ continues the DFS traversal because it is part of the same DFS tree as $r$. $r$ updates $r.child$ if needed as per the logic of $DFS(k)$ described in Section~\ref{section:basic}. This DFS search of $x$ continues unless $x$ is back at its home node from where it began its search and all ports at the home node have been explored. As $x$ continues its DFS traversal, it takes along with it all unsettled robots at $r$.
\item (line 35): The visiting robot $x$ having the highest lexico-priority that is also higher than that of $r$ overrides the $treelabel$ and $mult$ of $r$. It kills the DFS traversal and corresponding DFS tree that $r$ is currently storing the data structures for. Robot $x$ includes $r$ in its own DFS traversal by setting $r.treelabel \leftarrow x.ID$ and $r.parent$ to the port from which $x$ entered this node; $r.child$ is set as per the logic of $DFS(k)$ described in Section~\ref{section:basic}. Robot $x$ continues its DFS traversal and all other unsettled robots follow it.
\end{itemize}

The reason we use the lexico-priority defined on the tuple rather than on just the $treelabel/ID$ is that the sub-component with the lowest $treelabel$ may have no unsettled robots, but yet some node(s) in it are adjacent to those in other sub-components, thus being part of the same CCSN. The nodes in the sub-component with the lowest $treelabel$ would then stop other traversing robots originating from other sub-components, but no robot from that sub-component would initiate the DFS traversal.

\begin{lemma}\label{complete}
Within a connected component of settled nodes (CCSN), let $x$ be the unsettled robot with the highest lexico-priority at the start of Stage 2.
\begin{enumerate}
    \item $x$ returns to its home node from where it begins the DFS traversal of the component, at the end of Stage 2.
    \item All settled nodes have the same lexico-priority as $x$ at the end of Stage 2.
\end{enumerate}
\end{lemma}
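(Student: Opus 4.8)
\textbf{Proof plan for Lemma~\ref{complete}.}

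The plan is to prove both parts together by tracking how the lexico-priority propagates through the CCSN during Stage 2, using the structure established in Lemmas~\ref{stageoneresult}--\ref{scinccsn} and Theorem~\ref{samecc}. First I would fix a CCSN and let $x$ be the unsettled robot of highest lexico-priority $(x.mult, x.ID)$ at the start of Stage 2, residing at its home node $h$ with settled robot $r_h$ (so $x.home = r_h.ID$ and $r_h.treelabel = x.ID$ is set at the start of Stage 2). The key invariant I would maintain is: at every round of Stage 2, the set of settled nodes whose current lexico-priority tuple $(mult, treelabel)$ equals $(x.mult, x.ID)$ forms a connected region containing $h$, and this region only grows. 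This holds because a settled robot $r$ changes its $treelabel$/$mult$ only via line~35, which requires a visitor of strictly higher lexico-priority; since $x$ has the globally highest lexico-priority in the CCSN, no visitor can ever override a node already stamped with $x$'s priority, whereas $x$'s own traversal (and the robots following it) can only add new nodes to this region. I would also need the observation that the lexico-priority is a total order (stated after Definition~\ref{lexicop}), so comparisons are unambiguous and ties occur only between robots genuinely in the same DFS tree.

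Next I would argue that $x$'s DFS traversal is never permanently stopped. The only way $x$ stops is at line~30, which requires a visited settled robot with strictly higher lexico-priority than $x$ --- impossible by choice of $x$. At a free node (line~27) $x$ merely backtracks, which is the correct DFS behavior for traversing the CCSN (and by Lemma~\ref{sc} and Definition~\ref{ccsn} every node of the CCSN is reachable from $h$ through settled nodes, so ignoring free nodes loses nothing). At a settled node whose priority is lower (line~35) or equal (line~32), $x$ continues. Hence $x$ performs an uninterrupted DFS traversal of the CCSN, which by the standard DFS argument (each tree edge traversed twice, each non-tree edge at most four times, bounded also by $2k\Delta$ since each settled node is entered at most $\Delta$ times per the two directions) completes within the $\min(4m-2n+2, 2k\Delta)$ rounds allotted, returning $x$ to $h$ with all ports at $h$ explored --- this is exactly the termination condition coded in line~32 ($x.home = r.ID$ and all ports explored). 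This proves part~(1).

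For part~(2), once $x$ has completed its traversal, every node of the CCSN has been visited by $x$. I would show that after $x$'s visit, a node $v$ carries lexico-priority $(x.mult, x.ID)$: when $x$ first reaches $v$, either $v$ already had that priority (by the growing-region invariant, if it was stamped by an earlier part of $x$'s own walk), or $v$ had strictly lower priority and line~35 fires, setting $r_v.treelabel \leftarrow x.ID$ and $r_v.mult \leftarrow x.mult$. In the second case I must confirm $v$'s priority is never subsequently lowered or changed again; but after $v$ is stamped with $x$'s priority, any later visitor $y \neq x$ has $y$'s priority $\le x$'s priority, and a change at line~35 requires a \emph{strictly} higher priority, so no change occurs, while line~32 (equal priority, i.e., $y$ in $x$'s own tree) leaves $treelabel$/$mult$ untouched. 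Therefore at the end of Stage 2 every settled node of the CCSN has lexico-priority exactly $(x.mult, x.ID)$, proving part~(2).

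The main obstacle I anticipate is handling the interleaving of the \emph{other} concurrently-initiated DFS traversals cleanly: I need to be sure that a lower-priority traversal cannot, before $x$'s traversal reaches a region, do something irreversible --- e.g., carry away unsettled robots or advance in a way that later blocks $x$. The resolution is that every action of a lower-priority traversal is either (a) immediately killed when it meets a node with $x$'s priority (line~30, visitors stop; or line~35, $x$'s trace overwrites it), or (b) harmless bookkeeping that $x$ overwrites upon arrival; and unsettled robots carried by a lower-priority traversal are never settled (there are no free nodes in the CCSN for them to settle at, by Theorem~\ref{samecc} and Lemma~\ref{scinccsn}), so they remain available to be collected by $x$. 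Making this "no irreversible harm" claim precise --- ideally via an invariant stating that every unsettled robot in the CCSN is, at all times, either following $x$ or sitting at a stopped node waiting --- is the technical heart of the argument, and I would state and prove it as a sub-invariant before concluding.
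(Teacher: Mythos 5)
Your proposal is correct and follows essentially the same route as the paper's proof: part (1) rests on the observation that $x$, having the highest lexico-priority, can never be stopped at line 30 and so hits line 35 on each first visit and line 32 on revisits, completing its DFS of the CCSN within the allotted $\min(4m-2n+2,2k\Delta)$ rounds; part (2) rests on the fact that once a settled robot is stamped with $x$'s priority at line 35, no later visitor can override it since line 35 requires strictly higher priority. Your additional machinery (the growing-region invariant and the "no irreversible harm" discussion about other unsettled robots) is sound but goes beyond what this lemma needs — the fate of the other unsettled robots is handled separately in the paper by Lemma~\ref{stagetworesult}.
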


\begin{proof}
(Part 1):
Robot $x$ encounters case in line (35) for the first visit to each node in its CCSN and includes that node in its own DFS traversal, and on subsequent visits to that node, encounters the case in line (32) and continues its DFS traversal. Within $\min(4m-2n+2,2k\Delta)$ steps, it can complete its DFS traversal of the CCSN and return to its home node. This is because it can visit all the nodes of the graph within $4m-2n+2$ steps. The robot can also visit the at most $k$ settled nodes in $2k\Delta$ steps; $k\Delta$ steps may be required in the worst case to visit the $k$ settled nodes in its CCSN and another at most $k\Delta$ steps to backtrack from adjacent visited nodes that are free. 

(Part 2): 
When $x$ visits a node with a settled robot $r$ for the first time (line 35), the lexico-priority of $r$ is changed to that of $x$ (line 36). Henceforth, if other unsettled robots $y$ visit $r$, $r$ will not change its lexico-priority (line 30) because its lexico-priority is now highest.
\end{proof}

Analogous to the stage 1 execution, unsettled robots beginning from different nodes may move and then stop (on reaching a higher lexico-priority $lp$ node), and then resume movement again (when visited by a robot with lexico-priority $lp$ or higher). This may happen up to $s-1$ times, where $s$ is the number of sub-components in the CCSN. We show that, despite the concurrently initiated DFS traversals and these concurrent movements of unsettled robots, they all gather at the end of stage 2, at the home node of the unsettled robot having the highest lexico-priority (in the CCSN) at the start of stage 2.

\begin{lemma}\label{stagetworesult}
Within a connected component of settled nodes (CCSN), let $x$ be the unsettled robot with the highest lexico-priority at the start of Stage 2. All the unsettled robots in the component at the start of the stage gather at the home node of $x$ at the end of the stage.
\end{lemma}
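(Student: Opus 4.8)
The plan is to combine Lemma~\ref{complete} (the highest-lexico-priority robot $x$ completes its DFS traversal of the CCSN and returns home, and every settled node in the CCSN ends up carrying $x$'s lexico-priority) with an argument, parallel to the proof of Lemma~\ref{stageoneresult}, that no unsettled robot in the CCSN can ``escape'' being swept up by $x$'s traversal. First I would set up notation: let $x$ have home node $h$ in the given CCSN, and let $Y$ be any $U^{s2}$-set of unsettled robots that, at the start of stage 2, sits at some settled node of the CCSN (other than those that immediately get absorbed). By Lemma~\ref{scinccsn} and Theorem~\ref{samecc}, every unsettled robot of the component lies on a settled node of the component at the start of stage 2, so it suffices to track each such $Y$.

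The core of the proof is a monotonicity/"cannot get separated" claim mirroring the stage-1 analysis. A group of unsettled robots either (i) is the group following $x$'s own DFS traversal, or (ii) has initiated its own DFS traversal but has strictly lower lexico-priority than $x$, or (iii) has been stopped at some settled node $r$ whose current lexico-priority strictly exceeds that of the group (line 30). In case (iii), I would argue, exactly as in Lemma~\ref{stageoneresult}: the stopped group's lowest-lexico-priority representative and all the rest of that group move together (they are co-located), so whenever any of them resumes (lines 32 or 35, triggered by a visitor of lexico-priority $\geq$ that of the stopping node $r$) they all resume together; and the lexico-priority of a settled node is monotonically non-decreasing over stage 2 (it only changes in line 36, and only to a strictly higher value). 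Hence such a group can be stopped and restarted at most $s-1$ times (once per distinct sub-component lexico-level it can encounter, since there are $s$ sub-components), and it never fragments. Next I would show that a group in case (ii)—one that initiated its own traversal—must eventually be stopped or absorbed: its traversal edges are overwritten the moment a strictly higher-lexico-priority traversal (ultimately $x$'s, by Lemma~\ref{complete} Part 2, which forces $x$'s lexico-priority onto every settled node) reaches one of the nodes it has claimed, after which it encounters line 30 and stops; so every group reduces to case (i) or case (iii)-with-stopping-node-of-lexico-priority-equal-to-$x$'s. Finally, because $x$ by Lemma~\ref{complete} visits every node of the CCSN, in particular it visits the node where each such stopped group rests; at that visit the group is co-located with $x$ (or with the forwarding stream following $x$) and, since $x$'s lexico-priority matches the (now maximal) lexico-priority of that node, line 32 applies and the group is taken along with $x$. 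Thus every unsettled robot of the CCSN is ultimately carried by $x$'s traversal, which by Lemma~\ref{complete} Part 1 terminates back at $h$; so all of them are gathered at $h$ at the end of stage 2.

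There are two subtleties I would be careful about. The first, and the main obstacle, is the interleaving/timing: $x$'s traversal and the other (lower-priority) traversals run concurrently, so I must rule out a scenario where a lower-priority group is shuttling around a part of the component that $x$ has not yet reached, and then that group and $x$ ``miss'' each other forever. The resolution is that $x$'s traversal is a genuine DFS of the whole CCSN (Lemma~\ref{complete}), so it visits every node at least once; and a lower-priority group that is still moving is itself confined to settled nodes of the CCSN and cannot leave it (a free-node step causes an immediate backtrack, line 27), so when $x$ first reaches the node currently holding that group—or, if the group is itself mid-traversal, when $x$'s traversal overwrites a node the group later revisits and thereby stops the group (line 30) and then later revisits that stopping node—the group is swept up. One needs the observation that once $x$ has passed through a node, that node permanently has $x$'s lexico-priority, so any not-yet-absorbed group wandering into it is stopped there and then collected on $x$'s return leg. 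The second subtlety is the edge case where the lowest-$treelabel$ sub-component contains no unsettled robot: this is exactly why the lexico-priority is keyed on $(mult, treelabel)$ rather than $treelabel$ alone, and it is already addressed in the remark preceding Lemma~\ref{complete}; I would just note that such a sub-component's nodes only ever serve to stop other groups and are themselves eventually relabelled by $x$. Putting these together gives the statement.
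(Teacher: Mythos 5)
Your proposal has the right high-level shape and, to your credit, explicitly names the central difficulty: a lower-priority group moving concurrently with $x$ might arrive at a node only after $x$ has finished with it. But you do not actually close that gap. The step that fails is the assertion that a group which wanders into a node already carrying $x$'s lexico-priority is "stopped there and then collected on $x$'s return leg." Nothing in Lemma~\ref{complete} guarantees that $x$ has a future visit to that node: in a DFS, $x$ makes a bounded number of visits to each node and then departs for good via the parent port, so if the group arrives after $x$'s last visit, there is no return leg and the group is stranded. Since this is exactly the "miss each other forever" scenario you set out to exclude, asserting the return leg is circular; the whole lemma reduces to proving that such a late arrival cannot happen, and your argument does not prove it.

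The paper closes this with a short but essential neighbor argument that your proposal is missing. Track a single unsettled robot $y$, let $v(t)$ be its location at time $t$, and let $\tau$ be the \emph{earliest} time at which $y$ sits at a node of the highest lexico-priority it ever encounters (which must equal that of $x$: it cannot exceed it by definition of $x$, and it cannot be lower because by Lemma~\ref{complete} every settled node carries $x$'s priority by the end of the stage). If $v(\tau)\neq x.home$, then at time $\tau-1$ the robot $y$ was at a strictly lower-priority neighbor $v(\tau-1)$; by Part 2 of Lemma~\ref{complete} (priorities are only ever raised to $x$'s on $x$'s first visit), $x$ has not yet visited $v(\tau-1)$ at time $\tau-1$, so $x$'s DFS will reach $v(\tau-1)$ after $\tau-1$ and, having to explore the port to its neighbor $v(\tau)$, will visit $v(\tau)$ at or after $\tau+1$. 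Hence $x$ does revisit the node where $y$ is stopped, and $y$ is carried to $x.home$. Your secondary claims (groups never fragment, node priorities are monotonically non-decreasing, at most $s-1$ stop/restart cycles, the $(mult, treelabel)$ tie-breaking for empty lowest-$treelabel$ sub-components) are all correct and consistent with the paper's discussion, but without the $v(\tau-1)$ argument the proof of the gathering claim is incomplete.
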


\begin{proof}
Let $y$ be any unsettled robot at the start of the stage. At time step $t$, let $y$ be at a node denoted by $v(t)$. Let $\tau$ be the earliest time step at which $y$ is at a node with the highest lexico-priority that it encounters in Stage 2. We have the following cases.
\begin{enumerate}
\item lexico-priority(settled robot at $v(\tau)$) $<$ lexico-priority($x$): We have a contradiction because at $t=\min(4m-2n+2,2k\Delta)$, settled robots at all nodes have lexico-priority that of $x$, which is highest.
\item lexico-priority(settled robot at $v(\tau)$) $>$ lexico-priority($x$): This contradicts the definition of $x$. 
\item lexico-priority(settled robot at $v(\tau)$) $=$ lexico-priority($x$).
\begin{enumerate}
\item $v(\tau) = x.home$: Robot $y$ will not move from $x.home$ (line 32) and the lemma stands proved.
\item $v(\tau) \neq x.home$: $y$ ends up at another node with lexico-priority that of $x$ at time step $\tau$. It will not move from node $v(\tau)$ unless robot $x$ visits $v(\tau)$ at or after $\tau$, in which case $y$ will accompany $x$ to $x.home$ and the lemma stands proved. 

We need to analyze the possibility that $x$ does not visit $v(\tau)$ at or after $\tau$. That is, the last visit by $x$ to $v(\tau)$ was before $\tau$. By definition of $\tau$, lexico-priority(settled robot at $v(\tau-1)$) $<$ lexico-priority(settled robot at $v(\tau)$) (= lexico-priority of $x$ in this case). By Lemma~\ref{complete}, $x$ is yet to visit $v(\tau-1)$, so the first visit of $x$ to $v(\tau-1)$ is after $\tau-1$. As $v(\tau-1)$ and $v(\tau)$ are neighbors and $x$ is doing a DFS, $x$ will visit $v(\tau)$ at or after $\tau+1$. This contradicts that the last visit by $x$ to $v(\tau)$ was before $\tau$ and therefore rules out the possibility that $x$ does not visit $v(\tau)$ at or after $\tau$.
\end{enumerate}
\end{enumerate}
\end{proof}

\onlyLong{\subsection{Correctness}\label{correct}}
\onlyShort{\noindent{\bf Correctness.}}
Having proved the properties of stage 1 and stage 2,
we now prove the correctness of the algorithm.

\begin{lemma}\label{half}
$u^{e2}_p = u^{s1}_{p+1} \leq \frac{1}{2}\cdot u^{s1}_p$
\end{lemma}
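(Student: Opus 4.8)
The plan is to count nodes with unsettled robots after stage 2 of pass $p$ by grouping them according to which connected component of settled nodes (CCSN) they lie in at the end of stage 1. First I would recall from Theorem~\ref{samecc} that each $U^{s1}$-set that is not fully settled sends its unsettled robots to a single node inside the CCSN it belongs to, and from Lemma~\ref{stagetworesult} that within each CCSN all the unsettled robots present at the start of stage 2 are gathered at exactly one node by the end of stage 2. Hence $u^{e2}_p$ equals the number of CCSNs (at the end of stage 1) that contain at least one unsettled robot.

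Next I would bound that number of CCSNs. By Theorem~\ref{onetoone} the sub-components at the end of stage 1 are in one-to-one correspondence with the sets ${\mathcal U}^{s1}$, so there are exactly $u^{s1}_p$ sub-components in total across the whole graph. By Lemma~\ref{scinccsn}, whenever a $U^{s1}$-set is not fully settled (equivalently, whenever its unsettled robots survive into stage 2), the sub-component $SC_{U^{s1}_{min}}$ lies in a CCSN that contains at least two distinct sub-components. Therefore every CCSN that carries unsettled robots into stage 2 "consumes" at least two of the $u^{s1}_p$ sub-components, and distinct such CCSNs consume disjoint sets of sub-components (a sub-component, being a maximal set of settled nodes with a fixed $treelabel$ that is connected through settled nodes, lies in a unique CCSN). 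Consequently the number of CCSNs with unsettled robots is at most $\lfloor u^{s1}_p / 2\rfloor \le \tfrac12 u^{s1}_p$, which combined with the previous paragraph gives $u^{e2}_p \le \tfrac12 u^{s1}_p$. Finally, since pass $p+1$ starts from the configuration at the end of stage 2 of pass $p$, we have $u^{s1}_{p+1} = u^{e2}_p$, completing the chain.

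The main obstacle I expect is making the "disjointness" argument airtight: I must be careful that the sub-components counted against one CCSN are genuinely not counted again for another, and that a CCSN with unsettled robots really does contain two \emph{distinct} sub-components rather than one sub-component counted twice. This is exactly what Lemma~\ref{scinccsn} delivers (it produces $SC_{U^{s1}_{min}}$ together with a strictly different $SC_{\alpha}$, $\alpha = r.treelabel < U^{s1}_{min}$, in the same CCSN), and Definition~\ref{scx} together with Lemma~\ref{sc} guarantees each sub-component sits in exactly one CCSN; so the bookkeeping goes through. A minor point to check is the degenerate case where a CCSN consists of a single sub-component: by Lemma~\ref{scinccsn} such a CCSN cannot carry unsettled robots into stage 2, so it contributes $0$ to $u^{e2}_p$ and is harmless.
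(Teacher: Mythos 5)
Your proposal is correct and takes essentially the same route as the paper's proof: both arguments rest on the bijection of Theorem~\ref{onetoone}, the fact from Lemma~\ref{scinccsn} that any CCSN still carrying unsettled robots contains at least two distinct sub-components, and Theorem~\ref{samecc} together with Lemma~\ref{stagetworesult} to conclude that each such CCSN yields exactly one $U^{e2}$-set. Your version merely packages the paper's case analysis as a cleaner explicit counting argument (each bad CCSN ``consumes'' two of the $u^{s1}_p$ sub-components, disjointly), which is a fair restatement rather than a different proof.
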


\begin{proof}
From Lemma~\ref{stageoneresult}, for any $U^{s1}$ at the end of stage 1, (i) a set of unsettled robots $U^{s1}$ is fully dispersed, or (ii) a subset of $U^{s1}$ of unsettled robots is stopped and present together at at most one node with a settled robot $r$ such that $r.treelabel<U^{s1}_{min}$.

In case (i), there are two possibilities. 
(i.a) There is no group of unsettled robots stopped at nodes in the CCSN where the robots of $U$ have settled. In this case, this $U^{s1}$-set does not have its robots in any $U^{e1}$-set. 
(i.b) $z (\geq 1)$ groups of unsettled robots are stopped at nodes in the CCSN where the robots of $U$ have settled. These groups correspond to at least $z+1$ unique $U$-sets and at least $z+1$ sub-components that form a CCSN (by using reasoning similar to that in the proof of Lemma~\ref{scinccsn}). 
In case (ii), at least two sub-components, each having distinct $treelabel$s and corresponding to as many distinct $U$-sets (Theorem~\ref{onetoone}), are adjacent in the CCSN (Lemma~\ref{scinccsn}). 

From Lemma~\ref{stageoneresult}, we also have that any $U^{s1}$-set cannot have unsettled robots in more than one $U^{e1}$.
Each robot in each $U^{s1}$-set in the CCSN, that remains unsettled at the end of stage 1, belongs to some $U^{e1}$-set that also belongs to the {\em same} CCSN (Theorem~\ref{samecc}). 
From Lemma~\ref{stagetworesult} for stage 2, all the unsettled robots in these $U^{e1}$-sets in the CCSN, are gathered at one node in that CCSN. 
Thus, each unsettled robot from each $U^{s1}$-set in the same CCSN is collected at a single node as a $U^{e2}$-set in the same CCSN. 
Thus, in cases (i-b) and (ii) above, {\em two or more} sub-components, each corresponding to a distinct $treelabel$ and a distinct $U^{s1}$-set (Theorem~\ref{onetoone}), combine into a single CCSN (Lemma~\ref{scinccsn}) and in stage 2, there is a single node with unsettled robots from all the $U^{s1}$-sets belonging to the same CCSN, i.e., a single $U^{e2}$-set, or a single $U^{s1}$-set for the next round. 
Note that each sub-component $SC_{\alpha}$ is a connected sub-component (Lemma~\ref{sc}) and hence belongs to the same CCSN; thus when sub-components merge, i.e., their corresponding $U^{s1}$-sets merge, and we have a single $U^{e2}$-set in the CCSN, there is no double-counting of the same $SC_{\alpha}$ and of its corresponding $U^{s1}$-set in different CCSNs. 
Thus, $u^{e2}_{p}$ (= $u^{s1}_{p+1}$), the number of $U$-sets after stage 2, is 
$\leq \frac{1}{2}\cdot u^{s1}_{p}$, where $u^{s1}_{p}$ is the number of $U$-sets before stage 1. 
\end{proof}

\begin{theorem}\label{displogk}
{\dis} is solved in $\log k$ passes in Algorithm~\ref{algo1}. 
\end{theorem}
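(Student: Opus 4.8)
The plan is to convert the per-pass halving estimate of Lemma~\ref{half} into a termination statement, using a base-case bound on the number of $U$-sets before the first pass together with a check that ``no $U$-set left'' means exactly ``dispersion achieved''.

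First I would establish the base case. Before stage~1 of pass~$1$, lines~1--2 of Algorithm~\ref{algo1} settle every robot that is alone on its node; hence at the start of stage~1 of pass~$1$ every node still carrying an unsettled robot carries at least two of them, and since distinct $U$-sets occupy distinct nodes and hold disjoint robots, $u^{s1}_1 \le \lfloor k/2\rfloor \le k/2$. This factor of $2$ is precisely what makes the final count $\log k$ rather than $\log k + 1$ passes.

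Then I would iterate Lemma~\ref{half}: it gives $u^{s1}_{p+1} = u^{e2}_p \le \tfrac{1}{2}\, u^{s1}_p$ for every pass~$p$, so by induction $u^{s1}_{p} \le u^{s1}_1/2^{p-1} \le k/2^{p}$ for all $p \ge 1$. Applying this with $p = \log k$ (understood as $\lceil \log_2 k\rceil$) and then invoking Lemma~\ref{half} once more for the last pass, $u^{e2}_{\log k} = u^{s1}_{\log k + 1} \le \tfrac{1}{2}\, u^{s1}_{\log k} \le \tfrac{1}{2} < 1$, so the non-negative integer $u^{e2}_{\log k}$ equals $0$: no unsettled robot remains at the end of pass~$\log k$. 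Since no line of Algorithm~\ref{algo1} ever resets a $settled$ flag back to $0$, this configuration persists, so all $k$ robots occupy distinct nodes and {\dis} is solved in $\log k$ passes.

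The only place that needs genuine care — which I would flag as the main obstacle — is the behaviour in the last pass when exactly one $U$-set survives ($u^{s1}_{\log k} = 1$): one must be sure that this lone group is genuinely dispersed in stage~1 and not merely ``halved'' vacuously. This follows from the remark after Lemma~\ref{stageoneresult}: the surviving $U$-set attains ${\mathcal U}^{s1}_{min}$, and because every settled robot's $treelabel$ was reset at the end of the previous pass and stage~1 only ever writes $treelabel = U^{s1}_{min}$ (or leaves the larger reset value untouched), this group's DFS is never stopped by a settled robot carrying a smaller $treelabel$; hence possibility~(i) of Lemma~\ref{stageoneresult} applies and every robot of the set settles. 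This is also exactly what keeps Lemma~\ref{half}'s bound non-vacuous on the final pass.
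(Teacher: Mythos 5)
Your proposal is correct and follows essentially the same route as the paper: bound $u^{s1}_1 \leq k/2$, iterate Lemma~\ref{half} to reduce to a single $U$-set within $\log k - 1$ passes, and then argue that this last group fully disperses in stage~1 of the final pass because all $treelabel$s were reset to $\top$ (so case~(i) of Lemma~\ref{stageoneresult} applies). The point you flag as needing genuine care is exactly the point the paper's proof makes explicit in its parenthetical remark.
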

\begin{proof}
$u^{s1}_1 \leq k/2$. From Lemma~\ref{half}, it will take at most $\log k - 1$ passes for there to be a single $U$-set. In the first stage of the $\log k$-th pass, there will be a single $U$-set. By Lemma~\ref{stageoneresult}, case (i) holds and all robots in the $U$-set get settled. (Case (ii) will not hold because there is no node with a $treelabel$ $< U_{min}$ as all $treelabel$s of settled nodes are reset to $\top$ (the highest value) at the end of stage 2 of the previous pass and all singleton robots before the first pass settle with $treelabel=\top$ (line 2)). Thus, {\dis} will be achieved by the end of stage 1 of pass $k$.
\end{proof}

Note that the DFS traversal of stage 2 is independent of the DFS traversal of stage 1 within a pass (but the $treelabel$s are not erased), and the DFS traversal of stage 1 of the next pass is independent of the DFS traversal of stage 2 of the current pass. 

\vspace{1mm}
\noindent{\bf Proof of Theorem~\ref{theorem:0}:}
Theorem~\ref{displogk} proved that {\dis} is achieved by 
Algorithm~\ref{algo1}.
The time complexity is evident due to the two loops of $O(\min(4m-2n+2,2\Delta k))$ for the two stages nested within the outer loop of $O(\log k)$ passes. 
The space complexity is evident from the size of the variables: $treelabel$ ($\log k$ bits), $parent$ ($\log \Delta$ bits), $child$ ($\log \Delta$ bits), $settled$ (1 bit), $mult$ ($\log k$ bits), $home$ ($\log k$ bits), $pass$ ($\log\log k$ bits), $round$ ($O(\log n)$ bits to maintain the value $O(\min(m,k\Delta)$ for each pass) defined in Table~\ref{table:notations}. \qed

\onlyLong{
Therefore, we have the following corollary to Theorem \ref{theorem:0}.
\begin{corollary}
Given $k\leq n$ robots in an $n$-node arbitrary graph $G$ with maximum degree $\Delta\leq k$, Algorithm $Graph\_Disperse(k)$ solves {\dis} in $O(\min(m,k^2)\cdot \log k)$ rounds with  $O(\log k)$ bits at each robot.
\end{corollary}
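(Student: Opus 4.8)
The plan is to obtain this as an immediate consequence of Theorem~\ref{theorem:0}: the algorithm is unchanged (it is exactly the $Graph\_Disperse(k)$ of Algorithm~\ref{algo1}, whose correctness is already established via Theorem~\ref{displogk} and Lemma~\ref{half}), and the hypothesis $\Delta \le k$ is used only to simplify the two bounds of that theorem. For the running time, I would first note that $\Delta \le k$ forces $k\Delta \le k^2$, and hence $\min(m,k\Delta) \le \min(m,k^2)$; this is a one-line check over the cases $m \le k\Delta$, $k\Delta < m \le k^2$, and $m > k^2$. Substituting into the $O(\min(m,k\Delta)\cdot \log k)$ bound of Theorem~\ref{theorem:0} yields the claimed $O(\min(m,k^2)\cdot\log k)$ rounds.

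For the memory, I would revisit the per-variable accounting used in the proof of Theorem~\ref{theorem:0}. The variables $treelabel$, $mult$, and $home$ already cost $O(\log k)$ bits; $settled$ costs $O(1)$; $pass$ costs $O(\log\log k)$; and $parent$, $child$ cost $O(\log\Delta)$ bits, which is $O(\log k)$ \emph{precisely because} $\Delta \le k$. The only term that was stated as $O(\log n)$ rather than $O(\log k)$ in Theorem~\ref{theorem:0} is the $round$ counter, which ranges up to the per-stage bound $\min(4m-2n+2,\,2k\Delta)$. Since $\min(4m-2n+2,\,2k\Delta) \le 2k\Delta \le 2k^2$, this counter never exceeds $2k^2$ under the hypothesis, so $\lceil \log(2k^2)\rceil = O(\log k)$ bits suffice for it. Summing over all variables gives $O(\log k)$ bits per robot.

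The one point that needs a word of care — and the closest thing to an obstacle here — is the treatment of ``knowing $m$'': a robot should not keep $m$ and $n$ stored verbatim, since that alone would be $\Theta(\log n)$ bits. It is enough for it to compute the per-stage threshold $\min(4m-2n+2,\,2k\Delta)$ once and store only that value (or simply to use the cap $2k\Delta$); either way, because $\Delta \le k$, the stored quantity is at most $2k^2$ and fits in $O(\log k)$ bits, matching the footnote-style observation already made for the general theorem. Apart from this bookkeeping remark, the corollary follows directly from Theorem~\ref{theorem:0}, and I anticipate no genuine difficulty.
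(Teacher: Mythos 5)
Your proposal is correct and matches the paper's (implicit) treatment: the corollary is stated there without proof, as an immediate consequence of Theorem~\ref{theorem:0} obtained by substituting $\Delta \le k$ into both the time and memory bounds. Your additional care in checking that the $round$ counter (bounded by $\min(4m-2n+2,\,2k\Delta) \le 2k^2$) and the stored per-stage threshold fit in $O(\log k)$ bits under the hypothesis $\Delta \le k$ addresses a bookkeeping point the paper glosses over, and you handle it correctly.
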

}
\onlyLong{
Therefore, we have the following corollary to Theorem \ref{theorem:0}.

\begin{corollary}
Given $k\leq n$ robots in an $n$-node arbitrary graph $G$ with maximum degree $\Delta= O(1)$, algorithm $Graph\_Disperse(k)$ solves {\dis} in $O(\min(m,k)\cdot \log k)$ rounds with  $O(\log k)$ bits at each robot.
\end{corollary}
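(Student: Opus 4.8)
The plan is to obtain this statement directly from Theorem~\ref{theorem:0} by substituting $\Delta = O(1)$ into both the time and the memory bounds, with one small additional observation needed for the memory claim. Correctness is inherited verbatim from Algorithm~\ref{algo1} (Theorem~\ref{displogk}), so nothing new is required there; only the two complexity quantities must be re-estimated in the constant-degree regime.

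For the running time: when $\Delta = \Theta(1)$ we have $k\Delta = \Theta(k)$, and hence $\min(m, k\Delta) = \Theta(\min(m,k))$ (using $\min(m, ck) \le c\cdot\min(m,k)$ for any constant $c \ge 1$). Plugging this into the $O(\min(m,k\Delta)\cdot \log k)$ bound of Theorem~\ref{theorem:0} immediately gives the claimed $O(\min(m,k)\cdot\log k)$ rounds.

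For the memory: the $O(\log n)$ bound in Theorem~\ref{theorem:0} is driven by two kinds of variables --- $parent$ and $child$, which occupy $O(\log\Delta)$ bits, and the $round$ counter, which must hold values up to the per-stage loop bound $\min(4m-2n+2, 2k\Delta)$. With $\Delta = O(1)$ the first kind shrinks to $O(1)$ bits. For the counter, observe that a graph of maximum degree $O(1)$ has $m = O(n)$, so $4m-2n+2 = O(n)$; combining with $2k\Delta = O(k)$ and $k \le n$ gives $\min(4m-2n+2, 2k\Delta) = O(\min(n,k)) = O(k)$, so $round$ needs only $O(\log k)$ bits. The remaining variables are already $O(\log k)$ bits irrespective of $\Delta$: $treelabel$, $mult$, and $home$ are $O(\log k)$, $pass$ is $O(\log\log k)$, $settled$ is one bit, and a robot's ID is $\lceil \log k\rceil$ bits. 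Summing over all variables yields $O(\log k)$ bits per robot.

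The only step that takes a moment of thought is the memory argument, specifically recognizing that the $round$ counter no longer needs $\Theta(\log n)$ bits. This is exactly where the $\min$ in the per-stage loop bound is essential: it is the factor $\min(\cdot,\, 2k\Delta)$, together with $k \le n$, that caps the counter at $O(k)$ rather than letting it grow with $m = \Theta(n)$. Everything else is routine substitution of $\Delta = O(1)$ into the bounds of Theorem~\ref{theorem:0}.
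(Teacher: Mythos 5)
Your proof is correct and follows the same route the paper intends: the corollary is stated without proof as an immediate consequence of Theorem~\ref{theorem:0}, obtained by substituting $\Delta=O(1)$. Your memory accounting supplies the one observation that actually needs to be made explicit --- that the $round$ counter only needs to reach $\min(4m-2n+2,\,2k\Delta)=O(k)$ when $\Delta=O(1)$, so the $O(\log n)$ term in the theorem drops to $O(\log k)$ --- and it matches the variable-size breakdown in the paper's proof of Theorem~\ref{theorem:0}.
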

}

\section{Algorithm for Grid Graphs} 
\label{section:grid}
We present and analyze an algorithm, $Grid\_Disperse(k)$, that solves {\dis} for $k\leq n$ robots in $n$-node grid graphs in $O(\min(k,\sqrt{n}))$ time with $O(\log k)$ bits at each robot.
\onlyLong{$Grid\_Disperse(k)$ is the first algorithm for {\dis} in grid graphs and is optimal with respect to both memory and time for $k=\Omega(n)$. 
}
We first discuss algorithm $Grid\_Disperse(k),k=\Omega(n),$ for square grid graphs in Section \ref{subsection:grid-disperse-n}; $Grid\_Disperse(k), k<\Omega(n),$ is discussed in Section \ref{subsection:grid-disperse-k}. We finally describe $Grid\_Disperse(k), k\leq n$, for rectangular grid graphs \onlyShort{(deferred to full version \cite{arxiv-full} due to space constraints)}\onlyLong{in Section \ref{subsection:rectangular-grid-disperse-k}}. 

\onlyLong{
We define some terminology. 
For a grid graph $G$, the nodes on 2 boundary rows and 2 boundary columns are called {\em boundary nodes} and the 4 corner nodes  on boundary rows (or columns) are called {\em boundary corner nodes}. In an $n$-node square grid graph $G$, there are exactly $4\sqrt{n}-4$ boundary nodes.
}

\onlyLong{
\subsection{High Level Overview of the Algorithm} 
}
$Grid\_Disperse(k), k=\Omega(n),$ for square grid graphs has five stages, Stage 1 to Stage 5, which execute sequentially one after another. The goal in Stage 1 is to move all the robots on $C_{init}$ to position them on the boundary nodes of $G$.  The goal in Stage 2 is to move the robots on the boundary nodes of $G$ to the four boundary corner nodes of $G$. The goal in Stage 3 is to collect all $n$ robots at one corner node of $G$.
The goal in Stage 4 is to distribute robots on the nodes of one  boundary row or column of $G$.
The goal in Stage 5 is to distribute the robots on a boundary row or column in Stage 4 so that each node of $G$ has exactly one robot positioned on it. 
We will show that Stages 1--5 can be performed correctly solving {\dis} in $O(\sqrt{n})$ rounds. 
Algorithm $Grid\_Disperse(k),k<\Omega(n),$ uses the stages of $Grid\_Disperse(n)$ described above modified appropriately to handle any $k<\Omega(n)$. Particularly, $Grid\_Disperse(k)$ differentiates the cases of $\sqrt{n}\leq k<\Omega(n)$ and $k<\sqrt{n}$ and handles them through separate algorithms. We then extend all these ideas for solving {\dis} in  rectangular grid  graphs.

There are several challenges to overcome in order to execute these stages successfully in $O(\min(k,\sqrt{n}))$ rounds. The first challenge is to execute Stage 1 since robots do not have access to a consistent compass to determine which direction to follow to reach boundary nodes of $G$. 
The second challenge is to execute Stages 2-4 by moving the robots only on the boundary nodes. The third challenge is on how to move the robots in Stage 5 to disperse to all the nodes of $G$ having only one robot at each node of $G$. 
We devise techniques to overcome all these challenges which needs significant depart from the techniques based on the DFS traversal used in \cite{Augustine2018,Kshemkalyani} and described in Section \ref{section:algorithm0}. Notice that through a DFS traversal only the time bound of $O(m)=O(n)$ can be guaranteed for grid graphs (refer Table \ref{table:comparision-grid}). 

\onlyLong{
\begin{figure*}[!t] 
\centering
\includegraphics[height=2.6in,width=5.1in]{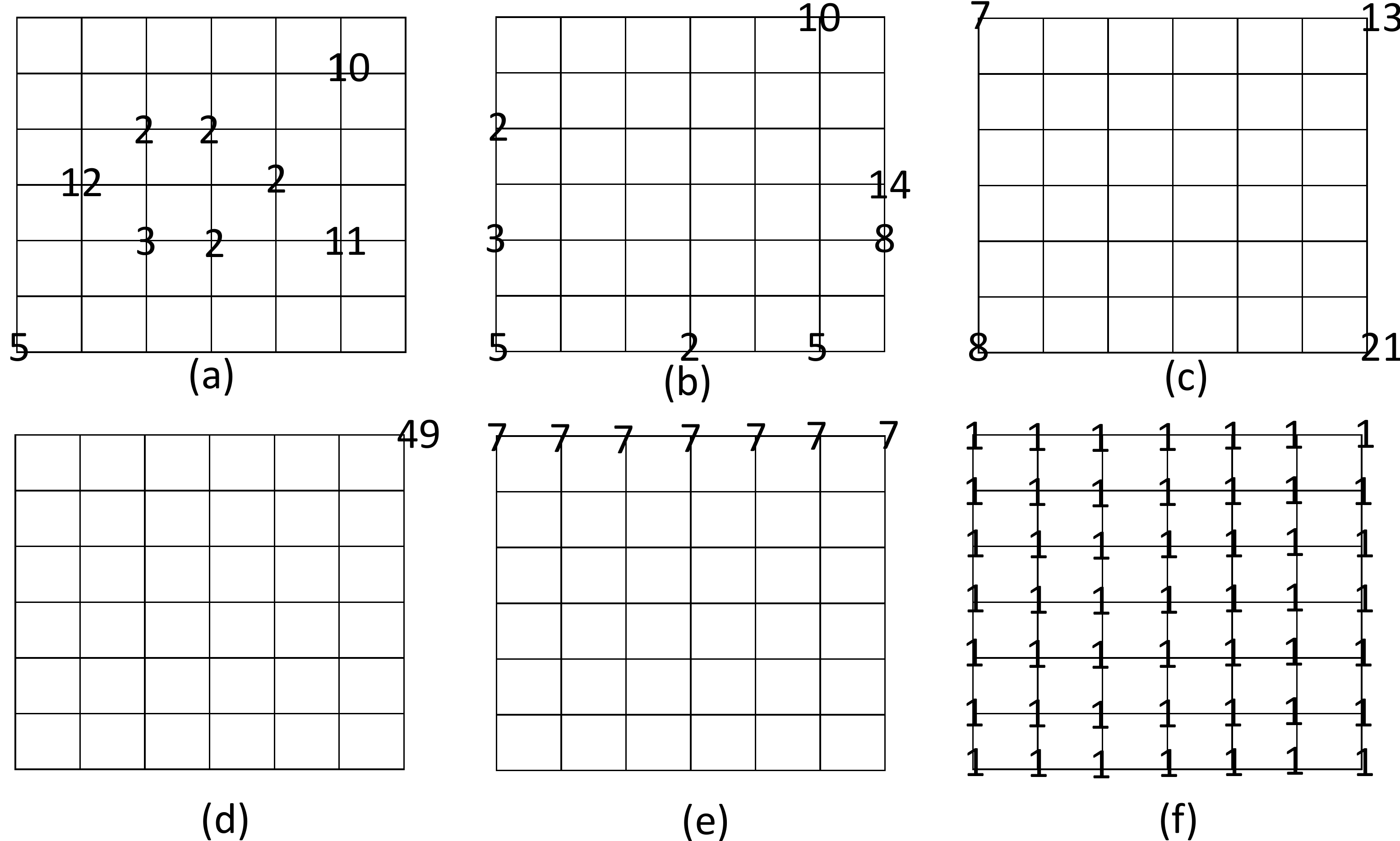}
\vspace{-2mm}
\caption{An illustration of the five stages of algorithm $Grid\_Disperse(n)$ for $n=k=49$: ({\bf a}) An initial configuration $C_{init}$, ({\bf b}) Stage 1 that moves robots to boundary nodes of $G$, ({\bf c}) Stage 2 that moves robots to four boundary corner nodes of $G$, ({\bf d}) Stage 3 that moves the robots to one boundary corner node of $G$, ({\bf e}) Stage 4 that distributes robots equally in a row (or a column) with each node having $\sqrt{n}$ robots, and ({\bf f}) Stage 5 that distributes robots so that each node of $G$ having exactly one robot each. The numbers on the grid nodes denote the number of robots positioned at that node. 
}
\label{fig:algorithm4}
\vspace{-2mm}
\end{figure*}
}

\onlyShort{\vspace{-2mm}}
\subsection{Algorithm for Square Grid Graphs, $k=\Omega(n)$}
\label{subsection:grid-disperse-n}
\onlyLong{
We  describe here in detail how Stages 1--5 of $Grid\_Disperse(k),k=\Omega(n),$ are executed for square grid graphs. 
Fig.~\ref{fig:algorithm4} illustrates the working principle of $Grid\_Disperse(k)$ for $n=k=49$.} 

Each robot $r_i\in \cR$ stores five variables $r_i.round$ (initially 0), $r_i.stage$ (values 1 to 5, initially $null$), $r_i.port\_entered$ (values 0 to 3, initially $-1$), $r_i.port\_exited$ (values 0 to 3, initially $-1$), and $r_i.settled$ (values 0 and 1, initially $0$). We do not discuss how $r_i$ sets variable $r_i.round$. 
We assume that in each round $r_i$ updates its value as $r_i.round\leftarrow r_i.round+1$. 
Moreover, for simplicity, we denote the rounds of each stage by $\alpha.\beta$, where $\alpha\in \{1,2,3,4,5\}$ denotes the stage and $\beta$ denotes the round within the stage. Therefore, the first round $(\alpha+1).1$ for Stage $\alpha+1$ is the next round after the last round of Stage $\alpha$.

\onlyLong{\subsubsection{Stage 1}}

\onlyShort{\vspace{1mm} \noindent {\bf Stage 1.}} 
The goal in Stage 1 is to reposition $k=\Omega(n)$ robots in $C_{init}$ to at most $4\sqrt{n}-4$ boundary nodes of $G$. 
In Stage 1, a robot $r_i\in \cR$ at any node $v\in G$ moves as follows. In round 1.1, it writes $r_i.stage\leftarrow 1$ (to denote Stage 1).
If $r_i$ is already on a boundary of $G$ (i.e., $v$ is a boundary node), it does nothing in Stage 1. Otherwise, 
in round 1.1, $r_i$ 
picks randomly one of the four ports of $v$ and exits $v$. 
In the beginning of round 1.2, it reaches a neighbor node, say $w$, of $v$.
Let $p_w$ be the port of $w$ from which $r_i$ entered $w$. 
It assigns $p_w$ to $r_i.port\_entered$, i.e.,  $r_i.port\_entered\leftarrow p_w$.  
It then orders the three remaining ports (except $p_w$) in the clockwise order (the counterclockwise order also works) starting from $p_w$, 
picks the second port  in the order starting from $p_w$, and exits $w$ using that port.  
In the beginning of round 1.3, $r_i$ reaches a neighbor node, say $u$, of $w$.
In round 1.3 and after until a boundary node is reached, $r_i$ continues similarly as in round 1.2. 
\onlyLong{Variable $r_i.port\_exited$ is not used in Stage 1.}
%

\begin{lemma}
\label{lemma:stage1}
At the end of Stage 1, all $k=\Omega(n)$ robots in $C_{init}$ are positioned on at most $4\sqrt{n}-4$ boundary nodes of $G$. Stage 1 finishes in $\sqrt{n}-1$ rounds.
\end{lemma}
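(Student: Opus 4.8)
The plan is to analyze the straight-line trajectory that each robot follows once it leaves its starting node, and show two things: (i) the trajectory is in fact a straight line in the grid (in one of the four axis directions), and (ii) any straight line inside a $\sqrt{n}\times\sqrt{n}$ grid hits a boundary node within $\sqrt{n}-1$ steps.

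\textbf{Key steps.} First I would set up coordinates: identify $G$ with the lattice points $\{0,1,\dots,\sqrt{n}-1\}^2$, with boundary nodes being those having some coordinate in $\{0,\sqrt{n}-1\}$. A robot $r_i$ already on the boundary does nothing, so assume $r_i$ starts at an interior node $v$ with both coordinates in $\{1,\dots,\sqrt{n}-2\}$. In round $1.1$ it picks some port and moves to a neighbour $w$; this fixes a direction, say ``north''. The crucial observation is the rule used from round $1.2$ onward: on entering a node through port $p_w$, the robot orders the remaining three ports clockwise starting at $p_w$ and takes the \emph{second} one. I would verify by a direct case check (using that the grid's local port structure is geometrically consistent at interior nodes for the purposes of this rule --- this is where the analysis must be careful, since ports are only locally labelled) that ``enter through the port pointing back the way you came, leave through the second port clockwise'' is exactly ``keep going straight''. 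Hence the robot continues in the same compass direction every round: its position after $t\ge 1$ rounds is $v + t\cdot e$ for a fixed unit direction $e$.

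Next I would bound the number of steps. Starting from an interior node, each coordinate lies in $[1,\sqrt{n}-2]$, so moving in direction $e$ the relevant coordinate changes monotonically and reaches $0$ or $\sqrt{n}-1$ after at most $\sqrt{n}-2$ further moves; including the first move that leaves $v$, a boundary node is reached within $\sqrt{n}-1$ rounds. (If $v$ is at distance $1$ from the boundary and the robot happens to move directly toward it, it arrives in $1$ round; the worst case is a robot adjacent to one boundary moving toward the opposite boundary, giving exactly $\sqrt{n}-1$.) A robot cannot ``miss'' the boundary because the trajectory is a straight axis-parallel line and the grid is finite; once it is on a boundary node it stops. Since all robots move simultaneously and independently, after $\sqrt{n}-1$ rounds every robot is on a boundary node, and there are only $4\sqrt{n}-4$ such nodes, giving the count in the statement.

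\textbf{Main obstacle.} The delicate point is justifying that the ``second port clockwise from the entry port'' rule really implements straight-line motion on an anonymous, locally-port-labelled grid: a priori the port labels at different nodes need not be globally consistent, so I must argue that for \emph{this specific} rule --- which only uses the cyclic (clockwise) order of the four ports at a node, not their absolute labels --- the outgoing edge is always the geometric continuation of the incoming edge. This follows because in a grid the four edges at an interior node are genuinely arranged as N/E/S/W in cyclic order, so ``two steps clockwise from the reverse of your incoming direction'' is ``the same direction you were going''; I would state this as the key lemma and verify it by the four-case check on the incoming direction. Everything after that (monotonicity of one coordinate, the $\sqrt{n}-1$ bound, the $4\sqrt{n}-4$ count) is routine.
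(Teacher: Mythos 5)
Your proposal is correct and follows essentially the same route as the paper's proof: show that the ``second port clockwise from the entry port'' rule always selects the port opposite the entry port, so each robot travels along a fixed row or column in a fixed direction, and then bound the distance to the boundary by $\sqrt{n}-1$. Your explicit coordinate setup and your flagging of the port-consistency issue (that the rule relies only on the cyclic order of the four ports at an interior node) make the same argument slightly more careful than the paper's, but do not constitute a different approach.
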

\onlyLong{
\begin{proof}
Consider any robot $r_i\in \cR$. 
If $r_i$ on a boundary node of $G$ in $C_{init}$, this lemma is immediate since $r_i$ does not move in Stage 1. 
Therefore, we only need to prove that this lemma holds  for $r_i$ even when it is on a non-boundary node (say $v$) in $C_{init}$.
Let the four ports of $v$ be $p_{v1},p_{v2},p_{v3},$ and $p_{v4}$.
Suppose $r_i$ exits $v$ using $p_{v1}$ in round 1.1 and reaches node $w$ in the beginning of round 1.2. If $w$ is a boundary node, we are done.
If not,  let $L$ be the row or column of $G$ passing through nodes $w$ and $v$.
Let $L_{\overrightarrow{vw}}$ denotes one direction of $L$ starting from $v$ and going toward $w$ (the other direction be  $L_{\overrightarrow{wv}}$). 
It remains to show that in round 1.2 and after, $r_i$ always moves on the nodes on $L$ in direction $L_{\overrightarrow{vw}}$.

Let $p_{w1}$ be the port at $w$ from which $r_i$ entered $w$ in round 1.1. The three remaining ports at $w$ are $p_{w2},p_{w3},$ and $p_{w4}$. 
Since $r_i$ picks second port in the clockwise (or counterclockwise) order in round 1.2 and after, the port $r_i$ picks at $w$ is always opposite port of port $p_{w1}$ that it used to enter $w$ from $v$ in round 1.1. Therefore, in the beginning of round 1.3, $r_i$ reaches a neighbor node of $w$ on $L_{\overrightarrow{vw}}$ (opposite of $v$ on $L$). This makes $r_i$ move in the same row or column of $G$ in each subsequent move until it reaches a boundary node. 

We now prove the time bound. Since $G$ is a square, we have exactly $\sqrt{n}$ nodes in each row or column $L$. Furthermore, all the robots in $\cR$ move in every round. Therefore, $r_i$ reaches a boundary node in at most $\sqrt{n}-1$ rounds because any robot that is not already in the boundary will be at most $\sqrt{n}-1$ distance away from the boundary nodes of $G$ in $C_{init}$. 
\end{proof}
}

\onlyLong{\subsubsection{Stage 2}}
\onlyShort{\noindent{\bf Stage 2.}}
The goal in Stage 2 is to collect all $k=\Omega(n)$ robots on 
boundary nodes of $G$ to four boundary corners of $G$. 
In round 2.1, 
$r_i$ sets $r_i.stage\leftarrow 2$ (to denote Stage 2). 
Let $L_{ab}$ be a boundary row or column of $G$ passing through boundary corners $a,b$ of $G$. There are $\sqrt{n}$ boundary nodes of $G$ on $L_{ab}$. In Stage 2, $Grid\_Disperse(n)$ collects the robots on the nodes on $L_{ab}$ to node $a$ and/or $b$. 

Suppose $r_i$ is on a node $x \in L_{ab}$ in the beginning of Stage 2. 
If $x=a$ or $x=b$, it does not move in Stage 2.
If $x\neq a,b$, $r_i$ moves as follows in round 2.1.   
\begin{itemize}
    \item ({\bf Case a}) If $r_i$ did not move in Stage 1 (i.e., $r_i$ was on a boundary node in $C_{init}$), it picks randomly a port (say $p_{x1}$) among three ports $\{p_{x1},p_{x2},p_{x3}\}$ at $x$, sets $r_i.port\_exited\leftarrow p_{x1}$, and exits $x$ following $p_{x1}$. The port information written in  $r_i.port\_exited$ is used to discard the port from considering while exiting the node next time. 
    \item ({\bf Case b}) If $r_i$ moved in Stage 1 ($r_i$ was on a non-boundary node in $C_{init}$), let $p_{x1}$ be the port at $x$ from which $r_i$ entered $x$ in Stage 1 (i.e., $r_i.port\_entered \leftarrow p_{x1}$). Then, $r_i$ picks randomly a port (say $p_{x2}$) between two ports $p_{x2}$ and $p_{x3}$ and exits $x$ following $p_{x2}$.    
\end{itemize}

In the beginning of round 2.2, $r_i$ reaches a neighbor node (say $y$) of $x$. If $y=a$ or $y=b$, Stage 2 finishes for $r_i$. 
Otherwise, we have two cases:
\begin{itemize}
    \item ({\bf Case a.1}) $y$ is a node on $L_{ab}$ (i.e., a boundary node). In {\bf Case b}, $y$ is definitely on $L_{ab}$. However, in {\bf Case a}, $y$ is on  $L_{ab}$ for two ports. Let $p_{y1}$ be the port at $y$ from which $r_i$ entered $y$, i.e., $r_i.port\_entered=p_{y1}$. In round 2.2, $r_i$ picks randomly one (say $p_{y2}$) among two ports $p_{y2}$ and $p_{y3}$, sets $r_i.port\_exited\leftarrow p_{y2}$, and exits $y$ following $p_{y2}$.  In the beginning of round 2.3, $r_i$ reaches a neighbor (say $z$) of $y$. 
    In round 2.3, if $z$ is a boundary node, we have a scenario similar as described above for round 2.2.
    If $z$ is not a boundary node, 
    let $p_{z1}$ be the port of $z$ from which $r_i$ entered $z$, $r_i.port\_entered=p_{z1}$. In round 2.3, $r_i$ exits $z$ following $p_{z1}$. This takes $r_i$ back to $y$ in the beginning of round 2.4. In round 2.4, $r_i$ picks only remaining port $p_{y3}$ (port $p_{y1}$ was taken while entering $y$ from $x$ in round 1.1 and port $p_{y2}$ was taken while entering $z$ in round 2.2) and exits $y$ following $p_{y3}$.  In the beginning of round 2.5, 
    $r_i$ will be on $L_{ab}$. 
    \item ({\bf Case a.2}) $y$ is not a node on $L_{ab}$ (i.e., a non-boundary node). This happens in {\bf Case a}  if $r_i.port\_exited$ leads to a non-boundary node. In this case, let $p_{y1}$ be the port at $y$ from which $r_i$ entered $y$. In round 2.2, robot $r_i$ exits $y$ using port $p_{y1}$. This takes $r_i$ back to the boundary node $x$ in the beginning of round 2.3. In round 2.3, $r_i$ picks randomly one (say $p_{x2}$) between two remaining ports $p_{x2}$ and $p_{x3}$ and exits $x$ following $p_{x2}$. In the beginning of round 2.4, we have a scenario similar to {\bf Case b} in round 2.1.  
\end{itemize}


\begin{lemma}
At the end of Stage 2, all $k=\Omega(n)$ robots in $G$ are positioned on (at most) $4$ boundary corner nodes of $G$. Stage 2 finishes in $3(\sqrt{n}-1)$ rounds after Stage 1. 
\end{lemma}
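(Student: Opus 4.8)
The plan is to analyze a single robot $r_i$ in isolation and show that the Stage~2 rules force it to walk \emph{monotonically} along exactly one boundary line of $G$ until it reaches one of that line's two endpoints, each of which is a boundary corner node. Correctness then follows because the four corners are the only possible terminal positions, and the time bound follows by charging the at most constantly many ``wasted'' rounds to each unit of progress along that line.

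First I would fix the geometry. By Lemma~\ref{lemma:stage1}, at the start of Stage~2 every robot sits on a boundary node; a robot that starts on a corner never moves (the rule ``if $x=a$ or $x=b$, it does not move''), so the interesting case is a robot $r_i$ starting on a non-corner boundary node $x$. Such a node has degree $3$: two of its ports lie on the unique boundary row or column $L_{ab}$ through $x$, whose endpoints are corners $a$ and $b$, while the third port --- call it the \emph{inward} port --- leads to a non-boundary node. The heart of the proof would be an induction over the rounds of Stage~2 establishing the invariant that at the start of each round $r_i$ is either (i) at a node of $L_{ab}$ having \emph{committed} to one of the two directions along $L_{ab}$ (formally: $r_i$ is still at $x$, or $r_i.port\_entered$ is the $L_{ab}$-port pointing back toward the node $r_i$ last occupied on $L_{ab}$), or (ii) one node off $L_{ab}$ at a non-boundary node, in which case $r_i$ retraces $r_i.port\_entered$ in the next round and returns to $L_{ab}$. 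The algorithm's branch structure --- Case~b when a node is entered from the interior, Case~a with its sub-cases a.1 and a.2, and the ``$z$ is not a boundary node'' clause --- is exactly what carries the inductive step: whenever $r_i$ reaches a non-boundary node it immediately backtracks, so it never strays more than one node from $L_{ab}$; and whenever $r_i$ is at a boundary node of $L_{ab}$, the two ports other than $r_i.port\_entered$ are precisely the \emph{forward} $L_{ab}$-port and the inward port, so by recording the probed port in $r_i.port\_exited$ the robot discards the inward port after at most one failed probe and exits along the forward $L_{ab}$-port. Hence $r_i$'s position on $L_{ab}$ advances monotonically toward the endpoint it committed to, and it stops once it reaches $a$ or $b$ (recognizable as the only degree-$2$ nodes). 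Since every non-corner boundary node lies on exactly one such line $L_{ab}$ and both endpoints of every $L_{ab}$ are among the four corners, at the end of Stage~2 every robot occupies one of the (at most) four boundary corner nodes. Robots sharing a node move identically, and since correctness only needs monotone traversal of $L_{ab}$, co-located robots cause no problem.

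For the running time, each boundary line has $\sqrt{n}$ nodes, so a non-corner starting node is at most $\sqrt{n}-1$ edges from whichever endpoint $r_i$ commits to. Advancing one edge along $L_{ab}$ costs at most $3$ rounds: in the worst case $r_i$ exits a boundary node along its inward port ($1$ round), backtracks from the resulting non-boundary node ($1$ round), then exits along the forward $L_{ab}$-port ($1$ round); if the forward port is chosen first, the edge costs only $1$ round. Summing over at most $\sqrt{n}-1$ edges yields at most $3(\sqrt{n}-1)$ rounds counted from the start of Stage~2, i.e., after Stage~1, which is the claimed bound (the constant $3$ is not tight).

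I expect the main obstacle to be the port bookkeeping at a boundary node just after a failed inward probe: when $r_i$ retraces its step back into a boundary node $y\in L_{ab}$, the re-entry overwrites $r_i.port\_entered$ with the inward port, so one must argue carefully --- using $r_i.port\_exited$ together with the special treatment of the backtracking round --- that $r_i$ still distinguishes the forward $L_{ab}$-port from the port leading back the way it came, and hence never reverses direction. Settling this is precisely the core of the inductive step above; once it is in place, the monotone-progress claim and the $3(\sqrt{n}-1)$-round count are immediate.
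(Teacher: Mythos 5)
Your proposal is correct and follows essentially the same route as the paper's proof: corner robots stay put, a non-corner boundary robot uses $port\_entered$/$port\_exited$ to discard the backward and inward ports and thus advances monotonically along its boundary line to a corner, and each unit of progress is charged at most $3$ rounds (probe inward, backtrack, move forward), giving $3(\sqrt{n}-1)$. The port-bookkeeping subtlety you flag at the end is exactly the point the paper also handles only by appeal to the two stored port variables, so your treatment is at the same level of rigor as the original.
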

\onlyLong{
\begin{proof}
In the beginning of Stage 2, if $r_i$ is already on a boundary corner node, then this lemma is immediate. 
Therefore, suppose $r_i$ is not on a boundary node (say $x$) in the beginning of round 2.1.
We have two cases: {\bf (Case a)} $r_i$ was on $x$ in $C_{init}$; {\bf (Case b)} $r_i$ moved in Stage 1 to reach $x$.

We first discuss {\bf Case a} on how $r_i$ moves in round 2.1. Since $r_i$ has not moved in Stage 1, it does not have information on from what port at $x$ it entered $x$, i.e., $r_i.port\_entered =null$. Since $x$ is a boundary node, it has three ports $p_{x1},p_{x2},p_{x3}$. $r_i$ picks a port (say $p_{x1}$), sets $r_i.port\_exited\leftarrow p_{x1}$, and exits $x$. 
In {\bf Case b}, $r_i.port\_entered \neq null$. $r_i$ picks a port between two ports (except port $r_i.port\_entered$) and exits $x$.

Suppose $y$ be a node in which $r_i$ arrives in the beginning of round 2.2.  
In {\bf Case a}, $y$ may be a boundary ({\bf Case a.1}) or  non-boundary node ({\bf Case a.2}), however, in {\bf Case b}, $y$ is a boundary node.
Note that $r_i$ can figure out whether it is on a boundary or non-boundary node. 
For {\bf Case a.2}, $r_i$ exits in round 2.2 following the port used to enter $y$ in round 2.1; $r_i$ has that information in $r_i.port\_exited$ set while moving in round 2.1. 
This takes $r_i$ back to $x$ in the beginning of round 2.3.
Now in round 2.3, $r_i$ exits $x$ using one of the two remaining ports, which takes it to a boundary node $z$ in the beginning of round 2.4 (as in {\bf Case a.1} or {\bf Case b} in round 2.1). 

Therefore, round 2.2 of $r_i$ for {\bf Cases a.1} and {\bf b}  and round 2.4 of {\bf Case a.2} are the same. 
That means, $r_i$ is on a boundary node $y$ in round 2.2 for {\bf Cases a.1} and {\bf b} and in round 2.4 for {\bf Case a.2}.
In these cases, $r_i.port\_entered$ has information on a boundary port of $y$ leading to $x$.   
Therefore, now $r_i$ has a choice between one boundary port and another non-boundary port of $y$ to exit $y$ in round 2.2 or 2.4. If $r_i$ exits using a boundary port of $y$ (not $r_i.port\_entered$), $r_i$ reaches a boundary neighbor, say $z$ of $y$, and round 2.3 or 2.5 is equivalent to round 2.2 or 2.4.

If $r_i$ exits using a non-boundary port of $y$ in round 2.2 or 2.4, in round 2.3 or 2.5, it returns back to $y$. In round 2.4 or 2.6, $r_i$ has only one port remaining, which is a boundary port of $y$, to exit $y$, taking $r_i$ to a boundary neighbor node $z$ of $y$. Note that this is possible through not taking ports of $y$ that are in $r_i.port\_entered$ and $r_i.port\_exited$ variables. 

It now remains to show that $r_i$ always moves in a same direction of the boundary row or column during Stage 2. This can be easily shown similar to Stage 1 since $r_i$ always discards the ports through which it entered a boundary node from another boundary/non-boundary node by writing the port information in $r_i.port\_entered$ and $r_i.port\_exited$ variables. 

We now prove the time bound for Stage 2. We have that each row and column of $G$ has $\sqrt{n}$ nodes. Moreover, $r_i$ is at most $\sqrt{n}-1$ nodes away from a boundary corner node of $G$. While moving in Stage 2, $r_i$ reaches a neighbor node in at most 3 rounds (one round to a non-boundary node, one round to be back from the non-boundary node, and then definitely to a boundary node). Therefore, in total, $3(\sqrt{n}-1)$ rounds after Stage 1 finishes.
\end{proof}
}

\onlyLong{\subsubsection{Stage 3}}
\onlyShort{\noindent{\bf Stage 3.}}
The goal in Stage 3 is to collect all $k=\Omega(n)$ robots on a boundary corner node of $G$. 
In round 3.1, $r_i$ sets $r_i.stage\leftarrow 3$ (to denote Stage 3). 

Let $a,b,c,d$ be the four boundary corner nodes of $G$. 
Suppose the smallest ID robot $r_1\in \cR$ is positioned on $a$.
If $r_i$ is already on $a$, it does nothing in Stage 3. Otherwise, it is on $b,c,$ or $d$ (say $b$) and it moves in Stage 3 to reach $a$. 
\onlyShort{Since $r_i$ needs to move on the boundary of $G$, the technique of Stage 2 can be modified for $r_i$ so that it reaches $a$ following the nodes in the boundary (details in full version \cite{arxiv-full}). We have the following lemma.}

\onlyLong{
In round 3.1, $r_i$ picks randomly one of the two ports at $b$ ($b$ is a boundary corner node in $G$) and exits $b$. In the beginning of round 3.2, $r_i$ reaches a neighbor, say $b_1$, of $b$. Notice that $b'$ is a boundary node. 
Let $p_{b'1}$ be the port at $b'$ from which $r_i$ entered $b'$. 
In round 3.2, $r_i$ picks in random between two remaining ports $p_{b'2}$ and $p_{b'3}$, sets $r_i.port\_exited$, and exits $b'$. In the beginning of round 3.3, $r_i$ reaches a neighbor, say $b''$, of $b'$.  We have two cases. 
\begin{itemize}
\item If $b''$ is a boundary node, then in round 3.3, it uses the technique similar to round 3.2 to exit $b''$.
\item
If $b''$ is a non-boundary neighbor of $b'$, in round 3.3, it uses the technique of {\bf Case a.2} (Stage 2) to return back to $b'$. In round 3.4, $r_i$ uses the technique in {\bf Case a.1} to exit $b'$. 
\end{itemize}

If $r_i$ reaches to a corner (say $c\neq a$) in Stage 3, then it uses the only port that is not used while entering $c$ and continues Stage 3. $r_i$ stops moving in Stage 3 as soon as it reaches $a$. 
}

\begin{lemma}
At the end of Stage 3, all $k=\Omega(n)$ robots in $G$ are positioned on a boundary corner nodes of $G$. Stage 3 finishes in $9(\sqrt{n}-1)$ rounds after Stage 2.
\end{lemma}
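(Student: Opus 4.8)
The plan is to mirror the analysis of Stages 1 and 2, restricting all movement to the boundary cycle of $G$. After Stage 2 every robot sits on one of the four boundary corners $a,b,c,d$, and since $r_1$ is at $a$, the target corner is $a$. A robot already at $a$ does nothing, so fix a robot $r_i$ at some corner $\neq a$, say $b$. The boundary nodes of $G$ induce a cycle $\mathcal{C}$ of length $4(\sqrt{n}-1)$ on which $a,b,c,d$ occur in this cyclic order, consecutive corners being $\sqrt{n}-1$ edges apart. The first step is to establish the \emph{direction-consistency invariant}: once $r_i$ leaves $b$ in round $3.1$ through one of its two ports (both of which are $\mathcal{C}$-edges, as $b$ has degree $2$), it thereafter performs a monotone walk along $\mathcal{C}$ in the chosen direction and never leaves $\mathcal{C}$ permanently. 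This is the same property used in Stages 1--2: $r_i$ records in $r_i.port\_entered$ the port by which it entered each boundary node, and at a degree-$3$ boundary node it tells the two candidate exit ports apart by the degree of the node they lead to (degree $4$ means an interior node, degree $\le 3$ means a boundary node). If the random choice leads into the interior, $r_i$ retraces that step --- exactly Case a.2 of Stage 2, using $r_i.port\_exited$/$r_i.port\_entered$ --- and then takes the only remaining port, which necessarily continues along $\mathcal{C}$; at a degree-$2$ corner $\neq a$ there is a unique non-entering port, which again continues along $\mathcal{C}$. Thus $r_i$ realizes a monotone walk on $\mathcal{C}$ with at most one wasted interior round-trip per boundary node it traverses.

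Given the invariant, correctness is immediate: a monotone walk on the cycle $\mathcal{C}$ that starts at $b$ reaches $a$ before completing a full revolution (since $a$ lies on every arc from $b$), and $r_i$ halts as soon as it detects it is at the corner occupied by $r_1$. When $r_i$ passes through $c$ or $d$ it keeps moving (the unique continuing port), so no robot gets stranded at a wrong corner. Hence at the end of Stage 3 all $k=\Omega(n)$ robots are co-located at $a$.

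For the time bound I would count hops and rounds. Because the initial direction is chosen at random, $r_i$ may traverse the long arc of $\mathcal{C}$ to $a$; the longest such arc is from $b$ (or $d$) going away from $a$ and has length $3(\sqrt{n}-1)$ edges. Each hop along $\mathcal{C}$ costs at most $3$ rounds: $1$ round when leaving a corner (its unique non-entering port is a $\mathcal{C}$-edge), and at a degree-$3$ boundary node at most $1$ round into the interior, $1$ round back, and $1$ round forward along $\mathcal{C}$ (each boundary node being charged at most one such detour, since the walk is monotone and visits it once). Therefore Stage 3 terminates within $3 \cdot 3(\sqrt{n}-1) = 9(\sqrt{n}-1)$ rounds after Stage 2.

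The main obstacle is the direction-consistency invariant together with the tight per-hop accounting: one must verify that the bookkeeping in $r_i.port\_entered$ and $r_i.port\_exited$ genuinely prevents $r_i$ from ever reversing on $\mathcal{C}$ or oscillating between a boundary node and its interior neighbor for more than one round-trip, handling uniformly the degree-$2$ corners and the degree-$3$ boundary nodes, and that the interior detour is charged at most once per boundary node rather than once per visit. Everything else follows from the Stage 1--2 arguments and simple cycle geometry.
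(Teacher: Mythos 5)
Your proposal is correct and follows essentially the same route as the paper: robots at the corner of $r_1$ stay put, every other robot performs a direction-consistent walk along the boundary (reusing the Stage~2 port-bookkeeping to undo interior detours and to pass through intermediate corners), and the time bound comes from the maximal boundary distance $3(\sqrt{n}-1)$ times at most $3$ rounds per boundary hop. The only difference is presentational — you state the direction-consistency invariant and the per-hop charging more explicitly than the paper, which simply defers to the Stage~2 argument.
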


\onlyLong{
\begin{proof}
The robots on the boundary corner node where $r_1$ is positioned in the beginning of Stage 3 do not move during Stage 3. Let that corner be $a$. It is immediate that when robots of corners $b,c,d$ move in round 3.1, they reach a boundary node in the beginning of round 3.2. As in Stage 2, it is easy to see that any robot $r_i$ that started moving from any of $b,c,d$ follows the same direction as in round 3.1 in round 3.2 and after. While reaching an intermediate corner before reaching $a$, $r_i$ can exit through the port of the corner not used to enter that corner to continue traversing in the same direction.  While reaching $a$, $r_i$ knows that it has to stop there since $r_1$ is at $a$. 

We now prove the time bound. Note that the largest boundary distance from any of $b,c,d$ to $a$ is $3(\sqrt{n}-1)$. After started moving from any of $b,c,d$ in round 3.1, any robot $r_i$ reach a boundary neighbor node in the same direction in at most 3 rounds. Therefore, in total $9(\sqrt{n}-1)$ rounds after Stage 2, $r_i$ reaches $a$.  
\end{proof}
}

\onlyLong{\subsubsection{Stage 4}}
\onlyShort{\noindent{\bf Stage 4.}}
The goal is Stage 4 is to distribute $k=\Omega(n)$ robots (that are at a boundary corner node $a$ after Stage 3) to a boundary row or column so that there will be no more than $\sqrt{n}$ robots on each node. 
In round 4.1, $r_i$ sets $r_i.stage\leftarrow 4$ (to denote Stage 4). 

We first describe how $r_i$ moves in Stage 4 when it is the smallest ID robot. 
In round 4.1, it randomly picks one of the two ports of $a$ and exits $a$. 
In the beginning of round 4.2, $r_i$ reaches a boundary neighbor node, say $a'$, of $a$.
In round 4.2, $r_i$ waits for all other robots to reach $a'$.  
In round 4.3, $r_i$ uses the approach as in Stage 3 to move to a boundary neighbor node, say $a''$, of $a'$. If $r_i$ reaches a non-boundary node $a''$ in the beginning of round 4.4, it returns back to $a'$ in round 4.4, and in round 4.5, when $r_i$ moves, it reaches a boundary node $a''$. Robot $r_i$ continues this process until it reaches a node where there will be exactly $\sqrt{n}$ or less robots left. 

We now describe how $r_i$ moves in Stage 4 when it is not the smallest ID robot. In round 4.1, it does not move. 
In round 4.2 and after, it does not leave $a$ if it is within $\sqrt{n}$-th largest robot among the robots at $a$. 
Otherwise,
in round 4.2, it moves following the port $r_1$ (the smallest ID robot) used to exit $a$ (it writes that information in $r_i.port\_exited$ in round 4.1 after $r_1$ picks the port to move).
In the beginning of round 4.3, $n-\sqrt{n}$ robots are at $a'$. The $\sqrt{n}$ largest ID robots stay on $a'$ and others exit $a''$ simultaneously with the smallest ID robot $r_1$ in round 4.3 (as described in the previous paragraph). In each new  boundary node, $\sqrt{n}$ largest ID robots stay and others exit.    
\onlyShort{\vspace{-2mm}}
 \begin{lemma}
At the end of Stage 4, all $k=\Omega(n)$ robots in $G$ are distributed on a boundary row or column of $G$ so that there will be exactly $\sqrt{n}$ or less robots on a node. Stage 4 finishes in $3\sqrt{n}-1$ rounds after Stage 3. 
\end{lemma}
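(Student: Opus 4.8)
The plan is to establish two claims: (a) when Stage 4 halts, all $k$ robots lie on a single boundary row or column of $G$ with at most $\sqrt{n}$ robots on each node, and (b) this happens within $3\sqrt{n}-1$ rounds after Stage 3. I take as given the post-Stage-3 invariant: all $k=\Omega(n)$ robots are at the boundary corner $a$, with $r_1$ (the smallest ID robot) among them.

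For claim (a), I would first argue that $r_1$ traverses one fixed boundary line $L_{ab}$ and never turns a corner. Since $a$ is a corner, both of its ports lead to boundary nodes, so after round $4.1$ robot $r_1$ is at a boundary neighbour $a'\in L_{ab}$. At every subsequent boundary node, $r_1$ avoids the port recorded in $port\_entered$ and, using the detour-and-return device of Stages 2--3 (go to a non-boundary neighbour if the chosen port leads there, return, then take the unique remaining port, which is necessarily the forward boundary port), always leaves along $L_{ab}$ in a consistent direction; this is the same direction-consistency argument used in the proofs for Stages 2 and 3. Next I would track the still-moving group: by the rule ``the $\sqrt{n}$ highest-ID robots at the current node stay, the rest move on together with $r_1$,'' after the group has visited $i$ nodes of $L_{ab}$ it contains $\max(0,\,k-i\sqrt{n})$ robots, and it halts at the first node where this count is $\le\sqrt{n}$, depositing everyone there. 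Because $k\le n=(\sqrt{n})^{2}$, this occurs after visiting at most $\lceil k/\sqrt{n}\rceil\le\sqrt{n}$ nodes, and $L_{ab}$ has exactly $\sqrt{n}$ nodes, so no corner is ever reached in the middle of the process; every visited node except possibly the last keeps exactly $\sqrt{n}$ robots, and the last keeps $\le\sqrt{n}$. I would also note that counts and ranks are well defined: the moving robots always travel as one contiguous group (they detour to and return from non-boundary nodes together), colocated robots may read each other's memory, and all robots step in lockstep, with the wait in round $4.2$ re-synchronizing the group at $a'$ with the robots that stayed at $a$; once a robot has stayed at a node it is never disturbed by the group's later hops or detours, since the group only moves forward along $L_{ab}$.

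For claim (b), I would charge rounds per boundary hop. Getting the group from $a$ to $a'$ costs $2$ rounds ($r_1$ scouts in round $4.1$, the remaining movers follow and the group regroups in round $4.2$). Each later hop from one boundary node to the next costs at most $3$ rounds (one round to try a non-$port\_entered$ port, at most one more to backtrack if it led off the boundary, and one to take the remaining, necessarily boundary, port). Since at most $\sqrt{n}$ nodes of $L_{ab}$ are visited in all, there are at most $\sqrt{n}-2$ such later hops, for a total of at most $2+3(\sqrt{n}-2)\le 3\sqrt{n}-1$ rounds after Stage 3.

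The main obstacle I expect is the bookkeeping in claim (a): one must verify that the ``$\sqrt{n}$ highest-ID robots stay'' rule, applied round by round against the backdrop of detour rounds during which the moving group is temporarily off $L_{ab}$, never strands a robot, never overfills a node, and keeps the moving group contiguous so that each rank computation is taken against the correct population, while simultaneously guaranteeing (via $k\le n$ and $|L_{ab}|=\sqrt{n}$) that the traversal stays on a single boundary line. Once this is nailed down, the direction-consistency and round-counting parts are routine adaptations of the Stage 2--3 arguments.
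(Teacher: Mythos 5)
Your proposal is correct and follows essentially the same route as the paper's proof: the two-round initial hop from $a$ to $a'$ with $r_1$ scouting first, the at-most-three-round charge per subsequent boundary hop (detour to a non-boundary neighbour and return), direction-consistency inherited from the Stage 2--3 arguments, and the observation that $k\le n$ together with $|L_{ab}|=\sqrt{n}$ guarantees the deposit process terminates within the single boundary line. The group-contiguity bookkeeping you flag as the main obstacle is in fact treated no more carefully in the paper, which simply asserts that the non-staying robots exit each node simultaneously along the port $r_1$ takes; your write-up is, if anything, slightly more explicit on this point.
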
 
\onlyLong{
\begin{proof}
In round 4.1, $r_1$ moves and others can wait at $a'$. Others keep note of the port $r_1$ used to exit $a$ in their variable $port\_exited$. The robots at $a$ know the port $r_1$ used to exit $a$. In round 4.2, all robots at $a$, except $\sqrt{n}$ largest ID robots, exit $a$ using port $port\_exited$ so that they all will be at $a'$. It is easy to see that $r_1$ can wait at $a'$ in round 4.2 since it has $r_1.stage=4$ and $a'$ is not a boundary corner node. 
In round 4.3 onwards, the robots at $a'$ can simultaneously exit $a'$ using the same port $r_1$ takes to exit $a'$. Therefore, the proof of the moving on the boundary in a row or column and in the same direction while visiting new boundary nodes follows from the proofs of Stage 2 and/or 3. Furthermore, since there are $\sqrt{n}$ nodes in a row or column and $k=\Omega(n)$ robots at a corner, leaving $\sqrt{n}$ largest ID robots in each robot distributes them to the nodes of a boundary row/column. 

For the time bound, it is easy to see that in two rounds $k-\sqrt{n}$ robots reach $a'$. The boundary neighbor node of $a'\neq a$ is reached in next three rounds. Therefore, in total, $3(\sqrt{n}-1)+2=3\sqrt{n}-1$ rounds after Stage 3, Stage 4 finishes.   
\end{proof}
}

\onlyLong{\subsubsection{Stage 5}}
\onlyShort{\noindent{\bf Stage 5.}}
The goal in Stage 5 is to distribute robots to nodes of $G$ so that there will be exactly one robot on each node. 
In round 5.1, $r_i$ sets $r_i.stage\leftarrow 5$ (to denote Stage 5).  
Let $c$ be a boundary node with $\sqrt{n}$ or less robots on it and $r_i$ is on $c$.
In round 5.1, if $r_i$ is the largest ID robot $r_{max}$ among the robots on $c$, it settles at $c$ assigning $r_i.settled\leftarrow 1$. Otherwise, 
in round 5.1, $r_i$ moves as follows. 
While executing Stage 4, $r_i$ stores the port of $c$ it used to enter  $c$ (say $r_i.port\_entered=p_{c1}$) and the port of $c$ used by the robot that left $c$ exited through (say $r_i.port\_exited=p_{c2}$). 
Robot $r_i$ then exits through port $p_{c3}$, which is not $r_i.port\_entered$ and $r_i.port\_exited$. This way $r_i$ reaches a non-boundary node $c'$. All other robots except $r_{max}$ also reach $c'$ in the beginning of round 5.2.
In round 5.2, the largest ID robot $r_{max'}$ settles at $c'$. The at most $\sqrt{n}-2$ robots exit $c'$ using the port of $c'$ selected through the port ordering technique described in Stage 1. This process continues until a single robot remains at a node $z$, which settles there. 

\begin{lemma}
\label{lemma:stage5}
At the end of Stage 5, all $k=\Omega(n)$ robots in $G$ are distributed such that there is exactly one robot positioned on a node of $G$. Stage 5 finishes in $\sqrt{n}$ rounds after Stage 4. 
\end{lemma}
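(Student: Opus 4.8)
The plan is to show that Stage~5 runs, in parallel along each column of $G$ orthogonal to the line of robots produced by Stage~4, exactly the ``settle-the-maximum-and-march-on'' process used by $DFS(k)$ on a path, so that after at most $\sqrt{n}$ more rounds every node of $G$ carries exactly one robot.

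First I would invoke the Stage~4 result: all $k=\Omega(n)$ robots then lie on a single boundary line of $G$, WLOG the boundary row $L_{ab}$ through corners $a,b$, with at most $\sqrt{n}$ robots on each of the $\sqrt{n}$ nodes of $L_{ab}$, the counts summing to $k$. For a node $c\in L_{ab}$ let $C_c$ denote the column of $G$ through $c$, i.e.\ the maximal straight line of $\sqrt n$ nodes orthogonal to $L_{ab}$ with $c$ as its endpoint on $L_{ab}$; the family $\{C_c : c\in L_{ab}\}$ partitions $V(G)$. The key structural claim is that the robots starting Stage~5 at $c$ stay together and walk down $C_c$, node by node, dropping the current maximum-ID robot at each node. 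To see that the very first move in round $5.1$ enters $C_c$: at an interior node $c$ of $L_{ab}$ the two ports $port\_entered$ and $port\_exited$ stored during Stage~4 are exactly the two ports of $c$ lying on $L_{ab}$ (the one by which the robots now at $c$ arrived and the one by which the robots that continued along $L_{ab}$ departed), so the port $p_{c3}$ chosen in round $5.1$, being the remaining one, is the unique port of $c$ leaving $L_{ab}$, i.e.\ it enters $C_c$; at the corner nodes $a,b$ the analogous statement holds with the single perpendicular boundary port. Thereafter each surviving group, sitting on an interior node of $C_c$, selects by the Stage~1 port-ordering rule the second port clockwise from its entry port, which at a degree-$4$ grid node is the port opposite the entry port; by the directional-consistency argument already used in the proof of Lemma~\ref{lemma:stage1} (and its analogues for boundary walking in Stages~2--3 for the two corner columns) the group therefore keeps moving in a straight line along $C_c$, and at each node it reaches the maximum-ID robot present settles.

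It then remains to combine these facts. Distinct columns $C_c\neq C_{c'}$ share no node, and within a single column the surviving robots always occupy one common node and lose exactly one member each round; hence after $j$ rounds the group of $C_c$ has at most $\min(|C_c|,\text{initial count at }c)-j$ robots and is at the $(j{+}1)$-st node of $C_c$. Since $|C_c|=\sqrt n$ and the initial count is at most $\sqrt n$, the group is reduced to at most a single robot within $\sqrt n-1$ moves, and that last robot, being alone, settles; so by round $5.\sqrt n$ every robot on every column has settled. No node receives two settled robots (within a column exactly one robot is left at each visited node, and columns are node-disjoint), so every robot ends on a distinct node of $G$ and {\dis} is achieved. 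Counting moves, Stage~5 finishes at most $\sqrt n$ rounds after Stage~4.

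\textbf{Main obstacle.} The step requiring genuine care is the structural claim above, in particular the port bookkeeping at the nodes of $L_{ab}$: one must verify that $port\_entered$ and $port\_exited$ recorded during Stage~4 are precisely the ports of each such node that lie on $L_{ab}$ (with the degree-$2$ corner nodes $a,b$ treated consistently), and that the ensuing march along $C_c$ never drifts off that column. Both hinge on the directional-consistency invariants established for Stages~1--4; granted those, the collision-freeness and the $\sqrt n$-round count are a routine induction on the column position.
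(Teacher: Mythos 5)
Your proposal is correct and follows essentially the same route as the paper's proof: use the $port\_entered$/$port\_exited$ bookkeeping at each node of the Stage-4 line to identify the unique perpendicular port, then reuse the Stage-1 directional-consistency argument to march each group straight down its (node-disjoint) column, settling the maximum-ID robot at each step, which yields the $\sqrt{n}$-round bound. Your treatment is somewhat more explicit than the paper's (notably the column partition, the corner nodes $a,b$, and the induction on column position), but the underlying argument is the same.
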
 

\onlyLong{
\begin{proof}
In round 5.1, it is easy to see that all robots except $r_{max}$ at a boundary node $c$ exit to a non-boundary node since they can discard two boundary ports through information written at $port\_entered$ and $port\_exited$ variables. 
While at a non-boundary node, it is also easy through the proof in Stage 1 that the robots exiting the node follow the subsequent nodes in a row or column that they used in previous rounds of Stage 5. Since there are at most $\sqrt{n}$ robots, $\sqrt{n}$ nodes in a row/column, and a robot stays at a new node, each node in the row/column has a robot positioned on it.
Regarding the time bound, not-yet-settled robots move in each round. 
Since there are $\sqrt{n}$ nodes, traversing all of them needs $\sqrt{n}$ rounds.
\end{proof}
}

\begin{theorem}
\label{theorem:grid-disperse-n}
$Grid\_Disperse(k),k=\Omega(n),$ solves {\dis} correctly for $k=\Omega(n)$ robots  in an $n$-node square grid graph $G$ in  $O(\sqrt{n})$ rounds with $O(\log n)$ bits at each robot. 
\end{theorem}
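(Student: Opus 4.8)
The plan is to obtain the theorem as a direct composition of the five stage lemmas (Lemma~\ref{lemma:stage1} through Lemma~\ref{lemma:stage5}), checking that the postcondition guaranteed by each stage is exactly the precondition assumed by the next, and then summing the per-stage round counts and bounding the per-robot memory.

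First I would argue correctness. By Lemma~\ref{lemma:stage1}, after Stage~1 every one of the $k=\Omega(n)$ robots sits on one of the at most $4\sqrt n-4$ boundary nodes of $G$; this is precisely the configuration Stage~2 assumes, so its lemma applies and leaves all robots on the four boundary corner nodes; the Stage~3 lemma then moves them all onto a single boundary corner $a$; the Stage~4 lemma redistributes them along one boundary row (or column) through $a$ with at most $\sqrt n$ robots per node; and the Stage~5 lemma (Lemma~\ref{lemma:stage5}) spreads these robots into the grid so that each occupied node carries exactly one robot. Since the $k$ robots end on $k$ distinct nodes, {\dis} is solved. The only point that needs a sentence of justification is why the stages are genuinely sequential from every robot's local viewpoint: all robots start synchronously, move in every round, and know $n$ (hence $\sqrt n$), so their $round$ counters stay in lockstep and every robot enters and leaves each stage in the same round; thus the global precondition assumed by each stage lemma really does hold when that stage begins. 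I would also remark that the word ``randomly'' in the stage descriptions may be read as ``pick the smallest-labelled admissible port'': each stage lemma is proved for an arbitrary such choice, so the algorithm remains deterministic.

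For the running time I would simply add the bounds from the five lemmas: Stage~1 takes $\sqrt n-1$ rounds, Stage~2 takes $3(\sqrt n-1)$, Stage~3 takes $9(\sqrt n-1)$, Stage~4 takes $3\sqrt n-1$, and Stage~5 takes $\sqrt n$, for a total of $17\sqrt n-14=O(\sqrt n)$ rounds. For the memory, each robot stores only $round$, $stage$, $port\_entered$, $port\_exited$, and $settled$, together with its $\lceil\log k\rceil$-bit ID; $round$ never exceeds $O(\sqrt n)$ so it needs $O(\log n)$ bits, while $stage\in\{1,\dots,5\}$, $port\_entered,port\_exited\in\{-1,0,1,2,3\}$, and $settled\in\{0,1\}$ each need $O(1)$ bits (the degree of every node is $4$), and $\lceil\log k\rceil=O(\log n)$ since $k\le n$. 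Hence $O(\log n)$ bits per robot suffice.

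I do not expect a genuine obstacle at this level: all the real work---the port-ordering argument showing a robot keeps moving in a fixed direction along a row or column, and the correctness of each stage's distribution step---is already encapsulated in the individual stage lemmas. The only thing that requires care is the bookkeeping of the stage boundaries and verifying that no precondition is violated when one stage hands off to the next; everything else is arithmetic.
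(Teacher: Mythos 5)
Your proposal is correct and follows essentially the same route as the paper's proof: compose Lemmas~\ref{lemma:stage1}--\ref{lemma:stage5} for correctness, sum the per-stage round counts to get $O(\sqrt n)$, and account for the $O(\log n)$-bit variables ($round$ dominating, the port/stage/settled flags being $O(1)$ since $\Delta=4$). Your added remarks on lockstep stage boundaries and on reading ``randomly'' as a deterministic smallest-port choice are reasonable clarifications the paper leaves implicit, but they do not change the argument.
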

\begin{proof}
Each stage of $Grid\_Disperse(k),k=\Omega(n),$ executes sequentially one after another. Therefore, the overall correctness of $Grid\_Disperse(k)$ follows combining the correctness proofs of Lemmas \ref{lemma:stage1}--\ref{lemma:stage5}.
The time bound of $O(\sqrt{n})$ rounds also follows immediately summing up the $O(\sqrt{n})$ rounds of each stage.
Regarding memory bits, variables $port\_entered$, $port\_exited$,  $settled$, and $stage$ take $O(1)$ bits ($\Delta=4$ for grids), and $round$ takes $O(\log n)$ bits. Moreover, two or more robots at a node can be differentiated using $O(\log k)=O(\log n)$ bits, for $k=\Omega(n)$. Therefore, a robot needs in total $O(\log n)$ bits.  
\end{proof}

\subsection{Algorithm for Square Grid Graphs, $k<\Omega(n)$}
\label{subsection:grid-disperse-k}
\onlyLong{We now discuss algorithm $Grid\_Disperse(k)$ that solves {\dis} for $k<\Omega(n)$ robots.} 
For $\sqrt{n}\leq k<\Omega(n)$, $Grid\_Disperse(k),k=\Omega(n),$ can be modified to achieve {\dis} in $O(\sqrt{n})$ rounds. Stages 1-3 require no changes. In Stage 4, $\sqrt{n}$ robots can be left in each new node that is visited until there will be exactly $\sqrt{n}$ or less robots left at a node. Stage 5 can again be executed without changes. The minimum ID robot settles as soon as it is a single robot on a node. \onlyShort{For $k<\sqrt{n}$, the idea is to ask $r_i$ to settle (if alone at a node) or move similar to Stage 1 of $Grid\_Disperse(k)$, $k=\Omega(n),$ until it can settle at a node. The details are in \cite{arxiv-full}. We have the following theorem.
}  

\onlyLong{
Therefore, we discuss here algorithm $Grid\_Disperse(k)$ for $k<\sqrt{n}$. 
In round 1,  if $r_i$ is a single robot on a node in $C_{init}$,  it settles at that node assigning $r_i.settled\leftarrow 1$. For the case of two or more robots on a node in $C_{init}$, in round 1, $r_i$ settles at that node if it is the largest ID robot among the robots on that node. 
If not largest, then there are two cases: ({\bf Case 1}) $r_i$ is on a non-boundary node $v$ and ({\bf Case 2}) $r_i$ is on a boundary node $v$. 
In {\bf Case 1}, $r_i$ picks randomly a port among the 4 ports and exits the node $v$ using that port. 
It then follows  the technique of Stage 1 in subsequent rounds. 
It settles as soon it reaches to a node where there is no other robot settled. 
If $r_i$ settles while reaching a boundary node, we are done. Otherwise, $r_i$ starts traversing the same row/column 
in the opposite direction. This can be done by exiting the boundary node through the port used to enter it. Then $r_i$ follows the technique of Stage 1 until it reaches to a node when it can settle.

In {\bf Case 2}, $r_i$ picks randomly one of the 3 ports and exits $v$.  If it reaches a boundary node, it returns back to $v$ and repeats this process until it reaches a non-boundary node. This can be done through $r_i.port\_entered$ and $r_i.port\_exited$ variables.  After $r_i$ reaches a non-boundary node, it continues as in {\bf Case 1} for subsequent rounds until it settles.  

}

\begin{theorem}
\label{theorem:grid-disperse-k}
Algorithm $Grid\_Disperse(k)$ solves {\dis} correctly for $k < \Omega(n)$ robots in a square grid graph $G$ in $O(\min(k,\sqrt{n}))$ rounds with $O(\log k)$ bits at each robot. 
\end{theorem}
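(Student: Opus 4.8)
The plan is to handle the two regimes $\sqrt{n}\le k<\Omega(n)$ and $k<\sqrt{n}$ separately, reusing Theorem~\ref{theorem:grid-disperse-n} for the first and the straight‑line movement primitive (Stage~1 of $Grid\_Disperse(k),k=\Omega(n)$) for the second.

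\emph{Regime $\sqrt{n}\le k<\Omega(n)$.} Here $\min(k,\sqrt n)=\sqrt n$. I would run $Grid\_Disperse(k),k=\Omega(n),$ with the three stated changes: Stages~1--3 are used verbatim, Stage~4 leaves $\sqrt n$ robots on each newly visited boundary node until a node is reached that holds at most $\sqrt n$ robots, Stage~5 is used verbatim, and a robot settles the moment it is alone on a node. Correctness then follows directly from Lemmas~\ref{lemma:stage1}--\ref{lemma:stage5}, since the invariants proved there (robots walk straight, reach the boundary/corners, and are thinned to $\le\sqrt n$ per node before Stage~5) are unaffected by leaving $\sqrt n$ rather than exactly $\sqrt n$ per node. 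The running time is the same $O(\sqrt n)=O(\min(k,\sqrt n))$, and the memory bound becomes $O(\log k)$ because the only $\Theta(\log n)$‑bit variable is the round counter which now runs to $O(\sqrt n)$, and $k\ge\sqrt n$ gives $\log k\ge\tfrac12\log n$, so $O(\log n)=O(\log k)$.

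\emph{Regime $k<\sqrt{n}$.} Here $\min(k,\sqrt n)=k$. For correctness I would argue as follows. In round~$1$ the highest‑ID robot at each initially occupied node settles, so every unsettled robot thereafter belongs to a ``wave'' that travels inside one fixed row or column of $G$ exactly as in Stage~1, dropping its current highest‑ID robot on the first free node it meets and reversing direction once if it reaches a boundary node without settling there (Case~1); a robot starting on the boundary (Case~2) first escapes to a non‑boundary node of the same line in $O(1)$ rounds via the $port\_entered$/$port\_exited$ bookkeeping and then behaves as in Case~1. The key structural fact is the \emph{line‑is‑never‑saturated} invariant: any row or column of $G$ has $\sqrt n$ nodes while at most $k<\sqrt n$ robots are ever settled, so every line always contains a free node. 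Consequently a wave can never be blocked by settled robots from one boundary of its line to the other, hence on every maximal straight run it eventually reaches a free node and deposits its next robot there; since each free node receives at most one robot, the robots settle on distinct nodes, and by induction on wave size every robot settles. (Two moving waves that meet at a node simply merge: the highest‑ID robot there settles and the remaining robots continue as one group, which only shrinks the residual count and preserves the invariant.)

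For the time and memory bounds in this regime: a wave of initial size $w\le k$ dissipates after at most $w$ free‑node settlings; between these it may traverse occupied nodes, but along its line there are at most $k$ occupied nodes, and because the wave reflects at most once (again by the line‑is‑never‑saturated invariant, it cannot be blocked all the way to the second boundary), it crosses each node of its line at most twice, giving at most $2k$ occupied‑node rounds, plus $O(1)$ rounds for the Case~2 boundary escape. Thus every robot settles within $O(k)=O(\min(k,\sqrt n))$ rounds. For memory, $port\_entered,port\_exited,settled,stage$ are $O(1)$ bits each (as $\Delta=4$), the round counter reaches $O(\min(k,\sqrt n))=O(k)$ and so needs $O(\log k)$ bits, and IDs and robot distinction cost $O(\log k)$; hence $O(\log k)$ bits total.

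\emph{Main obstacle.} The delicate part is the $O(k)$ time bound for $k<\sqrt n$: I must show that boundary reflections and collisions of moving waves cannot inflate the running time past $O(k)$. This is precisely where the hypothesis $k<\sqrt n$ is indispensable — it forces every line to contain a free node, which simultaneously caps the number of reflections of a wave (one per line) and the number of occupied nodes it must cross. The secondary technical care is making the ``waves merge cleanly'' step rigorous: specifying the travel direction of a merged wave and checking that a merged wave still satisfies the line‑is‑never‑saturated invariant so the inductive settling argument goes through.
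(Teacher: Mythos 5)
Your proposal is correct and follows essentially the same route as the paper's proof: the identical two-regime split, the same reuse of Theorem~\ref{theorem:grid-disperse-n} together with the observation that $k\ge\sqrt{n}$ gives $\log k\ge\tfrac{1}{2}\log n$ for the memory bound, and for $k<\sqrt{n}$ the same straight-line-with-one-reflection argument in which your ``line-is-never-saturated'' invariant is exactly the paper's observation that a line of $\sqrt{n}$ nodes can contain at most $k-1<\sqrt{n}$ settled robots, yielding the $O(k)$ time and $O(\log k)$ memory bounds. The only cosmetic difference is that you phrase the $k<\sqrt{n}$ case in terms of merging waves while the paper tracks an individual robot $r_i$; the accounting is the same.
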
 
\onlyLong{
\begin{proof}
For $\sqrt{n}\leq k<\Omega(n)$, the overall correctness and time bounds immediately follow from  Theorem \ref{theorem:grid-disperse-n}. For memory bound $O(\log k) =O(\log n)$ when $\sqrt{n}\leq k$, since $\log \sqrt{n}=\frac{1}{2}\log n$.  Therefore, $O(\log k)$ bits at each robot is enough.

For $k<\sqrt{n}$, when a robot $r_i$ moves from $v$ to $w$ in a row (or column) $L$ in direction $L_{\overrightarrow{vw}}$ in round 1 of $Grid\_Disperse(k)$, it is easy to proof similar to Lemma \ref{lemma:stage1} that $r_i$ moves to the nodes of $L$ in direction $L_{\overrightarrow{vw}}$ in each subsequent round. If $r_i$ settles while reaching a boundary node, we are done. If not $r_i$ returns in the opposite direction $L_{\overrightarrow{wv}}$ of $L$ starting from the boundary node. Since there are $k< \sqrt{n}$ robots and $L$ has $\sqrt{n}$ nodes, $r_i$ must settle after visiting at most $k-1<\sqrt{n}$ other nodes in $L$. 
For the time bound, $r_i$ can visit $k$ nodes of $L$ in at most $2k-1$ rounds if starting from the non-boundary node in round 1 ($k-1$ rounds to reach a boundary and $k$ rounds to reach back to a free node in the opposite direction). Starting from a boundary node, $r_i$ visits all those nodes of $L$ in $k+4$ rounds (two rounds each to go to boundary nodes and come back and then to $k$ nodes of $L$ in $k$ rounds). Regarding memory, variables $port\_entered$, $port\_exited$,  $settled$, and $stage$ take $O(1)$ bits ($\Delta=4$ for grids), and $round$ takes $O(\log k)$ bits. Moreover, two or more robots at a node can be differentiated using $O(\log k)$ bits. Therefore, in total $O(\log k)$ bits at each robot is enough for {\dis}.  

The theorem follows combining the time and memory bounds of $Grid\_Disperse(k)$ for $\sqrt{n}\leq k<\Omega(n)$ and $k<\sqrt{n}$. 
\end{proof}
}

\noindent{\bf Proof of Theorem~\ref{theorem:1}:} Theorems \ref{theorem:grid-disperse-n} and \ref{theorem:grid-disperse-k} together prove Theorem 
\ref{theorem:1} for any $k\leq n$. \qed

\onlyLong{
\subsection{Algorithm for  Rectangular Grid Graphs, $k\leq n$}
\label{subsection:rectangular-grid-disperse-k}
Algorithm $Grid\_Disperse(k),k\leq n$, can be easily extended to a $n (=x\times y)$-node rectangular grid with either $x>\sqrt{n}$ or $y>\sqrt{n}$. 
Suppose the values of $x$ and $y$ are known to robots.
For $Grid\_Disperse(k), k=n$,
Stages 1--3 and 5 can be executed without any change. Stage 4 can be executed in two passes. In the first pass, $x_1=\frac{n}{\max(x,y)}$ largest ID robots can be left at each node on a row/column $L$. If $L$ is of length $y'=\max(x,y)$, then there will be exactly $x'=\min(x,y)$ robots on each node and Stage 4 finishes in one pass. If not, in the second pass, the remaining robots can traverse $L$ in the opposite direction leaving $x_2=\frac{n}{\min(x,y)}-x_1$ additional robots at each node. This way each node in $L$ has exactly $y'=x_1+x_2=x_1+\frac{n}{\min(x,y)}-x_1$ robots on it and Stage 4 finishes after the second pass. 

For $\max(x,y)\leq k<n$, $Grid\_Disperse(k),k=n,$ can be modified as follows. Stages 1-3 and 5 require no changes. In Stage 4, in the first pass,
$x_1=\lfloor\frac{k}{\max(x,y)}\rfloor$ robots can be left on each node in $L$. 
If $L$ is of length $\max(x,y)$ 
and $(k\mod \max(x,y))=0$, then there will be exactly $\frac{k}{\max(x,y)}$ robots on the nodes visited in the first pass. Stage 4 then finishes. 
If the first pass visits all $\max(x,y)$ nodes of $L$ and still some robots left (that means $(k\mod \max(x,y))\neq 0$), then in the second pass, the remaining $(k-x_1\cdot \max(x,y))< \max(x,y)$ robots visit $L$ in the opposite direction leaving 1 robot in each node of $L$ visited in this pass. 
If $L$ is of length $\min(x,y)$, then in the second pass, $x_2=\max(x,y)-x_1$ additional robots can be left at each node visited. This way, there will be between $\max(x,y)$ and  $x_1(=\lfloor\frac{k}{\max(x,y)}\rfloor)$ robots (inclusive) on each node in $L$. 

For $k<\max(x,y)$, each robot $r_i$ moves as in $Grid\_Disperse(k)$ for $k<\sqrt{n}$ (Section \ref{subsection:grid-disperse-k}). If $r_i$ cannot settle after visiting a row/column $L$ two times, it starts visiting nodes on a row/column $L'$ that is perpendicular to $L$. Robot $r_i$ visits the nodes on $L'$ as in Section \ref{subsection:grid-disperse-k} until it is settled. 

\begin{theorem}\label{thm:rectangular-grid}
Algorithm $Grid\_Disperse(k)$ solves {\dis} correctly for $k \leq n$ robots in a rectangular grid graph $G$ with $n=x\times y$ nodes in $O(\min(k,\max(x,y)))$ rounds with $\Theta(\log k)$ bits at each robot.
The runtime is optimal when $k=\Omega(n)$.
\end{theorem}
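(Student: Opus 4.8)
The plan is to reduce everything to the square-grid analysis of Theorems~\ref{theorem:grid-disperse-n} and~\ref{theorem:grid-disperse-k} and to argue the three regimes $k=\Omega(n)$, $\max(x,y)\le k<\Omega(n)$, and $k<\max(x,y)$ in turn. First I would observe that Stages~1--3 and Stage~5 are executed verbatim in every regime, so their correctness follows from Lemmas~\ref{lemma:stage1}--\ref{lemma:stage5} once one checks that the port-ordering and boundary-traversal arguments there used only that (i) each row and column of the grid is a path and (ii) a robot can tell whether its current node is a boundary node — both still true on an $x\times y$ grid. The only quantitative change is that a row/column now has length $x$ or $y$ rather than $\sqrt n$, so each of those stages costs $O(\max(x,y))$ rounds. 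Hence the one part that genuinely needs new work is the two-pass Stage~4, plus, in the small-$k$ regime, the perpendicular-line fallback.

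\textbf{Regime $k=\Omega(n)$.} After Stage~3 all $n$ robots sit at one boundary corner $a$, which is the common endpoint of two boundary lines of lengths $x$ and $y$. I would analyze Stage~4 as follows: the minimum-ID robot walks along a boundary line $L$ incident to $a$ using the ``largest-ID robots stay, the rest continue'' rule, leaving $x_1=n/\max(x,y)=\min(x,y)$ robots at each visited node. If $|L|=\max(x,y)$ then $x_1|L|=n$ and a single pass distributes all robots with exactly $\min(x,y)$ per node. If $|L|=\min(x,y)$, one pass places only $\min(x,y)^2<n$ robots, so the surviving robots retrace $L$ in the opposite direction (leaving the endpoint through the port by which it was entered, as in Stage~5) and deposit a further $x_2=n/\min(x,y)-x_1=\max(x,y)-\min(x,y)$ robots per node; after the second pass each node of $L$ holds $x_1+x_2=\max(x,y)$ robots and the total is again $n$. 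In both cases every node of $L$ ends with a number of robots equal to the length of the line perpendicular to $L$, so Stage~5 — which is exactly the distribution analyzed in Lemma~\ref{lemma:stage5} applied to that perpendicular line — settles precisely one robot per grid node. Stage~4 costs $O(\max(x,y))$ rounds (two traversals of a line of length $\le\max(x,y)$, with the constant-bounded per-step detour of Stage~2). Summing over all stages gives $O(\max(x,y))=O(\min(k,\max(x,y)))$ rounds.

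\textbf{Regimes $\max(x,y)\le k<\Omega(n)$ and $k<\max(x,y)$.} For $\max(x,y)\le k<\Omega(n)$ the only modification is that Stage~4's first pass leaves $x_1=\lfloor k/\max(x,y)\rfloor$ robots per node; if after traversing a long line $L$ some robots remain (i.e.\ $k\bmod\max(x,y)\neq 0$) they retrace $L$ leaving one extra robot per node, and if $|L|=\min(x,y)$ a second pass adds $x_2=\max(x,y)-x_1$ robots per node. I would check that in every sub-case each node of $L$ ends with at most the length of the perpendicular line, so Stage~5 leaves at most one robot per node, and since exactly $k$ robots are present and the path-structure argument of Lemma~\ref{lemma:stage1} guarantees no free node is skipped along a traversed line, all $k$ robots settle on distinct nodes; round and memory bounds are unchanged. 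For $k<\max(x,y)$ each robot behaves as in the $k<\sqrt n$ case of Theorem~\ref{theorem:grid-disperse-k}: a lone robot settles in round~1, otherwise all but the largest-ID robot leave and walk their row/column $L$, reversing at the boundary via the stored $port\_entered/port\_exited$ values, and settle at the first free node. By the line-traversal argument this robot stays on $L$, and, re-using the counting in Theorem~\ref{theorem:grid-disperse-k}, it finds a free node within $O(k)$ moves along $L$ provided $|L|\ge k$; if $|L|<k$ it switches after two traversals of $L$ to the perpendicular line $L'$ through its current node, and since one of the two lines through any node has length $\max(x,y)>k\ge$ (number of robots) a free node exists and is reached within $O(k)$ further moves. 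Thus every robot settles in $O(k)=O(\min(k,\max(x,y)))$ rounds.

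\textbf{Memory, optimality, and the main obstacle.} For the memory bound: $port\_entered,port\_exited,settled,stage$ are $O(1)$ bits since $\Delta=4$; the $round$ counter reaches $O(\max(x,y))$ in the first two regimes and $O(k)$ in the third, hence $O(\log k)$ bits (using $\max(x,y)\le k$ in the second regime and $\max(x,y)\le n$, $k=\Omega(n)$ in the first); and $O(\log k)$ bits suffice to distinguish co-located robots. The matching $\Omega(\log k)$ lower bound is obtained by adapting the single-node-start argument of Augustine and Moses Jr.~\cite{Augustine2018}: $k$ co-located robots running a common deterministic algorithm need $\Omega(\log k)$ bits to break symmetry. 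Hence $\Theta(\log k)$ bits. For optimality when $k=\Omega(n)$: the diameter of an $x\times y$ grid is $(x-1)+(y-1)=\Theta(\max(x,y))$, and $k=\Omega(n)$ forces $k=\Omega(\max(x,y))$ so $\min(k,\max(x,y))=\Theta(\max(x,y))=\Theta(D)$, matching the universal $\Omega(D)$ lower bound; the $\Theta(\log k)=\Theta(\log n)$ memory matches $\Omega(\log n)$. The step I expect to be the main obstacle is making the two-pass Stage~4 argument airtight for non-square grids — verifying that each robot can consistently compute the ``leave the $x_1$ (resp.\ $\lfloor k/\max(x,y)\rfloor$) largest-ID robots'' rule from locally stored data, that the end-of-line direction reversal between the two passes is detected correctly, and that the arithmetic $x_1\cdot|L|$ (one pass) or $(x_1+x_2)\cdot|L|$ (two passes) exactly partitions $n$ (resp.\ accounts for $k$) with no rounding slack on the perpendicular side, so that Stage~5 is left with exactly one robot per node.
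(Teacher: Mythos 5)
Your proposal is correct and follows essentially the same route as the paper's own (much terser) proof: reduce Stages 1--3 and 5 to the square-grid lemmas, analyze the one- or two-pass Stage 4 with $x_1=n/\max(x,y)$ and $x_2=n/\min(x,y)-x_1$ (resp.\ the floor-based counts for $\max(x,y)\le k<\Omega(n)$), use the perpendicular-line fallback for $k<\max(x,y)$, and derive optimality from $D=\Theta(\max(x,y))$. You supply more of the Stage-4 arithmetic and the memory lower bound than the paper does, but the decomposition into the three regimes and the supporting arguments are the same.
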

\begin{proof}
The correctness bound is immediate extending the correctness proof of Theorem \ref{theorem:grid-disperse-k}. The time bound of $Grid\_Disperse(n)$, similarly as in Theorem \ref{theorem:grid-disperse-n}, would be $O(\max(x, y))$.
For $Grid\_Disperse(k)$, $\max(x,y)\leq k<\Omega(n)$, the time  bound would be as in $Grid\_Disperse(n)$, which is $O(\max(x, y))$.  For $Grid\_Disperse(k)$ with $k< \max(x,y)$, the time would be $O(k)$ as in Theorem \ref{theorem:grid-disperse-k}. 
Therefore, the time bound for any $k\leq n$ is $O(\min(k,\max(x,y)))$ rounds, which is optimal when $k=\Omega(n)$ since there is a time lower bound of $\Omega(D)$ in any graph, and for a rectangular grid of $n=x\times y$ nodes, $D=\Omega(\max(x,y))$. 
For all cases of $k\leq n$, the memory bound would be $O(\log k)$ bits as in Theorems \ref{theorem:grid-disperse-n} and \ref{theorem:grid-disperse-k}, which is clearly optimal.  
\end{proof}
}

\section{Concluding Remarks}
\label{section:conclusion}

We have presented two results for solving {\dis} of $k\leq n$ robots on $n$-node graphs. The first result is for arbitrary graphs and the second result is for grid graphs. 
Our result on arbitrary graphs exponentially improves the $O(mk)$ runtime of the best previously known algorithm \cite{Kshemkalyani} to $O(\min(m,k\Delta)\cdot \log k)$. 
Our result on grid graphs provides the first simultaneously memory and time optimal solution for {\dis} for $k=\Omega(n)$. Moreover, our algorithm is the first algorithm for solving {\dis} in grid graphs.

For future work, it will be interesting to solve {\dis} on arbitrary graphs with time $O(k)$ or improve the existing time lower bound of $\Omega(k)$ to $\Omega(\min(m,k\Delta))$. Another interesting direction is to remove the $\log k$ factor from the time bound in Theorem \ref{theorem:0}. 
Furthermore, it will be interesting to achieve Theorem \ref{theorem:0} without each robot knowing parameters $m, \Delta,$ and $k$.  
For grid graphs, it will be interesting to either prove an $\Omega(k)$ time lower bound or provide a $O(\sqrt{k})$ runtime algorithm for $k<\Omega(n)$.  
Another interesting direction will be to extend our algorithms to solve {\dis} in semi-synchronous and asynchronous settings.  

\bibliographystyle{plain}
\bibliography{references}

\end{document}